\def\destin{IJCAI16} 
\newcommand{\content}{Process} 
\newcommand*{\eqns}[1]{Eq.~#1}
\newcommand*{\eqnsref}[1]{\eqns{\eqref{#1}}} 
\newcommand*{\chapt}[1]{Chapter~#1}
\newcommand*{\sectn}[1]{Section~#1}
\newcommand*{\sectnref}[1]{\sectn{\ref{#1}}} 
\newcommand*{\fig}[1]{Figure~#1}
\newcommand*{\figref}[1]{\fig{\ref{#1}}} 
\newcommand*{\tabl}[1]{Table~#1}
\newcommand*{\tablref}[1]{\tabl{\ref{#1}}} 
\newcommand*{\name}[1]{\protect\emph{#1}} 
\newcommand{\anote}[1]{}
\newcommand{\keywords}[1]{\paragraph{Keywords}\addvspace\baselineskip
\noindent\enspace\ignorespaces#1}
\newcommand{\set}[1]{\left\{ #1 \right\}} 
\newcommand{\paren}[1]{\left( #1 \right)} 
\newcommand{\brackt}[1]{\left[ #1 \right]} 
\newcommand{\inv}[1]{\frac{1}{#1}} 
\newcommand{\abs}[1]{\left| #1 \right|} 
\newcommand{\floor}[1]{\left\lfloor {#1} \right\rfloor}
\newcommand{\naturals}{\mathbb{N}} 
\newcommand{\reals}{\mathbb{R}}
\newcommand*{\realsP}{\ensuremath{\mathbb{R}_{+}}}
\newcommand*{\defas}{\ensuremath{\stackrel{\rm \Delta}{=}}}
\DeclareMathOperator*{\Neighb}{N}
\DeclareMathOperator*{\outNeighb}{\Neighb^{+}}
\newcounter{problems}
\newtheorem{theorem}{Theorem}
\newtheorem{defn}{Definition}
\newtheorem{lemma}{Lemma}
\newtheorem{proposition}{Proposition}
\newtheorem{corollary}[theorem]{Corollary}
\newtheorem{example}{Example}
\newcommand{\defined}[1]{\emph{#1}}
\newcommand*{\rmg}{\textrm{RAG}} 
\DeclareMathOperator*{\imp}{act} 
\DeclareMathOperator*{\opin}{opin} 
\DeclareMathOperator*{\got}{got} 
\title{Towards Decision Support in Reciprocation\footnote{An extended abstract is published at~\cite{PolevoydeWeerdtJonker2016}
as ``The Convergence of Reciprocation''.}}
\author[1]{Gleb Polevoy\thanks{g.polevoy@tudelft.nl}}
\author[1]{Mathijs de Weerdt\thanks{M.M.deWeerdt@tudelft.nl}}
\author[1]{Catholijn Jonker\thanks{c.m.jonker@tudelft.nl}}
\affil[1]{Delft University of Technology, Delft, The Netherlands}
\begin{document}


%

\maketitle

\begin{abstract}
People often interact repeatedly:
with relatives, through file sharing, in politics, etc.
Many such interactions are reciprocal: reacting to the actions of
the other.
In order to
facilitate decisions regarding reciprocal interactions,
we analyze the development of reciprocation over time.
To this end, we propose a model for such interactions that is simple enough
to enable formal analysis, but is sufficient to
predict how such interactions will evolve.
Inspired by existing models of international interactions and arguments between spouses,
we suggest a model with two reciprocating attitudes where an agent's
action is a weighted combination of the others' last actions (reacting) and either
i) her innate kindness, or ii) her own last action (inertia).
We analyze a network of repeatedly interacting agents, each having
one of these attitudes, and prove that their actions converge
to specific limits. Convergence means that the interaction stabilizes,
and the limits indicate the behavior after the stabilization.
For two agents, we describe the interaction process
and find the limit values.
For a general connected network, we find these limit values if all
the agents employ the second attitude, and show that the agents' actions then
all become equal. In the other cases, we study
the limit values using simulations.
We discuss how these results predict the development of the interaction and
can be used to help agents decide on their behavior.
\end{abstract}






\keywords{reciprocal interaction, agents, action, repeated reciprocation, fixed, floating, behavior, network, convergence, Perron-Frobenius, convex combination}

\section{Introduction}\label{Sec:introd}

Interaction is central in human behavior, e.g.,
at school, in file sharing, in business cooperation and political struggle.
We aim at facilitating decision support for the interacting parties and
for the outside observers. To this end, we want to predict interaction.

Instead of being economically rational, people tend to adopt other
ways of behavior~\cite{RaghunandanSubramanian2012,Rubinstein97}, not necessarily maximizing some
utility function.
Furthermore, people tend to reciprocate, i.e., react on the past actions
of others~\cite{FalkFischbacher2006,FehrGachter2000,GuthSchmittbergerSchwarze1982,Sobel2005}.
Since reciprocation is ubiquitous, predicting it will allow predicting
many real-life interactions and advising on how to improve them.
Therefore, we need a model for reciprocating agents
that is simple enough for analytical analysis and precise enough to predict
such interactions. Understanding such a model would also help understanding how to
improve personal and public good. This is also important for engineering
computer systems that fit human intuition of reciprocity.

Extant models of (sometimes repeated) reciprocation can be classified
as explaining existence or analyzing consequences.
The following models consider the reasons for existence of reciprocal
tendencies, often incorporating evolutionary arguments.
The classical works of Axelrod~\cite{Axelrod1981,Axelrod1984} considered discrete reciprocity
and showed that it is rational for egoists, so that species evolve
to reciprocate.
Evolutionary explanation appears
also in other places, such as~\cite{GuthYaariWitt1992,SethiSomanathan2001},
or~\cite[Chapter~$6$]{Bicchieri2006}, the latter also explicitly considering the
psychological aspects of norm emergence.
In~\cite{VanSegbroeckPachecoLenaertsSantos2012}, they consider pursuing
fairness as a motivation for reciprocation.
In~\cite{AxelrodHamilton1981} and~\cite{fletcherZwick2006},
they considered engendering reciprocation by both the genetical
kinship theory (helping relatives) and
by the utility from cooperating when the same pair of agents interact
multiple times.
The famous work of Trivers~\cite{Trivers1971} showed that sometimes reciprocity is rational,
in much biological detail, and thus, people can evolve to reciprocate.
Gintis~\cite[\chapt{$11$}]{gintis2000} considered discrete actions,
discussing not only the rationally evolved tit-for-tat, but also reciprocity
with no future interaction in sight, what he calls
\defined{strong reciprocity}. He modeled the development of
strong reciprocity.
Several possible reasons for strong reciprocity, such as a social part
in the utility of the agents, expressing itself in emotions,
were considered in \cite{FehrFischbacherGachter2002}.
Berg et al.~\cite{BergDickhautMcCabe1995} proved that people tend to reciprocate
and considered possible motivations,
such as evolutionary stability.
Reciprocal behavior was axiomatically motivated in~\cite{SegalSobel2007},
assuming agents care not only for the outcomes, but also for strategies,
thereby pushed to reciprocate.

On another research avenue,
Given that reciprocal tendencies exist, the following works analyzed what
ways it makes interactions develop.
Some models analyzed reciprocal interactions by defining and analyzing
a game where the
utility function of rational agents directly depends on showing
reciprocation, such as
in~\cite{CoxFriedmanGjerstad2007,DufwenbergKirchsteiger2004,FalkFischbacher2006,Rabin1993}.
The importance of reward/punishment or of incomplete contracts
for the flourishing of reciprocal individuals in the society was shown in~\cite{FehrGachter2000}.

To summarize, reciprocity is seen as an inborn
quality~\cite{FehrFischbacherGachter2002,Trivers1971}, which
has probably been evolved from rationality of agents, as was shown
by Axelrod~\cite{Axelrod1981}.
As we have already said, understanding how a
reciprocal interaction between agents with various reciprocal inclinations
uncurls with time will help explain and predict the dynamics of
reciprocal interaction, such as arms races and personal relations.
This would also be
in the spirit of the call to consider various repercussions of reciprocity
from~\cite{NowakSigmund2005}. Since no analysis considers non-discrete lengthy interaction,
caused by
inborn reciprocation,
(unlike, say, the discrete one from Axelrod~\cite{Axelrod1981,Axelrod1984}),
we model and study how reciprocity makes interaction evolve with time. 

We represent actions by \emph{weight},
where a bigger value means a more desirable contribution or,
in the interpersonal context, investment in the relationship.%
\ifthenelse{\equal{\destin}{IJCAI16}}{
}{
Agents reciprocate both to the agent they are acting on
and to their whole neighborhood.
}%
We model reciprocity
by two reciprocation attitudes, an action's weight being a convex combination between i) one's own kindness or ii) one's own last action,
and the other's and neighborhood's last actions.
The whole past should be considered, but we assume that the
last actions represent the history enough, to facilitate analysis. %
Defining an action (or how much it changes) or a state by a linear combination of the other side's actions and own actions and qualities
was also used to analyze arms race%
~\cite{Dixon1986,Ward1984} and spouses' interaction~\cite{GottmanSwansonMurray1999} (piecewise linear in this case).
Attitude i) depending on the (fixed) kindness is called \name{fixed}, and ii) depending on one's own last action is called
\name{floating}. Given this model, we study its behavioral repercussions.

There are several reminiscent but different models.
The \name{floating} model resembles opinions that converge to
a consensus~\cite{BlondelHendrickxOlshevskyTsitsiklis2005,Moreau2005,TsitsiklisBertsekasAthans1986,DeGroot1974},
while the \name{fixed} model resembles converging to a general equilibrium
of opinions~\cite{BindelKleinbergOren2011}. Of course, unlike the models
of spreading opinions, we consider different actions on various neighbors,
determined by direct reaction and a reaction to the whole neighborhood.
Still, because of some technical reminiscence to some of our
models, we do use those for one of our proofs.
Another similar model is that of
monotonic concession~\cite{RosenscheinZlotkin1994} and that of
bargaining over dividing a pie~\cite{Rubinstein1982}. The main difference
is that in those models, the agents decide what to do, while in our case,
they follow the reciprocation formula.

%
\begin{example}\label{ex:colleages}
Consider $n$ colleagues~$1, 2, \ldots, n$,
who can help
or
harm each other.
Let the possible actions be: giving bad
work, showing much contempt, showing little contempt,
supporting emotionally a little, supporting emotionally a lot,
advising, and let their respective weight be a point in%
~$[-1, -0.5)$, $[-0.5, -0.2)$, $[-0.2. 0)$, $(0, 0.4)$, $[0.4, 0.7)$, $[0.7, 1]$.
Assume that each person knows what the other did to him last time.
The social climate, meaning what the whole group did, also influences behavior.
However, we may just concentrate on a single pair of even-tempered colleagues
who reciprocate regardless the others.
\end{example}

To understand and predict reciprocal behavior, we look at the limit of time approaching infinity,
since this describes what actions will take place from some time on.
We first consider two agents in \sectnref{Sec:dynam_interact_pair},
assuming their interaction is independent of other agents,
or that the total influence of the others
on the pair is negligible.
This assumption allows for deeper a theoretical
analysis of the interaction than in the general case. 
The values in the limit for two agents will be also implied by a
general convergence result that is presented later, unless both agents are \name{fixed}. We
still present them with the other results for two agents for the completeness of
\sectnref{Sec:dynam_interact_pair}. 
\sectnref{Sec:dynam_interact_interdep} studies interaction of many agents,
where the techniques we used for two agents are not applicable,
and we show exponentially fast convergence.
Exponential convergence means a rapid stabilizing, and it explains acquiring personal behavioral styles,
which is often seen in practice~\cite{RobertsWaltonViechtbauer2006}.
We find the limit
when all the agents act synchronously and at most one has the \name{fixed} reciprocation attitude.
Among other things, we prove that when at most one agent is \name{fixed},
the limits of the actions of all agents 
are the same, explaining formation of organizational
subcultures, known in the literature~\cite{Hofstede1980}.
We also find that only the kindness values of the \name{fixed} agents influence
the limits of the various actions,
thereby explaining that persistence (i.e., being faithful to one's inner inclination) makes interaction go
one's own way, which is reflected in daily life in the
recommendations to reject undesired requests by
firmly repeating the reasons for rejection%
~\cite[\chapt{1}]{BreitmanHatch2000} and~\cite[Chapter~$8$]{Ury2007}.
Other cases are simulated in \sectnref{Sec:dynam_interact_interdep_sim}.
These results describe the interaction process and lay the foundation for further analysis of
interaction.

The major contributions are proving convergence and finding its limits
for at most one \name{fixed} agent or for two agents.
These allow to explain the above mentioned phenomena and predict
reciprocation. The predictions can assist in deciding whether
a given interaction will be profitable, and in engineering
more efficient multi-agent systems,
fitting the reciprocal intuition of the users.

\section{Modeling Reciprocation}\label{Sec:formal_model}

\ifthenelse{\NOT \equal{\destin}{AAMAS16}}{
\subsection{Basic}\label{Sec:formal_model:basic_facts}
}{
}%

Let $N = \set{1, 2, \ldots, n}$ be $n \geq 2$ interacting agents. 
We assume that possible actions are described by an undirected interaction graph
$G = (N, E)$, such that agent $i$ acts on $j$ and vice versa
if and only if $(i, j) \in E$.
Denote the degree of agent $i \in N$ in $G$ by $d(i)$.
This allows for various topologies, including heterogeneous ones,
like those in~\cite{SantosPachecoLenaerts2006}.
To be able to mention directed edges, we shall treat this graph
as a directed one, where for every $(i, j) \in E$, we have $(j, i) \in E$.
Time is modeled by a set of discrete moments
$t \in T \defas \set{0, 1, 2, \ldots}$, defining a time slot whenever
at least one agent acts.
Agent $i$ acts at times $T_i \defas \set{t_{i, 0} = 0, t_{i, 1}, t_{i, 2}, \ldots} \subseteq T$, and 
$\cup_{i \in N} T_i = T$.
We assume that all agents act at $t = 0$, since otherwise we cannot
sometimes consider the last action of another agent, which would force us to
complicate the model and render it even harder for theoretical analysis.
%
When all agents always act at the same times~($T_1 = T_2 = \ldots = T_n = T$), we
say they act \defined{synchronously}.

For the sake of asymptotic analysis,
we assume that each agent gets to act an infinite number of times;
that is, $T_i$ is infinite for every $i \in N$. Any real application will,
of course, realize only a finite part of it, and infinity models
the unboundedness of the process in time.

When $(i, j)$ is in $E$,
we denote the weight of an action by agent $i \in N$ on
another agent $j \in N$ at moment $t$ by $\imp_{i, j}(t) \colon T_i \to \reals$.
We extend $\imp_{i, j}$ to $T$ by assuming that
at $t \in T \setminus T_i$, we have $\imp_{i, j}(t) = 0$.
Since only the weight of an action is relevant, we usually
write ``action'' while referring to its weight.
For example, when interacting by file sharing, sending a valid
piece of a file, nothing, or a piece with a virus has a positive, zero,
or a negative weight, respectively.


For $t \in T$, we define \defined{the last action time
$s_i(t) \colon T \to T_i$ of agent $i$} as the largest $t' \in T_i$ that is at most $t$. 
Since $0 \in T_i$, this is well defined.
The last action of agent $i$ on (another) agent $j$ is given by $x_{i, j}(t) \defas \imp_{i, j}(s_i(t))$.
Thus, we have defined $x_{i, j}(t) \colon T \to \reals$,
and we use mainly this concept $x_{i, j}$ in the paper.
We denote the total received contribution from all the neighbors $\Neighb(i)$ at their last action times not later than $t$ by
$\got_i(t) \colon T \to \reals$; formally,
$\got_i(t) \defas \sum_{j\in \Neighb(i)}{x_{j, i}(t)}$.


We now define two reciprocation attitudes, which define how an agent
reciprocates. We need the following notions.
The kindness of agent $i$ is denoted by $k_i \in \reals$;
w.l.o.g., $k_n \geq \ldots \geq k_2 \geq k_1$ throughout the paper.
Kindness models inherent inclination to help others; in particular,
it determines the first action of an agent, before others have acted.
We model agent $i$'s inclination to mimic a neighboring agent's action and
the actions of the whole neighborhood in $G$ by reciprocation coefficients
$r_i \in \brackt{0, 1}$
and $r'_i \in \brackt{0, 1}$ respectively,
such that $r_i + r'_i \leq 1$. Here, $r_i$ is the fraction of $x_{i, j}(t)$
that is determined by the last action of $j$ upon $i$, and $r_i'$ is the
fraction that is determined by $\frac{1}{\abs{\Neighb(i)}}$th of
the total contribution to $i$ from all the neighbors at the last time.
\ifthenelse{\NOT \equal{\destin}{AAMAS16}}{
\subsection{Reciprocation}\label{Sec:formal_model:recip}
}{
}%
Intuitively,
the \name{fixed} attitude depends on the agent's kindness at every action,
while the \name{floating} one is loose, moving freely in the
reciprocation process, and kindness directly influences such behavior only at $t = 0$. In both cases $x_{i,j}(0) \defas k_i$.
\begin{defn}\label{def:fix_recip}
For the \defined{fixed reciprocation attitude}, agent $i$'s reaction
on the other agent $j$ and on the neighborhood is determined by the agent's kindness weighted
by~$1 - r_i - r'_i$, by the other agent's action weighted by~$r_i$
and by the total action of the neighbors weighted by~$r'_i$ and divided over all the neighbors:
That is, for $t \in T_i$,
$\imp_{i,j}(t) = x_{i,j}(t) \defas$
\begin{eqnarray*}
(1 - r_i - r'_i) \cdot k_i + r_i \cdot x_{j, i}(t-1)
+ r'_i \cdot \frac{\got_i(t - 1)}{\abs{\Neighb(i)}}.
\end{eqnarray*}
\end{defn}

\begin{defn}\label{def:float_recip}
In the \defined{floating reciprocation attitude}, agent $i$'s action
is a weighted average of her own last action, of that of
the other agent $j$ and of the total action of the neighbors divided over all the neighbors:
To be precise, for $t \in T_i$,
$\imp_{i,j}(t) = x_{i, j}(t) \defas$
\begin{eqnarray*}
(1 - r_i - r'_i) \cdot x_{i, j}(t-1) + r_i \cdot x_{j, i}(t-1) 
+ r'_i \cdot \frac{\got_i(t - 1)}{\abs{\Neighb(i)}}.
\end{eqnarray*}
\end{defn}

%
The relations are (usually inhomogeneous) linear recurrences with
constant coefficients. We could express the dependence
$x_{i, j}(t)$ only on $x_{i, j}(t')$ with $t' < t$, but then the
coefficients would not be constant, besides the case of two \name{fixed} agents.
We are not aware of a method to use the
general recurrence theory to improve our results.

\ifthenelse{\NOT \equal{\destin}{AAMAS16}}{
\subsection{Clarifications}
}{
}%

Compared to the other models, our model takes reciprocal actions
as given and looks at the process, while other models either
consider how reciprocation originates, such as the evolutionary
model of Axelrod~\cite{Axelrod1981},
or take it as given and consider specific games, such as in
\cite{CoxFriedmanGjerstad2007,DufwenbergKirchsteiger2004,FalkFischbacher2006,Rabin1993}.

In~Example~\ref{ex:colleages}, let (just here) $n = 3$ and the reciprocation coefficients
be~$r_1 = r_2 = 0.5, r_1' = r_2' = 0.3, r_3 = 0.8, r_3' = 0.1$.
Assume the kindness to be~$k_1 = 0, k_2 = 0.5$ and $k_3 = 1$.
Since this is a small group, all the colleagues may interact, so the graph
is a clique\footnote{A clique is a fully connected graph.}. At $t = 0$, every agent's action on every other agent
is equal to her kindness value, so agent $1$ does nothing, agent $2$
supports emotionally a lot, and $3$ provides advice.
If all agents act synchronously, meaning $T_1 = T_2 = T_3 = \set{0, 1, \ldots}$ , and all
get carried away by the process, meaning that they forget the kindness
in the sense of employing \name{floating} reciprocation, then, at $t = 1$ they act
as follows:
$x_{1, 2}(1)
= (1 - 0.5 - 0.3) \cdot 0 + 0.5 \cdot 0.5 + 0.3 \cdot \frac{0.5 + 1}{2} = 0.475$
(supports emotionally a lot),
$x_{1, 3}(1)
= (1 - 0.5 - 0.3) \cdot 0 + 0.5 \cdot 1 + 0.3 \cdot \frac{0.5 + 1}{2} = 0.975$
(provides advice),
$x_{2, 1}(1)
= (1 - 0.5 - 0.3) \cdot 0.5 + 0.5 \cdot 0 + 0.3 \cdot \frac{0 + 1}{2} = 0.25$
(supports emotionally a little),
and so on.

Consider modeling tit for tat~\cite{Axelrod1984}:
\begin{example}
In our model, the tit for tat with two options, - cooperate or
defect, is easily modeled with $r_i = 1$, $k_i = 1$, meaning
that the original action is cooperating ($1$) and the next action is
the current action of the other player.
Since we consider a mechanism,
rather than a game, the agents will always cooperate. If one agent begins
with cooperation ($k_1 = 1$) and the other one with defection ($k_2 = 0$),
acting synchronously,
then they will alternate.
\end{example}

\ifthenelse{\NOT\equal{\destin}{AAMAS16}}{
The notation is summarized in \tablref{tbl:notation}.

\begin{table}[ht]
\begin{tabular}{|p{0.17\textwidth}|p{0.8\textwidth}|}
\hline
Term: & Meaning:\\
\hline
$\imp_{i, j}(t) \colon T \to \reals$ & The action of $i$ on another agent $j$ at time $t$.\\
\hline
$T_i$ & The time moments when agent $i$ acts.\\
\hline
Synchronous & $T_1 = T_2 = \ldots = T_n$.\\
\hline
$s_i(t) \colon T \to T_i$ & $\max\set{t' \in T_i | t' \leq t}$.\\
\hline
$x_{i, j}(t) \colon T \to \reals$ & $\imp_{i, j}(s_i(t))$.\\
\hline
$\got_i(t) \colon T \to \reals$ & $\sum_{j\in \Neighb(i)}{x_{j, i}(t)}$.\\
\hline
$k_i$ & The kindness of agent $i$.\\
\hline
$r_i, r'_i \in [0, 1], r_i + r'_i \leq 1$ & The reciprocation coefficients of agent $i$.\\
\hline
Agent $i$ has the \name{fixed} reciprocation attitude, $j$ is another agent
& At moment $t \in T_i$,
\begin{eqnarray*}
x_{i,j}(t) \defas
\begin{cases}
(1 - r_i - r'_i) \cdot k_i + r_i \cdot x_{j, i}(t-1)\\
+ r'_i \cdot \frac{\got_i(t - 1)}{\abs{\Neighb(i)}} & t > t_{i, 0}\\
k_i & t = t_{i, 0} = 0.
\end{cases}
\end{eqnarray*}\\
\hline
Agent $i$ has the \name{floating} reciprocation attitude, $j$ is another agent
& At moment $t \in T_i$,
\begin{eqnarray*}
x_{i, j}(t) \defas
\begin{cases}
(1 - r_i - r'_i) \cdot x_{i, j}(t-1)\\ + r_i \cdot x_{j, i}(t-1) 
+ r'_i \cdot \frac{\got_i(t - 1)}{\abs{\Neighb(i)}} 
 & t > t_{i, 0}\\
k_i & t = t_{i, 0} = 0.
\end{cases}
\end{eqnarray*}\\
\hline
\end{tabular}
\caption{The notation used throughout the paper.}
\label{tbl:notation}
\end{table}
}{
}%

\section{Pairwise Interaction}\label{Sec:dynam_interact_pair}%
\ifthenelse{\equal{\content}{Process} \OR \equal{\content}{All}}{

We now consider an interaction of two agents, $1$ and $2$,
since this assumption allows proving much more than we will be able
to in the general case.
The model reduces to a pairwise interaction,
when
$r'_i = 0$ or when there are no neighbors besides
the other agent in the considered pair.
We assume both, w.l.o.g.
\ifthenelse{\NOT \equal{\destin}{IJCAI16}}{
When $T_1$ contains precisely
all the even numbered slots and $T_2$ zero and all the odd ones, we
say they are \defined{alternating}.
}{
}%
Since agent $1$ can only act on agent $2$ and vice versa, we write
$\imp_i(t)$ for $\imp_{i, j}(t)$,
$x(t)$ for $x_{1, 2}(t)$ and $y(t)$ for $x_{2, 1}(t)$.
%
}{
}%

We analyze
the case of both agents being
\name{fixed}, then the case of the \name{floating},
and then the case where one is \name{fixed}
and the other one is \name{floating}.
To formally discuss the actions after the interaction
has stabilized, we consider the limits (if exist)\footnote{Agent $i$ acts at the times in $T_i = \set{t_{i, 0} = 0, t_{i, 1}, t_{i, 2}, \ldots}$.}
$\lim_{p \to \infty}{\imp_1(t_{1, p})}$,
and $\lim_{t \to \infty}{x(t)}$, for agent $1$,
and $\lim_{p \to \infty}{\imp_2(t_{2, p})}$
and $\lim_{t \to \infty}{y(t)}$ for agent $2$.
Since the sequence $\set{x(t)}$ is $\set{\imp_1(t_{1, p})}$ with finite repetitions,
the limit $\lim_{t \to \infty}{x(t)}$ exists if and only if $\lim_{p \to \infty}{\imp_1(t_{1, p})}$ does. If they exist, they are equal;
the same holds for $\lim_{t \to \infty}{y(t)}$ and $\lim_{p \to \infty}{\imp_2(t_{2, p})}$.
Denote $L_x \defas \lim_{t \to \infty}{x(t)}$ and $L_y \defas \lim_{t \to \infty}{y(t)}$.

\subsection{\name{Fixed} Reciprocation}

Here we prove that both action sequences converge.
\begin{theorem}\label{The:fixed_recip}
If the reciprocation coefficients are not both $1$, which means $r_1 r_2 < 1$, then we have, for $i \in N$:
$\lim_{p \to \infty} {\imp_i(t_{i, p})} = \frac{(1 - r_i) k_i + r_i (1 - r_j) k_j}{1 - r_i r_j}$.
\end{theorem}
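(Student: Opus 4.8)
The plan is to reduce the claim to a linear fixed-point computation together with a contraction-type argument for the convergence, taking care that the two agents need not act at the same times.

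\emph{Step 1: the interleaved recursion.} In the pairwise regime we have $r'_i=0$, so Definition~\ref{def:fix_recip} specializes to $x(t)=(1-r_1)k_1+r_1\,y(t-1)$ for $t\in T_1$ and $y(t)=(1-r_2)k_2+r_2\,x(t-1)$ for $t\in T_2$, with $x(0)=k_1$, $y(0)=k_2$, and with $x$ (resp.\ $y$) constant between consecutive times of $T_1$ (resp.\ $T_2$). Write $a_p\defas\imp_1(t_{1,p})$ and $b_q\defas\imp_2(t_{2,q})$. For $p\ge 1$, let $\phi(p)$ be the index with $t_{2,\phi(p)}=s_2(t_{1,p}-1)$, i.e.\ the last time agent~$2$ acted before $t_{1,p}$ (well defined since $0\in T_2$); then $a_p=(1-r_1)k_1+r_1\,b_{\phi(p)}$. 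Symmetrically, $b_q=(1-r_2)k_2+r_2\,a_{\psi(q)}$ for a corresponding index $\psi(q)$, $q\ge 1$. Since $T_1,T_2$ are infinite we have $t_{1,p}\to\infty$ and $t_{2,q}\to\infty$, hence $\phi(p)\to\infty$ and $\psi(q)\to\infty$. Also, each $a_p$ is a convex combination of $k_1$ and a value $b_{\phi(p)}$ occurring at a strictly earlier time, and likewise for $b_q$, so by induction on the acting time every $a_p$ and $b_q$ lies in $[\min(k_1,k_2),\max(k_1,k_2)]$; in particular both sequences are bounded.

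\emph{Step 2: the candidate limits.} Any limits $L_x,L_y$ must satisfy $L_x=(1-r_1)k_1+r_1 L_y$ and $L_y=(1-r_2)k_2+r_2 L_x$. This $2\times2$ linear system has determinant $1-r_1r_2$, which is nonzero exactly because $r_1,r_2\in[0,1]$ are not both $1$; solving it gives $L_x=\frac{(1-r_1)k_1+r_1(1-r_2)k_2}{1-r_1r_2}$ and the symmetric value for $L_y$, i.e.\ precisely the formula in the statement for $i=1$ and $i=2$ (using $r_1r_2=r_2r_1$). Set $\alpha_p\defas a_p-L_x$ and $\beta_q\defas b_q-L_y$; subtracting the fixed-point identities from the recursions of Step~1 yields $\alpha_p=r_1\,\beta_{\phi(p)}$ for $p\ge1$ and $\beta_q=r_2\,\alpha_{\psi(q)}$ for $q\ge1$, and these sequences are bounded by Step~1. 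Because $\phi(p)\to\infty$ we get $\limsup_p\abs{\alpha_p}\le r_1\,\limsup_q\abs{\beta_q}$, and similarly $\limsup_q\abs{\beta_q}\le r_2\,\limsup_p\abs{\alpha_p}$; combining, $\limsup_p\abs{\alpha_p}\le r_1r_2\,\limsup_p\abs{\alpha_p}$, and since this quantity is finite and $r_1r_2<1$ it must be $0$. Hence $a_p\to L_x$ and $b_q\to L_y$, which is exactly the asserted $\lim_{p\to\infty}\imp_i(t_{i,p})=\frac{(1-r_i)k_i+r_i(1-r_j)k_j}{1-r_ir_j}$; as observed just before the theorem, $\{x(t)\}$ is $\{a_p\}$ with finite repetitions, so $L_x=\lim_{t\to\infty}x(t)$ exists as well, and similarly for $L_y$.

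\emph{Main obstacle.} The only delicate point is that the agents need not be synchronous, so the index maps $\phi,\psi$ are not monotone and need not be comparable to $p,q$; one therefore cannot simply iterate a single matrix. If the agents are synchronous the argument shortens to: the vector $v(t)=(x(t),y(t))^{\top}$ satisfies $v(t)=Mv(t-1)+c$ with $M$ the matrix with rows $(0,r_1)$ and $(r_2,0)$, whose spectral radius is $\sqrt{r_1r_2}<1$, so $M^t\to 0$ and $v(t)$ converges to the unique solution of $v=Mv+c$. For general $T_i$ the $\limsup$ estimate above is the clean replacement, using only that $\phi(p),\psi(q)\to\infty$ and that $r_1r_2<1$.
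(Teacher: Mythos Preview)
Your proof is correct and takes a genuinely different route from the paper's.

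The paper first proves a subsequence lemma (Lemma~\ref{lemma:fixed_recip:subseq_synch_case}) showing that the asynchronous action sequences are subsequences of the synchronous ones, thereby reducing to the synchronous case. It then proves a structural lemma (Lemma~\ref{lemma:fixed_recip:per_seq_behav}) that, synchronously, the even- and odd-indexed subsequences of $x(t)$ and $y(t)$ are monotone; bounded monotone convergence gives two limits per agent, and a separate telescoping computation $x(t+1)-x(t)=(r_1r_2)^{\lfloor t/2\rfloor}(\cdot)$ shows those two limits coincide. Only then is the linear system solved.

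You instead work directly in the asynchronous setting: write the error sequences $\alpha_p=a_p-L_x$, $\beta_q=b_q-L_y$, observe $\alpha_p=r_1\beta_{\phi(p)}$ and $\beta_q=r_2\alpha_{\psi(q)}$, and use that $\phi(p),\psi(q)\to\infty$ to get $\limsup|\alpha_p|\le r_1r_2\limsup|\alpha_p|$, forcing convergence since the $\limsup$ is finite and $r_1r_2<1$. This is cleaner and avoids both auxiliary lemmas entirely. (A minor remark: your index maps $\phi,\psi$ are in fact non-decreasing, though this is irrelevant to your argument, which only needs $\phi(p),\psi(q)\to\infty$.) What the paper's longer route buys is the qualitative alternating/monotone description of the process in Lemma~\ref{lemma:fixed_recip:per_seq_behav}, which it uses for later discussion; your argument sacrifices that structural information for a shorter path to the limit.
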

The assumption that not both reciprocation coefficients are $1$ and
the similar assumptions in the following theorems (such as $1 > r_i > 0$)
mean that the agent neither ignores the other's action, nor does it
copy the other's action. These are to be expected in real life.
\ifthenelse{\NOT \equal{\destin}{IJCAI16}}{
The limits of these actions are shown
in Figures~\ref{fig:fixed_fixed_limit_agent1}
and \ref{fig:fixed_fixed_limit_agent2}.
\begin{figure}
\centering%

\begin{minipage}{.47\textwidth}
\includegraphics[width=0.98\textwidth]{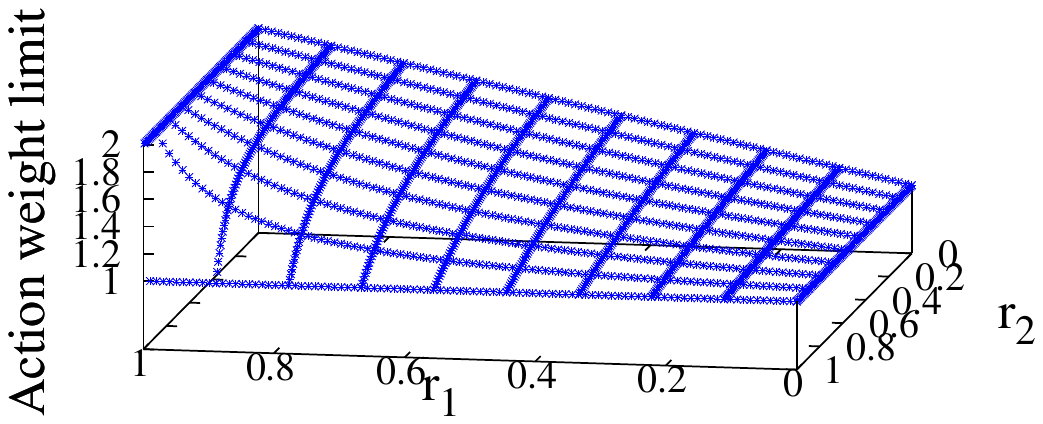}

\protect\caption{The limit of the actions of agent $1$ as a function of the reciprocity coefficients,
for a \name{Fixed} - \name{fixed} reciprocation, $k_1 = 1, k_2 = 2$.
Given $r_1$, agent $2$ receives most when $r_2 = 0$.}
\label{fig:fixed_fixed_limit_agent1}%

\end{minipage}%
\hspace{0.04\textwidth}%
\begin{minipage}{.47\textwidth}
\centering
\includegraphics[width=0.98\textwidth]{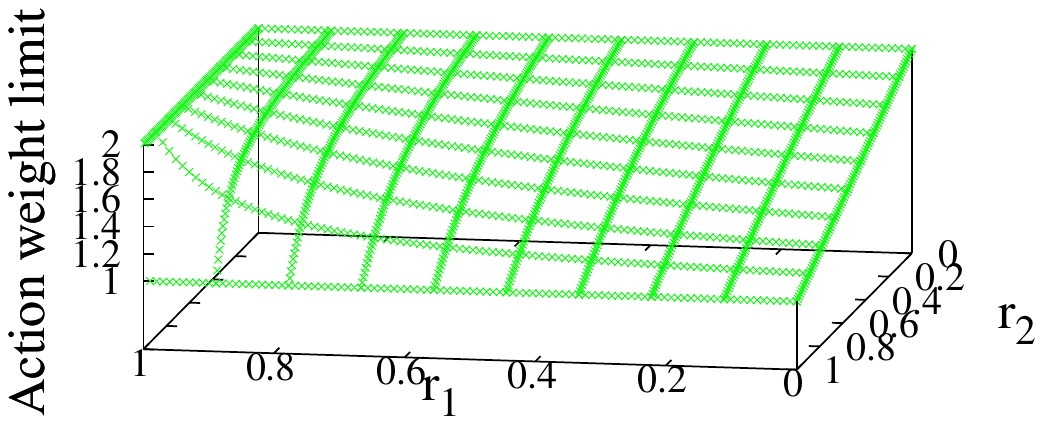}

\protect\caption{The limit of the actions of agent $2$ as a function of the reciprocity coefficients,
for a \name{Fixed} - \name{fixed} reciprocation, $k_1 = 1, k_2 = 2$.
Given $r_2$, agent $1$ receives most when $r_1 = 1$.}
\label{fig:fixed_fixed_limit_agent2}%

\end{minipage}

\end{figure}
}{
}%
In Example~\ref{ex:colleages}, if agents $1$ and $2$ employ \name{fixed}
reciprocation, $r_1 = r_2 = 0.5, r_1' = r_2' = 0.0$ and
$k_1 = 0, k_2 = 0.5$, then we obtain
$L_x = \frac{0.5 \cdot (1 - 0.5) 0.5}{1 - 0.5 \cdot 0.5}
= 1/6$ and
$L_y = \frac{(1 - 0.5) \cdot 0.5}{1 - 0.5 \cdot 0.5}
= 1/3$.

In order to prove this theorem, we first show that it is sufficient to analyze the
synchronous case, i.e.,~$T_1 = T_2 = T$.
\begin{lemma}\label{lemma:fixed_recip:subseq_synch_case}
Consider a pair of interacting agents.
Denote the action sequences in case both agents acted at the same
time, (i.e.,~$T_1 = T_2 = T$), by $\set{x'(t)}_{t \in T}$ and $\set{y'(t)}_{t \in T}$, respectively.
Then the action sequences\footnote{Agent $i$ acts at the times in $T_i = \set{t_{i, 0} = 0, t_{i, 1}, t_{i, 2}, \ldots}$.} $\set{\imp_1(t_{1,p})}_{p \in \naturals}, \set{\imp_2(t_{2,p})}_{p \in \naturals}$ are subsequences
of $\set{x'(t)}_{t \in T}$ and $\set{y'(t)}_{t \in T}$, respectively.
\end{lemma}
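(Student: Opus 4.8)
The plan is to organize everything around the \emph{depth} of a fixed action, meaning the number of times one must unfold Definition~\ref{def:fix_recip} before hitting an initial action $x_{i,j}(0)=k_i$. For $t\in T_i$ with $t>0$, one unfolding step rewrites $\imp_i(t)$ as $(1-r_i)k_i+r_i\cdot(\text{the other agent's action, taken at time }s_j(t-1)\le t-1<t)$, so the time argument strictly decreases while remaining a nonnegative integer; hence the unfolding terminates and every action in the (asynchronous) run has a well-defined finite depth. The point special to the \name{fixed} attitude is that in any run the depth is all that matters: I would first prove, by strong induction on $m$, that the value of a depth-$m$ fixed action of agent~$1$ depends only on $m$ and the parameters $r_1,r_2,k_1,k_2$ — it is $k_1$ at depth $0$, and $(1-r_1)k_1+r_1\cdot(\text{the depth-}(m-1)\text{ value of agent }2)$ otherwise — and symmetrically for agent~$2$. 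Evaluating this in the synchronous run, where $s_j(t-1)=t-1$ and therefore $\imp_i(t)$ has depth exactly $t$, identifies the two functions as $x'(\cdot)$ and $y'(\cdot)$. Consequently, in our run $\imp_1(t_{1,p})=x'(d_1(p))$ and $\imp_2(t_{2,p})=y'(d_2(p))$, where $d_i(p)$ is the depth of agent $i$'s $p$-th action.

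It then remains to show $d_1,d_2$ are nondecreasing and tend to $\infty$. For this I would pass to $D_i(t)$, the depth of $x(t)$ resp.\ $y(t)$ (so $D_i$ is $d_i$ extended piecewise-constantly to $T$), and note $D_1(t)=1+D_2(t-1)$ when $t\in T_1,\ t>0$ and $D_1(t)=D_1(t-1)$ otherwise, and symmetrically for $D_2$. A short \emph{simultaneous} induction on $t$ gives that $D_1$ and $D_2$ are nondecreasing on $T$; this must be done for both at once, since the recurrence for $D_1$ refers to $D_2$ at an earlier — possibly much earlier — moment, and conversely. Unboundedness then follows by induction on a target value $M$: once $D_1$ reaches $M$ it stays $\ge M$ by monotonicity, so picking (using that $T_2$ is infinite) some $t_2\in T_2$ past that point gives $D_2(t_2)=1+D_1(t_2-1)\ge M+1$, and a later $t_1\in T_1$ gives $D_1(t_1)\ge M+2$; iterating, $D_1,D_2$ are unbounded, hence $d_1(p),d_2(p)\to\infty$. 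Combining with the previous paragraph, $\{\imp_1(t_{1,p})\}_p$ and $\{\imp_2(t_{2,p})\}_p$ arise from $\{x'(t)\}_{t\in T}$ and $\{y'(t)\}_{t\in T}$ through a nondecreasing, unbounded index map, i.e.\ they are subsequences of these sequences up to the finite repetitions already discussed — exactly the form in which the lemma feeds into the proof of Theorem~\ref{The:fixed_recip}, since the limit of a synchronous sequence is then inherited.

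I do not expect a genuine obstacle; the substance is the bookkeeping forced by asynchrony. The most delicate step is the monotonicity of $D_1,D_2$: because agent $i$ may act far from agent $j$'s last action time, the two sequences cannot be treated separately, and one must also handle the inherited case $t\notin T_i$ (where $D_i(t)=D_i(s_i(t))$) and the boundary $s_i(t-1)=0$. Well-foundedness of depth, the depth-determines-value induction, and the unboundedness step are all routine once this monotonicity is in place.
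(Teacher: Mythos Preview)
Your approach is correct, and it is a more conceptual reorganization of what the paper does directly. The paper's proof is a short simultaneous induction on $p$: it observes that $\imp_1(t_{1,p})=(1-r_1)k_1+r_1\,\imp_2(s_2(t_{1,p}-1))$, invokes the induction hypothesis to say the inner term already occurs in $\{y'(t)\}$, and concludes that $\imp_1(t_{1,p})$ occurs in $\{x'(t)\}$ no earlier than $\imp_1(t_{1,p-1})$ did. Your version factors this into two pieces: first isolate the invariant (the recursion depth) and show by induction on it that the value of a \name{fixed} action depends only on depth; then show, by simultaneous induction on $t$, that the depth maps $D_1,D_2$ are nondecreasing and unbounded. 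The paper's ``position in the synchronous sequence'' is exactly your depth, so the two arguments are the same under the hood; yours just names the coordinate explicitly.

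What your decomposition buys is clarity about the one delicate point: when agent $i$ acts twice with no intervening action of $j$, the two actions have the \emph{same} depth, so the index map $p\mapsto d_i(p)$ is only nondecreasing, not strictly increasing. You flag this (``subsequences\ldots up to the finite repetitions'') and note it is harmless for Theorem~\ref{The:fixed_recip}. The paper's proof uses ``comes after'' / ``precedes'' without distinguishing strict from weak order, so it is slightly less precise on exactly this point. For the intended application (inheriting the limit from the synchronous case) both arguments are equally sufficient; the paper's is shorter, yours is more explicit.
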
%
\ifthenelse{\NOT \equal{\destin}{IJCAI16}}{
\begin{proof}
We prove by induction that for each $p > 0$, the sequence
$\imp_1(t_{1, 0}), \imp_1(t_{1, 1}), \ldots, \imp_1(t_{1, p})$ is a subsequence
of the
$x'(0), x'(1), \ldots, \imp_1(t_{1, p})$
and the sequence
$\imp_2(t_{2, 0}), \imp_2(t_{2, 1}), \ldots, \imp_2(t_{2, p})$ is a subsequence
of the
$y'(0), y'(1), \ldots, y'(t_{2, p})$.

For $p = 0$, this is immediate, since $t_{i, 0} = 0$.

For the induction step, assume that the lemma holds for $p - 1$ and prove
it for $p > 0$. By definition,
$\imp_1(t_{1, p}) = (1 - r_1) \cdot k_1 + r_1 \cdot \imp_2(s_2(t_{1, p}-1))$,
and since by the induction hypothesis, $\imp_2(s_2(t_{1, p}-1))$ is
an element in the sequence $\set{y'(t)}$, we conclude that $\imp_1(t_{1, p})$ is an
element in the sequence $\set{x'(t)}$. Moreover, in $\set{x'(t)}$ this element comes after
$\imp_1(t_{1, {p - 1}})$, because either
$\imp_1(t_{1, {p - 1}}) = \imp_1(0)$ or
$\imp_1(t_{1, {p - 1}}) = (1 - r_1) \cdot k_1 + r_1 \cdot \imp_2(s_2(t_{1, {p - 1}}-1))$
and $\imp_2(s_2(t_{1, {p - 1}}-1))$ precedes $\imp_2(s_2(t_{1, p}-1))$
in $\set{y'(t)}$ by the induction hypothesis. This proves the induction step for
agent $1$, and it is proven by analogy for $2$.
\end{proof}
}{
The proof follows from Definition~\ref{def:fix_recip} by induction. (Straightforward proofs in this paper have been replaced by their general ideas due to lack of space).
}%
Using this lemma, it is sufficient to further assume
the synchronous case.
\begin{lemma}\label{lemma:fixed_recip:per_seq_behav}
In the synchronous case,
for every $t > 0: x(2t - 1) \geq x(2t + 1)$, and for every
$t \geq 0: x(2t) \leq x(2t + 2) \leq x(2t + 1)$.
By analogy,
$\forall t > 0: y(2t - 1) \leq y(2t + 1)$, and
$\forall t \geq 0: y(2t) \geq y(2t + 2) \geq y(2t + 1)$.
All the inequations are strict if and only if
$0 < r_1, r_2 < 1, k_2 > k_1$.\footnote{We always assume that $k_2 \geq k_1$.}
\end{lemma}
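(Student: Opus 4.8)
The plan is to work in the synchronous case, to which the statement is already restricted, so that $T_1=T_2=T$ and, setting $f(s)=(1-r_1)k_1+r_1 s$ and $g(s)=(1-r_2)k_2+r_2 s$, Definition~\ref{def:fix_recip} (with $r_1'=r_2'=0$) reads $x(0)=k_1$, $y(0)=k_2$, and $x(t)=f(y(t-1))$, $y(t)=g(x(t-1))$ for $t\ge 1$. Both $f$ and $g$ are affine and non-decreasing, since their slopes $r_1,r_2$ lie in $[0,1]$, and $x(0)=k_1\le k_2=y(0)$. The key observation I would exploit is that applying $x(\cdot)=f(y(\cdot-1))$ at two arguments that differ by $2$ gives $x(t)-x(t+2)=r_1\bigl(y(t-1)-y(t+1)\bigr)$, and symmetrically $y(t)-y(t+2)=r_2\bigl(x(t-1)-x(t+1)\bigr)$; composing these two identities produces a one-line scaling recursion $x(t)-x(t+2)=r_1r_2\bigl(x(t-2)-x(t)\bigr)$ (and the same for $y$), and a similar argument gives $y(2t)-y(2t+1)=r_1r_2\bigl(y(2t-2)-y(2t-1)\bigr)$ together with its $x$-mirror. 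Thus every difference occurring in the lemma equals $(r_1r_2)^m$ times a difference among the first few iterates.

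First I would nail down those \emph{base} differences by direct substitution, obtaining $x(0)-x(2)=r_1(1-r_2)(k_1-k_2)$, $x(1)-x(2)=r_1r_2(k_2-k_1)$, $x(1)-x(3)=r_1r_2(1-r_1)(k_2-k_1)$, and the mirror-image expressions for $y$ (with $1-r_1$ and $1-r_2$ swapped). Since $k_1\le k_2$ and $r_1,r_2,1-r_1,1-r_2\ge 0$, each base difference already has the sign the lemma predicts, and because $(r_1r_2)^m\ge 0$ the scaling recursions propagate these signs to every $t$: for instance $x(2t)-x(2t+2)=(r_1r_2)^t\bigl(x(0)-x(2)\bigr)\le 0$, $x(2t-1)-x(2t+1)=(r_1r_2)^{t-1}\bigl(x(1)-x(3)\bigr)\ge 0$, and $x(2t+1)-x(2t+2)=(r_1r_2)^t\bigl(x(1)-x(2)\bigr)\ge 0$, with the three $y$-statements identical by the same computation (equivalently, by the symmetry between the two agents). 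The only care needed here is with index ranges --- the scaling recursions hold only for $t$ large enough, so $t=0$ and $t=1$ have to be checked directly as base cases --- but that is routine. (For the ``$x(2t+2)\le x(2t+1)$'' family one may instead run a short coupled induction that uses only monotonicity of $f,g$ to get $x(2t)\le x(2t+1)$ and $y(2t)\ge y(2t+1)$ for all $t$ and then concludes $x(2t+2)=f(y(2t+1))\le f(y(2t))=x(2t+1)$; I would use whichever reads more cleanly.)

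For the strictness equivalence I would record which elementary factor each base difference carries: $(r_1r_2)^m$ is nonzero iff $r_1,r_2>0$; the factor $k_2-k_1$ occurs in \emph{every} base difference; the factor $1-r_1$ occurs in $x(0)-x(2)$ and $x(1)-x(3)$; and $1-r_2$ occurs in the corresponding $y$-differences. Hence all the listed inequalities are strict simultaneously iff none of $r_1$, $r_2$, $1-r_1$, $1-r_2$, $k_2-k_1$ vanishes, i.e.\ iff $0<r_1,r_2<1$ and $k_2>k_1$; conversely, if any one of these quantities is zero, the difference that carries it collapses to an equality. I expect no genuine difficulty in the argument; the one thing that must be done carefully is the bookkeeping of these factors, so that the condition that falls out is exactly $0<r_1,r_2<1\wedge k_2>k_1$ and not something weaker.
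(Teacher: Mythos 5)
Your proposal is correct, but it proceeds differently from the paper. The paper proves this lemma by a coupled induction on $t$: it verifies the base case $x(0)\leq x(2)\leq x(1)$, $y(0)\geq y(2)\geq y(1)$ directly, and then propagates the inequalities one step at a time using only that $x(t)=(1-r_1)k_1+r_1y(t-1)$ and $y(t)=(1-r_2)k_2+r_2x(t-1)$ are non-decreasing in their arguments; strictness is then dispatched by the (somewhat terse) remark that one can ``repeat the proof with strict inequalities'' and observe that each degenerate parameter choice forces an equality somewhere. You instead turn the qualitative induction into exact identities: composing $x(t)-x(t+2)=r_1\bigl(y(t-1)-y(t+1)\bigr)$ with its mirror gives $x(t)-x(t+2)=r_1r_2\bigl(x(t-2)-x(t)\bigr)$ (and the analogous relation for the mixed differences like $y(2t)-y(2t+1)$), so every difference in the lemma is $(r_1r_2)^m$ times one of a handful of explicitly computed base differences such as $x(0)-x(2)=r_1(1-r_2)(k_1-k_2)$ and $x(1)-x(3)=r_1r_2(1-r_1)(k_2-k_1)$. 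This buys you two things the paper's argument does not make explicit: the strictness equivalence becomes transparent factor bookkeeping (each of $r_1$, $r_2$, $1-r_1$, $1-r_2$, $k_2-k_1$ appears in some base difference, so all inequalities are strict iff $0<r_1,r_2<1$ and $k_2>k_1$, and any vanishing factor collapses its difference to equality), and the closed forms already exhibit the geometric decay $(r_1r_2)^{\lfloor t/2\rfloor}$ that the paper only extracts later, inside the proof of Theorem~\ref{The:fixed_recip}. The only care needed, as you note, is the index range where the scaling recursions are valid, which your base-case checks at $t=0,1$ handle; with that, the argument is complete and slightly stronger than the paper's.
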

Since we also have $t \geq 0: x(2t) \leq x(2t + 1)$, we
obtain $t > 0: x(2t - 1) \geq x(2t + 1) \geq x(2t)$, and for every
$t \geq 0: x(2t) \leq x(2t + 2) \leq x(2t + 1)$.
By analogy,
$\forall t > 0: y(2t - 1) \leq y(2t + 1) \leq y(2t)$, and
$\forall t \geq 0: y(2t) \geq y(2t + 2) \geq y(2t + 1)$.
Intuitively, this means that the sequence $\set{x(t)}$ is alternating while
its amplitude is getting smaller, and the same holds for the sequence $\set{y(t)}$, with another
alternation direction.
The intuitive reasons are that first, agent $1$ increases
her action, while $2$ decreases it. Then, since $2$ has decreased
her action, so does $1$, while since $1$ has increased hers, so
does $2$.%
\ifthenelse{\NOT \equal{\destin}{IJCAI16}}{
This alternating process is going on, and the amplitude
subsides, since the changing part is convexly combined with the
constant one.

To prove the theorem we use only the monotonicity of the
subsequences of the even and of the odd actions.
}{
}%
We now prove the lemma.
\begin{proof}
We employ induction. For $t = 0$, we need to show that
$x(0) \leq x(2) \leq x(1)$ and $y(0) \geq y(2) \geq y(1)$.
We know that $x(0) = k_1$,
$x(1) = (1 - r_1) \cdot k_1 + r_1 \cdot k_2$,
and $y(0) = k_2$,
$y(1) = (1 - r_2) \cdot k_2 + r_2 \cdot k_1$.
Since $y(1) \leq k_2$, we have
$x(2) = (1 - r_1) \cdot k_1 + r_1 \cdot y(1) \leq x(1)$.
Since $y(1) \geq k_1$, we also have
$x(2) = (1 - r_1) \cdot k_1 + r_1 \cdot y(1) \geq x(0)$.
The proof for $y$s is analogous.

For the induction step, for any $t > 0$, assume that the lemma holds
for $t - 1$, which means $x(2t - 3) \geq x(2t - 1)$ (for $t > 1$),
$x(2t - 2) \leq x(2t) \leq x(2t - 1)$, and
$y(2t - 3) \leq y(2t - 1)$ (for $t > 1$),
$y(2t - 2) \geq y(2t) \geq y(2t - 1)$.

We now prove the lemma for $t$.
By Definition~\ref{def:fix_recip}, $x(2t - 1) = (1 - r_1)k_1 + r_1 y(2t - 2)$ and
$x(2t + 1) = (1 - r_1)k_1 + r_1 y(2t)$.
Since $y(2t - 2) \geq y(2t)$, we have $x(2t - 1) \geq x(2t + 1)$.
By analogy, we can prove that $y(2t - 1) \leq y(2t + 1)$.

Also by definition, $x(2t) = (1 - r_1)k_1 + r_1 y(2t - 1)$ and
$x(2t + 2) = (1 - r_1)k_1 + r_1 y(2t + 1)$.
Since $y(2t - 1) \leq y(2t + 1)$, we have $x(2t) \leq x(2t + 2)$.
By definition, $x(2t + 1) = (1 - r_1)k_1 + r_1 y(2t)$.
Since $y(2t) \geq y(2t - 1)$, we conclude that
$x(2t + 1) \geq x(2t)$. By analogy, we prove that
$y(2t + 1) \leq y(2t)$. From this, we conclude that
$x(2t + 2) \leq x(2t + 1)$, and we have shown that
$x(2t) \leq x(2t + 2) \leq x(2t + 1)$.
By analogy, we prove that $y(2t) \geq y(2t + 2) \geq y(2t + 1)$.

The equivalence of strictness in all the inequations to
$0 < r_1, r_2 < 1, k_2 > k_1$ is proven by repeating the proof
with  strict inequalities in one direction, and by noticing that
not having one of the conditions
$0 < r_1, r_2 < 1, k_2 > k_1$ implies
equality in at least one of the statements of the lemma.
\end{proof}

With these results we now prove Theorem~\ref{The:fixed_recip}.
\begin{proof}
Using Lemma~\ref{lemma:fixed_recip:subseq_synch_case}, we
assume the synchronous case.
We first prove convergence, and then find its limit.
For each agent, Lemma~\ref{lemma:fixed_recip:per_seq_behav} implies
that the even actions form a monotone sequence,
and so do the odd ones. Both sequences are bounded, which can be easily
proven by induction, and therefore each
one converges. The whole sequence converges if and only if both limits
are the same. We now show that they are indeed the same for
the sequence $\set{x(t)}$; the proof for $\set{y(t)}$ is analogous.%
\ifthenelse{\equal{\destin}{IJCAI16}}{
$ x(t + 1) - x(t)\\ = (1 - r_1)k_1 + r_1 y(t) - (1 - r_1)k_1 - r_1 y(t - 1)\\
= r_1 (y(t) - y(t - 1)) =  r_1 r_2 (x(t - 1) - x(t - 2)) = \ldots\\
= (r_1 r_2)^{\floor{t / 2}}
\begin{cases}
x(1) - x(0) & t = 2 s, s \in \naturals\\
x(2) - x(1) & t = 2 s + 1, s \in \naturals.
\end{cases}$
}{
\begin{align*}
 &x(t + 1) - x(t)\\ &\quad= (1 - r_1)k_1 + r_1 y(t) - (1 - r_1)k_1 - r_1 y(t - 1)\\
&\quad= r_1 (y(t) - y(t - 1)) =  r_1 r_2 (x(t - 1) - x(t - 2)) = \ldots\\
&\quad= (r_1 r_2)^{\floor{t / 2}}
\begin{cases}
x(1) - x(0) & t = 2 s, s \in \naturals\\
x(2) - x(1) & t = 2 s + 1, s \in \naturals.
\end{cases}
\end{align*}
}%
As $r_1 r_2 < 1$, this difference goes to 0 as $t$ goes to $\infty$.
Thus, $x(t)$ converges (and so does $y(t)$).
To find the limits $L_x = \lim_{t \to \infty}{x(t)}$
and $L_y = \lim_{t \to \infty}{y(t)}$, notice that in the limit we have
$(1 - r_1) k_1 + r_1 L_y = L_x$ and 
$(1 - r_2) k_2 + r_2 L_x = L_y$
with the unique solution:
$L_x = \frac{(1 - r_1) k_1 + r_1 (1 - r_2) k_2}{1 - r_1 r_2}$ and 
$L_y = \frac{(1 - r_2) k_2 + r_2 (1 - r_1) k_1}{1 - r_1 r_2}$.
\end{proof}

\ifthenelse{\NOT \equal{\destin}{IJCAI16}}{
If, unlike the theorem assumes, $r_1 r_2 = 1$, then
since $r_1 r_2  = 1 \iff r_1 = r_2 = 1$,
each agent just repeats what the other one did last time, thereby
changing roles. It means, in particular, that unless $k_1 = k_2$,
no convergence takes place.
}{
}%

We see that $L_x \leq L_y$,
which is intuitive, since the agents are always considering their kindness,
so the kinder one acts with a bigger weight also in the limit.
\ifthenelse{\NOT \equal{\destin}{IJCAI16}}{
When this limit inequality is strict, we have $x(t) < y(t)$ for all
$t \geq t_0$ for some $t_0$. To find when it is strict, consider
the following:
\begin{eqnarray*}
\frac{(1 - r_1) k_1 + r_1 (1 - r_2) k_2}{1 - r_1 r_2} = \frac{(1 - r_2) k_2 + r_2 (1 - r_1) k_1}{1 - r_1 r_2}\\
\iff (1 - r_1)(1 - r_2) k_1 = (1 - r_1)(1 - r_2) k_2\\
\iff r_1 = 1 \lor r_2 = 1 \lor k_1 = k_2,
\end{eqnarray*}
and thus, it is strict if and only if $r_1 < 1 \land r_2 < 1 \land k_1 < k_2$.
}{
}%
In the simulation of the actions over time in Figure~\ref{fig:fixed_fixed_03_05_09},
on the left, $y(t)$ is always larger than $x(t)$, and on the right,
they alternate several times before $y(t)$ becomes larger.
%
\begin{figure}%

\begin{minipage}{.47\textwidth}
\centerline{%
\includegraphics[trim = 35mm 80mm 40mm 90mm, clip, width=0.5\textwidth, height=1.5in]{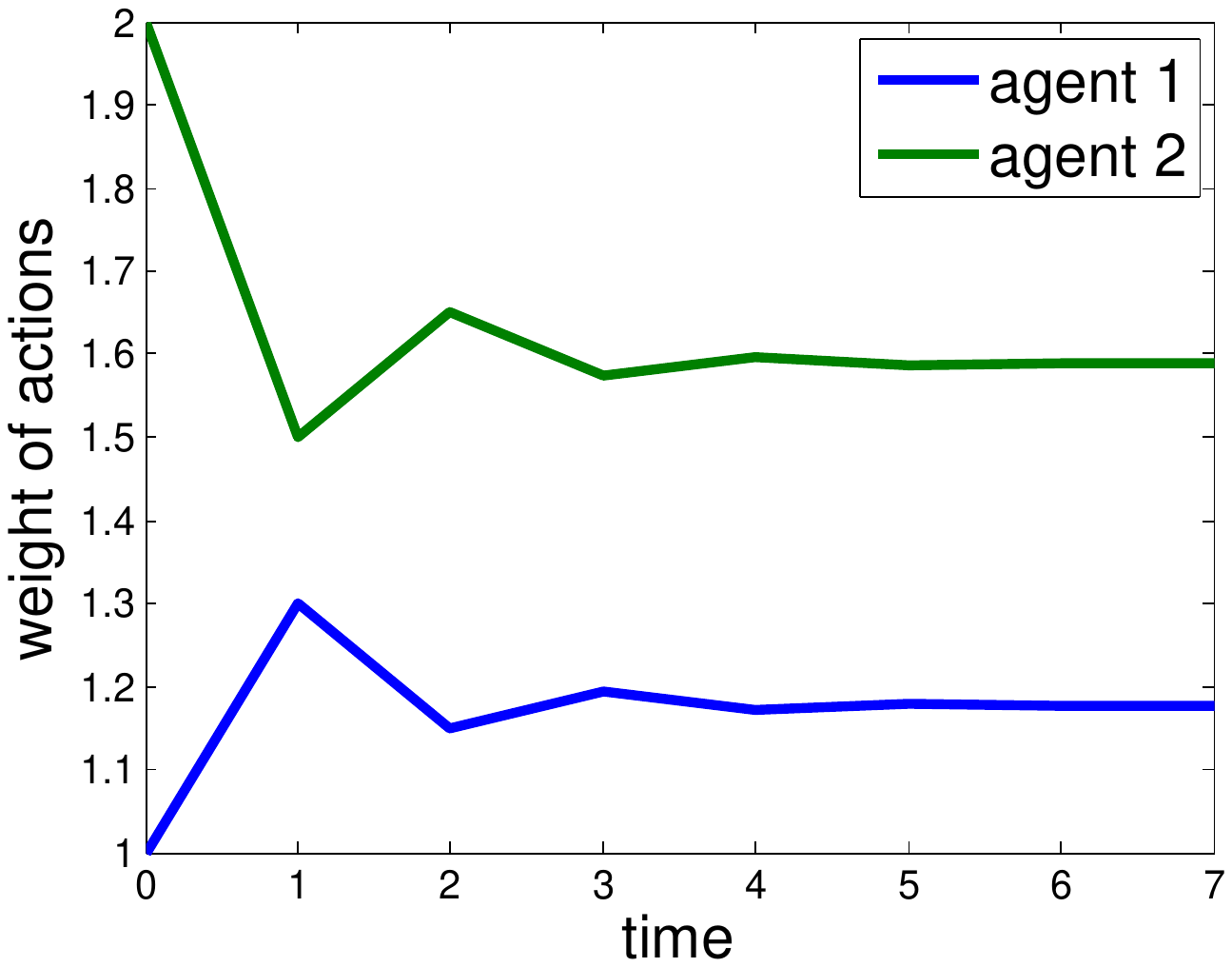}
\includegraphics[trim = 35mm 80mm 40mm 90mm, clip, width=0.5\textwidth, height=1.5in]{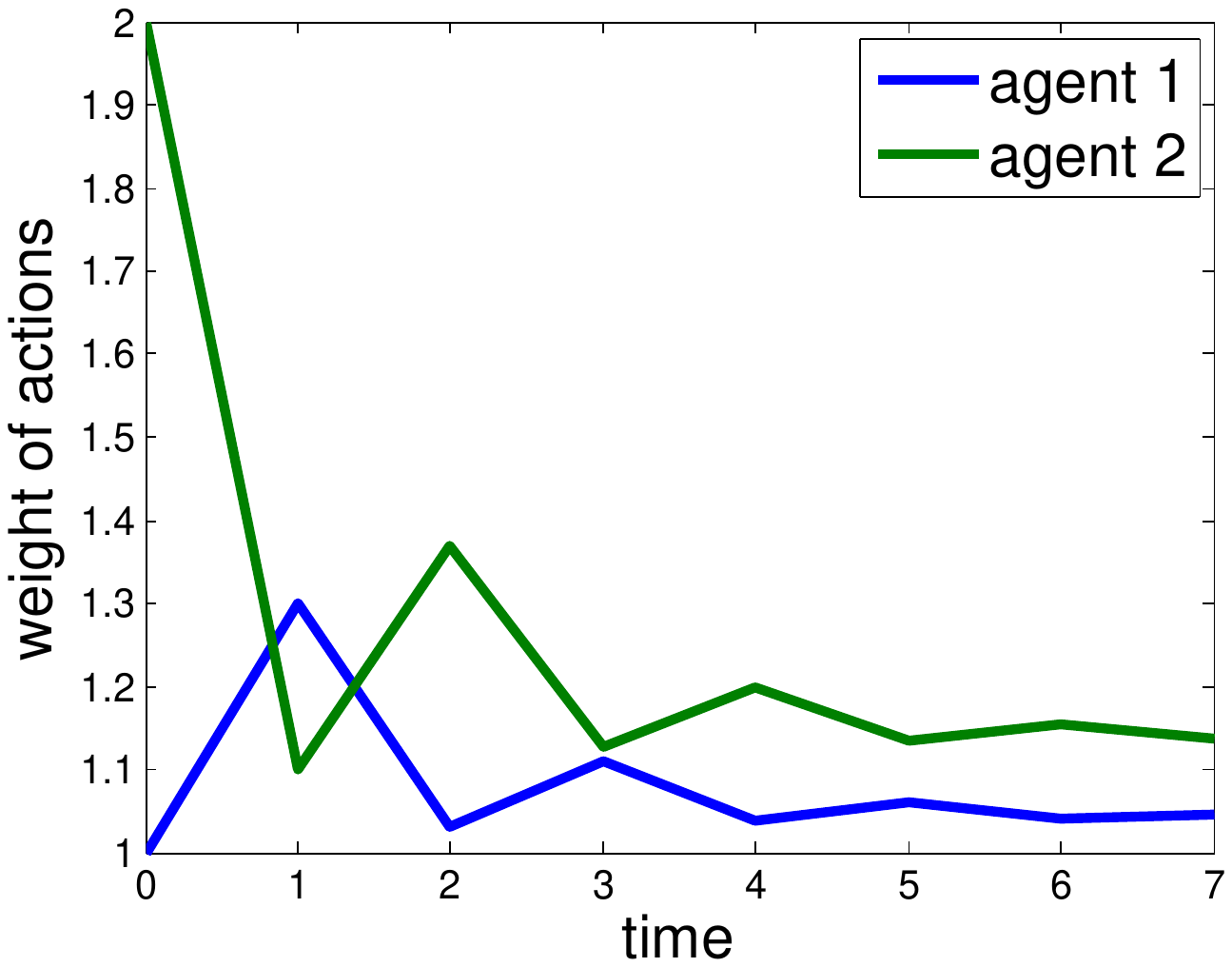}
}
\protect\caption{Simulation of actions for the synchronous case, with $r_1 + r_2 < 1$,
$r_2 = 0.5$ on the left, and $r_1 + r_2 > 1$, $r_2 = 0.9$ on the right.
This is a \name{fixed} - \name{fixed} reciprocation, with $k_1 = 1, k_2 = 2, r_1 = 0.3$.
Each agent's oscillate, while converging to her own limit.
}
\label{fig:fixed_fixed_03_05_09}%
\end{minipage}%
\hspace{0.04\textwidth}%
\begin{minipage}{.47\textwidth}
\centering
\includegraphics[trim = 25mm 20mm 80mm 210mm, clip, width=0.98\textwidth]{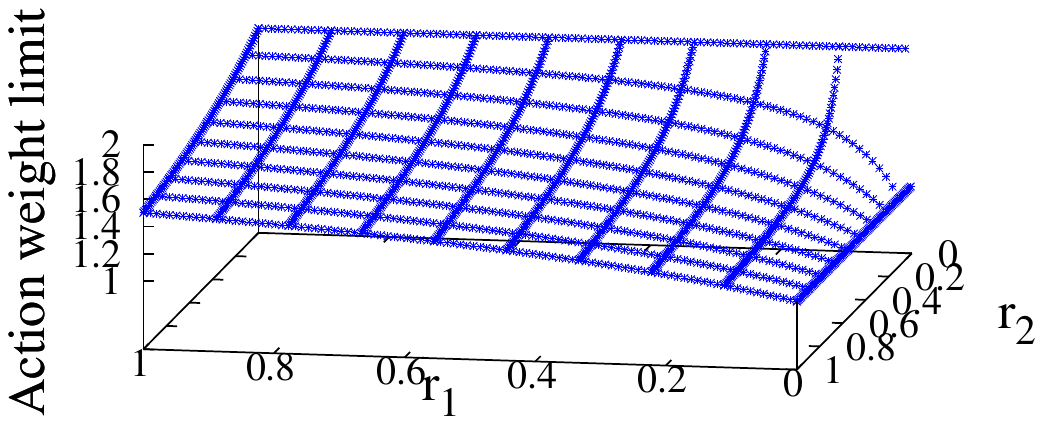}

\protect\caption{The common limit of the actions as a function of the reciprocity coefficients,
for a \name{Floating} - \name{floating} reciprocation, $k_1 = 1, k_2 = 2$.
Given $r_2$, agent $1$ receives most when $r_1 = 1$,
and given $r_1$, agent $2$ receives most when $r_2 = 0$.}
\label{fig:float_float_limit_agent12}%
\end{minipage}
\end{figure}

\subsection{\name{Floating} Reciprocation}

If both agents have the \name{floating} reciprocation attitude, their action sequences converge to a common limit.
\begin{theorem}\label{The:float_recip}
If %
\ifthenelse{\equal{\destin}{IJCAI16}}{
}{the reciprocation coefficients are neither both $0$ and nor both $1$, which means }%
$0 < r_1 + r_2 < 2$, then, as $t \to \infty$,
$x(t)$ and $y(t)$ converge to the same limit.
In the synchronous case ($T_1 = T_2 = T$), they both approach
\begin{equation*}
\frac{1}{2} \left(k_1 + k_2 + (k_2 - k_1)\frac{r_1 - r_2}{r_1 + r_2}\right)
= \frac{r_2}{r_1 + r_2} k_1 + \frac{r_1}{r_1 + r_2} k_2.
\end{equation*}
\end{theorem}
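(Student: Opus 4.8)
The plan is to handle the synchronous case first, where the pair $(x(t),y(t))$ obeys a single two-dimensional linear iteration and the exact limit falls out, and then to treat the general (asynchronous) case by a direct telescoping argument. Since we are in the pairwise setting, $r_i' = 0$, so by Definition~\ref{def:float_recip} the synchronous recurrences are $x(t) = (1-r_1)x(t-1) + r_1 y(t-1)$ and $y(t) = (1-r_2)y(t-1) + r_2 x(t-1)$, with $x(0) = k_1$, $y(0) = k_2$.

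\emph{Synchronous case.} I would extract two scalar quantities. Subtracting the recurrences gives $x(t) - y(t) = (1 - r_1 - r_2)\bigl(x(t-1)-y(t-1)\bigr)$, hence $x(t)-y(t) = (1-r_1-r_2)^{t}(k_1 - k_2)$; since $r_i\in[0,1]$ and $0 < r_1 + r_2 < 2$ (the hypothesis), $|1 - r_1 - r_2| < 1$, so this difference tends to $0$. A one-line check from the recurrences shows that $r_2 x(t) + r_1 y(t)$ is constant, equal to $r_2 k_1 + r_1 k_2$ (it is the left $1$-eigenvector of the transition matrix, which is row-stochastic with second eigenvalue $1-r_1-r_2$). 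Solving the $2\times2$ system made of these two relations gives $x(t) = \frac{r_2 k_1 + r_1 k_2 + r_1(1-r_1-r_2)^{t}(k_1-k_2)}{r_1+r_2}$ and the analogous expression for $y(t)$, using $r_1 + r_2 > 0$. Letting $t\to\infty$, both converge to $\frac{r_2 k_1 + r_1 k_2}{r_1+r_2}$, and a routine rearrangement shows this equals $\tfrac12\bigl(k_1 + k_2 + (k_2-k_1)\tfrac{r_1-r_2}{r_1+r_2}\bigr)$, as claimed. Alternatively, since $x$ changes by $-r_1\bigl(x(t-1)-y(t-1)\bigr)$ each step, one can obtain the limit by summing the geometric series $-r_1\sum_{s\ge0}(1-r_1-r_2)^{s}(k_1-k_2)$ — the form that also works asynchronously.

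\emph{General case.} Unlike Lemma~\ref{lemma:fixed_recip:subseq_synch_case}, the floating sequences are \emph{not} subsequences of the synchronous ones (an idle agent's action stays frozen, which never happens in the synchronous run), so I would argue directly. Assume w.l.o.g.\ $r_1 > 0$: if $r_1 = r_2 = 0$ the hypothesis fails, and if exactly one coefficient is $0$ then that agent's action is constantly its kindness while the other agent's recurrence is a contraction toward it, a sub-case handled separately. Write $w(t) = x(t) - y(t)$. At each step $w$ is multiplied by one of $1$, $1-r_1$, $1-r_2$, $1-r_1-r_2$ according to who acts, each of modulus $\le 1$; at every time agent~$1$ acts — infinitely often — the factor has modulus at most $c := \max\bigl(1-r_1,\,|1-r_1-r_2|\bigr) < 1$, and between such times $|w|$ is non-increasing, so $w(t)\to 0$. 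Next, $x(t)$ changes only when agent~$1$ acts, by $-r_1 w(t-1)$, and by the $p$-th action of agent~$1$ one has $|w|\le c^{\,p-1}|k_1-k_2|$, so these increments are dominated by a convergent geometric series and $x(t)$ converges; symmetrically (using $r_2>0$, else the separate sub-case) $y(t)$ converges, and since $w(t)\to 0$ the two limits coincide.

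\emph{Main obstacle.} The synchronous part is elementary linear algebra; the care lies in the general case — establishing $w(t)\to 0$ with a rate uniform over an arbitrary action schedule, and disposing cleanly of the degenerate sub-cases $r_1=0$ or $r_2=0$. One could instead deduce the general case from the later network convergence theorem (the excerpt notes floating–floating is a special case of it), but then the obstacle becomes verifying its hypotheses for two agents, so I would keep the self-contained argument above.
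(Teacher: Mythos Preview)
Your proof is correct. The general-case convergence argument is close in spirit to the paper's (both exploit that $x(t)-y(t)$ picks up a factor of modulus at most $1$ at every step and strictly less than $1$ infinitely often), but the paper packages it differently: it observes that $x(t),y(t)$ are convex combinations of $x(t-1),y(t-1)$, hence the intervals $[\min\{x(t),y(t)\},\max\{x(t),y(t)\}]$ are nested with lengths shrinking to zero, and invokes the nested-interval principle. That avoids your case split on which $r_i>0$ and the separate summability bookkeeping for $x$ and $y$; your telescoping of increments works just as well but is slightly heavier.

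The genuine methodological difference is in the synchronous limit computation. The paper tracks $x(t)+y(t)$, notes it changes by $(r_1-r_2)(y(t-1)-x(t-1))$ at each step, and sums the resulting geometric series to obtain the limiting value of the sum. You instead spot the exact invariant $r_2 x(t)+r_1 y(t)\equiv r_2 k_1+r_1 k_2$ (the left $1$-eigenvector of the stochastic transition matrix) and combine it with the closed form for the difference. Your route is cleaner---it yields explicit finite-$t$ formulas for $x(t)$ and $y(t)$ with no series summation---while the paper's sum-telescoping has the advantage of extending verbatim to the alternating schedule handled later (Theorem~\ref{The:float_recip_altern}), where no such conserved linear combination exists.
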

The common limit of the actions is shown
in~\figref{fig:float_float_limit_agent12}.

In Example~\ref{ex:colleages}, if agents $1$ and $2$ employ \name{fixed}
reciprocation, $r_1 = r_2 = 0.5, r_1' = r_2' = 0.0$ and
$k_1 = 0, k_2 = 0.5$, then we obtain
$L_x = L_y = (1/2) \cdot 0 + (1/2) \cdot 0.5 = 0.25$.

\ifthenelse{\equal{\destin}{IJCAI16}}{
The idea of the proof is to show that
$\set{[\min\set{x(t), y(t)}, \max\set{x(t), y(t)}]}_{t = 1}^\infty$ is
a nested sequence of segments, which lengths approach zero, and therefore,
$\set{x(t)}$ and $\set{y(t)}$ converge to the same limit. Finding this
limit stems from finding $\lim_{t \to \infty}\paren{x(t) + y(t)}$.
}{
Throughout the paper, whenever we need
concrete $T_1, T_2$, we consider the synchronous case.%
\ifthenelse{\equal{\destin}{IJCAI16}}{
The alternative case is omitted due to lack of space.
}{
The alternative case is fully handled in~\sectnref{Sec:altern}.
}%
\begin{proof}
We first prove that the convergence takes place. 

If both agents act at time $t > 0$, then $y(t) - x (t)$
\begin{align}
&\quad= x(t - 1) (r_2 - 1 + r_1) + y(t - 1) (1 - r_2 - r_1)\nonumber\\
\ifthenelse{\NOT \equal{\destin}{IJCAI16}}{
&\quad= y(t - 1) (1 - r_1 - r_2) - x(t - 1) (1 - r_1 - r_2)\nonumber\\
}{
}%
&\quad= (1 - r_1 - r_2) (y(t - 1) - x(t - 1)).%
\ifthenelse{\NOT \equal{\destin}{IJCAI16}}{
}{
\nonumber
}%
\label{eq:recur_diff_eq_both}
\end{align}
Since $0 < r_1 + r_2 < 2$, we have $\abs{(1 - r_1 - r_2)} < 1$.

If only agent $1$ acts at time $t > 0$, then $y(t) - x(t)$
\begin{eqnarray}
= y(t - 1) (1 - r_1) - x(t - 1) (1 - r_1)\\
= (1 - r_1) (y(t - 1) - x(t - 1)).%
\ifthenelse{\NOT \equal{\destin}{IJCAI16}}{
}{
\nonumber
}%
\label{eq:recur_diff_eq_one}
\end{eqnarray}
If $r_1 > 0$, then $\abs{(1 - r_1)} < 1$. 
Similarly, if only agent $2$ acts, then $y(t) - x (t) = (1 - r_2) (y(t - 1) - x(t - 1))$.
Since $r_1 + r_2 > 0$, either
$r_1$ or $r_2$ is greater than $0$, and since each agent acts an infinite number of
times,
we obtain
$\lim_{t \to \infty} \abs{y(t) - x(t)} = 0$. Since
$\forall t > 0: x(t), y(t) \in [\min\set{x(t - 1), y(t - 1)}, \max\set{x(t - 1), y(t - 1)}]$,
we have a nested sequence of segments, which lengths approach zero,
thus $x(t)$ and $y(t)$ both converge, and to the same limit.

Assume $T_1 = T_2 = T$ now, to find the common limit.
For all $t > 0$,
\begin{eqnarray*}
x(t) + y(t) = x(t - 1)(1 - r_1 + r_2) + y(t - 1) (r_1 + 1 - r_2)\nonumber\\
= x(t - 1) + y(t - 1) + (r_1 - r_2) (y(t - 1) - x(t - 1))\nonumber\\
\Rightarrow \lim_{t \to \infty}{x(y) + y(t)} = k_1 + k_2
+ \sum_{t = 0}^{\infty}{(r_1 - r_2) (y(t) - x(t))}\\%
\ifthenelse{\NOT \equal{\destin}{IJCAI16}}{
\stackrel{\eqnsref{eq:recur_diff_eq_both}}{=} k_1 + k_2
+ (r_1 - r_2)\sum_{t = 0}^{\infty}{(1 - r_1 - r_2)^t (k_2 - k_1)}\\
}{
}%
\underbrace{\stackrel{\text{geom. series}}{\to}}_{t \to \infty} k_1 + k_2
+ (r_1 - r_2)\frac{k_2 - k_1}{r_1 + r_2}\\
= k_1 + k_2
+ (k_2 - k_1)\frac{r_1 - r_2}{r_1 + r_2}.
\end{eqnarray*}

Since we have shown that both limits exist and are equal, each is equal
to half of $k_1 + k_2
+ (k_2 - k_1)\frac{r_1 - r_2}{r_1 + r_2}$.
\end{proof}

\ifthenelse{\NOT \equal{\destin}{IJCAI16}}{
If, unlike the theorem assumes, $r_1 + r_2 = 0$, then
since $r_1 + r_2 = 0 \iff r_1 = r_2 = 0$,
each agent keeps doing the same thing all the time: agent $1$ does $k_1$
and $2$ does $k_2$.

If, unlike the theorem assumes, $r_1 + r_2 = 2$, then
since $r_1 + r_2 = 2 \iff r_1 = r_2 = 1$,
each agent just repeats what the other one did last time.
In the synchronous case this means that they interchangeably play
$k_1$ and $k_2$.
}{
}%

The following gives the relation between $x$s and $y$s.
\begin{proposition}
If $r_1 + r_2 \leq 1$, then,
for every $t \geq 0: y(t) \geq x(t)$.
If $r_1 + r_2 \geq 1$, then,
$y(0) \geq x(0)$.
For every $t>0, t \in T_1 \cap T_2$, we have
$y(t - 1) \geq x(t - 1) \Rightarrow y(t) \leq x(t)$,
and $y(t - 1) \leq x(t - 1) \Rightarrow y(t) \geq x(t)$.
For any other $t$, we have
$y(t - 1) \geq x(t - 1) \Rightarrow y(t) \geq x(t)$,
and $y(t - 1) \leq x(t - 1) \Rightarrow y(t) \leq x(t)$.
In words, $x$s and $y$s alter their relative positions if and only if
both act.
\end{proposition}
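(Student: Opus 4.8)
The plan is to control the sign of the difference $\delta(t) \defas y(t) - x(t)$, showing that from one time step to the next it is multiplied by a coefficient whose sign is determined by which agents act at that step. Since $x(t) = x_{1,2}(s_1(t))$ and $y(t) = x_{2,1}(s_2(t))$, an idle agent at time $t$ simply carries its last value forward; in particular the base case is $\delta(0) = k_2 - k_1 \geq 0$, which already gives $y(0) \geq x(0)$ in both regimes. Because $T_1 \cup T_2 = T$, for every $t > 0$ exactly one of three situations occurs: both agents act, only agent~$1$ acts, or only agent~$2$ acts.

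Next I would do the one-step computation in each of these three cases, using the floating recurrences with $r'_1 = r'_2 = 0$. When both act, $\delta(t) = (1 - r_1 - r_2)\,\delta(t-1)$; when only agent~$1$ acts, $y(t) = y(t-1)$ while $x(t) = (1 - r_1)x(t-1) + r_1 y(t-1)$, so $\delta(t) = (1 - r_1)\,\delta(t-1)$; and symmetrically $\delta(t) = (1 - r_2)\,\delta(t-1)$ when only agent~$2$ acts. The first two of these are precisely Eqs.~\eqref{eq:recur_diff_eq_both} and~\eqref{eq:recur_diff_eq_one} already derived in the proof of Theorem~\ref{The:float_recip}. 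The key observation is that $1 - r_1$ and $1 - r_2$ lie in $[0,1]$ and hence are nonnegative no matter what the reciprocation coefficients are, whereas $1 - r_1 - r_2 \geq 0$ exactly when $r_1 + r_2 \leq 1$ and $1 - r_1 - r_2 \leq 0$ exactly when $r_1 + r_2 \geq 1$.

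Both claims then follow by a short induction on $t$. If $r_1 + r_2 \leq 1$, every multiplier above is $\geq 0$, so $\delta(t-1) \geq 0$ implies $\delta(t) \geq 0$; combined with $\delta(0) \geq 0$ this yields $y(t) \geq x(t)$ for all $t$. If $r_1 + r_2 \geq 1$, the multiplier is $\geq 0$ in the two single-actor cases, so the sign of $\delta$, i.e.\ the relative order of $x(t)$ and $y(t)$, is preserved; and the multiplier is $\leq 0$ when both act, so the order is reversed (or the two values coincide). This is exactly the stated dichotomy that $x$s and $y$s alter their relative positions if and only if both act.

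I do not anticipate a genuine obstacle here: the whole argument is a sign-tracking induction once the per-step recurrence for $\delta$ is written down, and that recurrence is essentially already available from the convergence proof of Theorem~\ref{The:float_recip}. The only points needing care are the bookkeeping that an idle agent's value is unchanged (immediate from the definition of $x_{i,j}(t)$ through $s_i$), and keeping all the implications non-strict, so that the boundary configurations $r_1 + r_2 = 1$, $r_i = 1$, and $k_1 = k_2$ require no separate treatment.
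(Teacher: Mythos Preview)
Your proposal is correct and follows essentially the same approach as the paper: both arguments proceed by induction on $t$, splitting into the three cases of who acts and reducing $y(t) \geq x(t)$ to the sign of $(1-r_1-r_2)$, $(1-r_1)$, or $(1-r_2)$ times $y(t-1)-x(t-1)$. Your explicit framing via $\delta(t) = y(t) - x(t)$ and its multiplicative recurrence is a slightly cleaner packaging of exactly the same computations the paper carries out.
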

\begin{proof}
Consider the case $r_1 + r_2 \leq 1$ first.
We employ induction. The basis is $t = 0$, where
$y(0) = k_2 \geq k_1 = x(0)$.

For the induction step, assume the proposition for all the times smaller than
$t > 0$ and prove it for $t$.
If only $1$ acts at $t$, then $y(t) = y(t - 1)$ and
$x(t) = (1 - r_1) x(t - 1) + r_1 y(t - 1)$. Therefore,
$y(t) \geq x(t)
\iff y(t - 1) \geq (1 - r_1) x(t - 1) + r_1 y(t - 1)$, which is equivalent to
$(1 - r_1) y(t - 1) \geq (1 - r_1) x(t - 1)$,
which holds by the induction hypothesis.
\ifthenelse{\NOT \equal{\destin}{IJCAI16}}{
If only agent $2$ acts at $t$, then
$x(t) = x(t - 1)$ and
$y(t) = (1 - r_2) y(t - 1) + r_2 x(t - 1)$. Therefore,
$y(t) \geq x(t)\\
\iff (1 - r_2) y(t - 1) + r_2 x(t - 1) \geq x(t - 1)\\
\iff (1 - r_2) y(t - 1) \geq (1 - r_2) x(t - 1),$
which is true by the induction hypothesis.
}{
The case where only $2$ acts at $t$ is similar.
}%

If both agents act at $t$, then
$x(t) = (1 - r_1) x(t - 1) + r_1 y(t - 1)$ and
$y(t) = (1 - r_2) y(t - 1) + r_2 x(t - 1)$. Therefore,
$y(t) \geq x(t)
\iff (1 - r_2) y(t - 1) + r_2 x(t - 1) \geq (1 - r_1) x(t - 1) + r_1 y(t - 1)
\iff (1 - r_1 - r_2) y(t - 1) \geq (1 - r_1 - r_2) x(t - 1),$
which is true by the induction hypothesis and using the assumption
$r_1 + r_2 \leq 1$.

Consider the case $r_1 + r_2 \geq 1$ now.
We employ induction again. The basis is $t = 0$, where
$y(0) = k_2 \geq k_1 = x(0)$.

For the induction step, assume the proposition for all values smaller than
$t > 0$ and prove it for $t$.
The cases where only agent $1$ acts at $t$
and where only $2$ acts at $t$ are shown analogously to how they are
shown for the case $r_1 + r_2 \leq 1$.
If both agents act at $t$, then we have shown that
$y(t) \geq x(t)\\
\iff (1 - r_1 - r_2) y(t - 1) \geq (1 - r_1 - r_2) x(t - 1),$
which means that $y(t - 1) \geq x(t - 1) \Rightarrow y(t) \leq x(t)$
and $y(t - 1) \leq x(t - 1) \Rightarrow y(t) \geq x(t)$,
assuming $r_1 + r_2 \geq 1$.
\end{proof}

The proposition implies that if $r_1 + r_2 \leq 1$, then $\set{x(t)}$ do not decrease
and $\set{y(t)}$ do not increase, since the next $x(t)$ (or $y(t)$) is either the same of a combination of the
last one with a higher value (lower value, for $y(t)$).

For $r_1 + r_2 > 1$, both $\set{x(t)}$ and $\set{y(t)}$ are not monotonic, unless
$T_1 \cap T_2 = \set{0}$, in which case they are monotonic, for the
reason above (in this case we always have $y(t) \geq x(t)$).
For $T_1 \cap T_2 \neq \set{0}$, take any positive $t$ in $T_1 \cap T_2$.
Then the larger value at $t - 1$ becomes the smaller one at $t$, thereby
getting smaller, and the smaller value gets larger analogously. In the
future, the smaller will only grow and the larger will decrease, thereby
behaving non-monotonically.
In particular, in the alternating case, each agent's actions alternate.
{This discussion assumes $r_1 < 1, r_2 < 1$,
to avoid getting $x(t) = y(t)$ when a single player acts.}
Some examples are simulated
in Figure~\ref{fig:float_float_03_05_09}.
\begin{figure}
\centering

\centerline{%
\includegraphics[width=0.24\textwidth, height=1.6in]{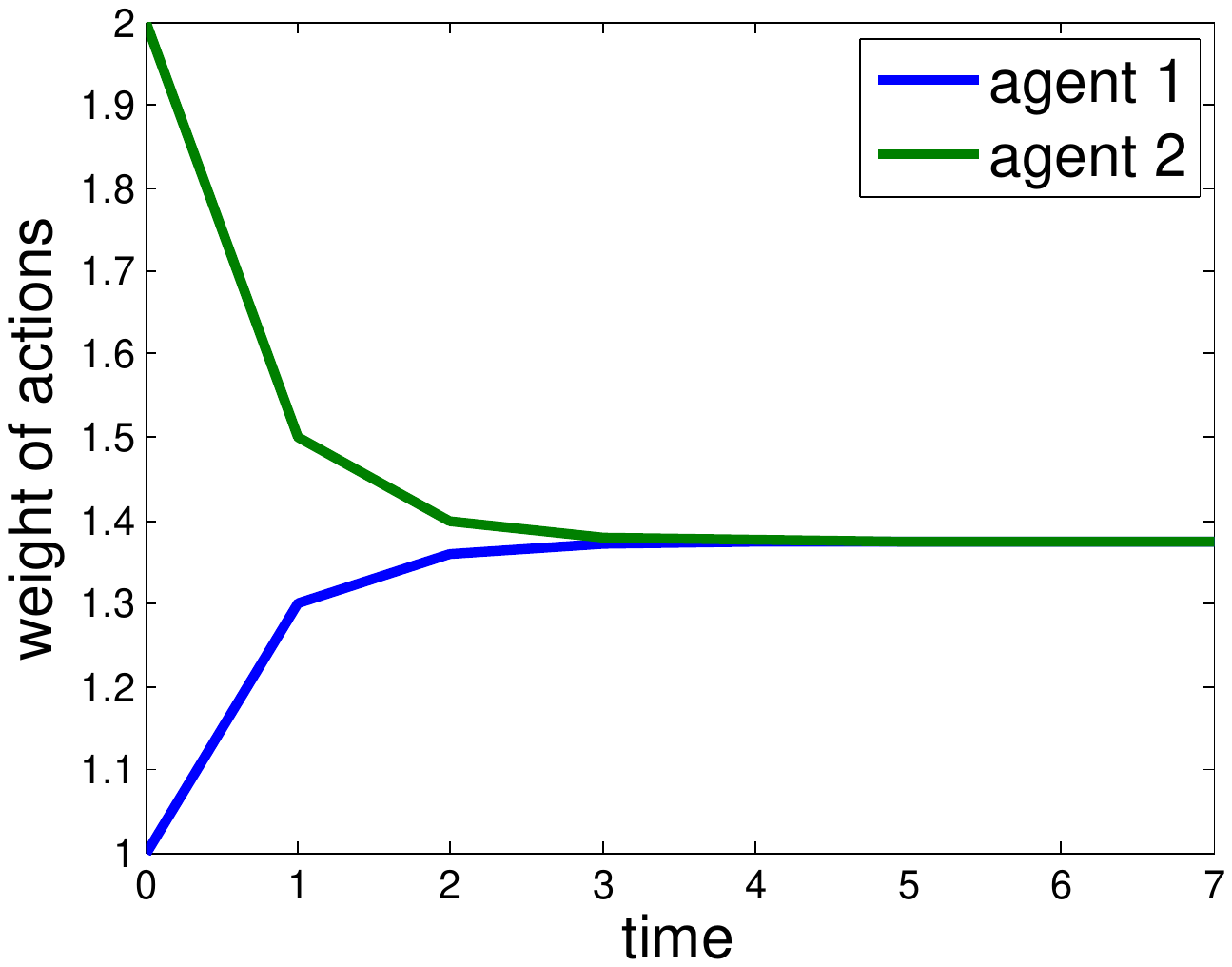}
\includegraphics[width=0.24\textwidth, height=1.6in]{float_float_03_09}
}%
\protect\caption{Simulation results for the synchronous case, $r_1 + r_2 < 1$
(left) and $r_1 + r_2 > 1$ (right). We assume
a \name{Floating} - \name{floating} reciprocation, $k_1 = 1, k_2 = 2, r_1 = 0.3$.
In the left graph, $r_2 = 0.5$, while in the right one, $r_2 = 0.9$.
In the left graph,
agent $1$'s actions are smaller than those of $2$;
agent $1$'s actions increase, while those of agent $2$ decrease.
In the right graph,
the actions of the agents alter their relative positions at each time step;
each agent's actions go up and down.
}

\label{fig:float_float_03_05_09}
\end{figure}
}%

\subsection{\name{Fixed} and \name{Floating} Reciprocation}\label{Sec:dynam_interact_pair:fix_float_recip}

Assume that agent $1$ employs the \name{fixed} reciprocation attitude,
while $2$ acts by the \name{floating} reciprocation.
We can show Theorem~\ref{The:fixed_float_recip} 
using the following lemma.
\begin{lemma}\label{lemma:fixed_float_recip:monot}
If $r_2 > 0$ and $r_1 + r_2 \leq 1$, then,
for every $t \geq t_{1, 1}: x(t + 1) \leq x(t)$, and for every
$t \geq 0: y(t + 1) \leq y(t)$.
\end{lemma}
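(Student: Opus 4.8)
The plan is to prove all three monotonicity claims simultaneously, together with one auxiliary lower bound, by a single induction on $t$. I would work in the synchronous case $T_1=T_2=T$ (so $t_{1,1}=1$), which is the convention of this section; the asynchronous statement then follows exactly as for the earlier pairwise lemmas, since between two consecutive action times of an agent its action is constant and only the acting agent's recurrence is applied. Recall that, with $r_1'=r_2'=0$ as assumed here, agent~$1$ being \name{fixed} gives $x(t)=(1-r_1)k_1+r_1 y(t-1)$ for $t>0$ (Definition~\ref{def:fix_recip}), agent~$2$ being \name{floating} gives $y(t)=(1-r_2)y(t-1)+r_2 x(t-1)$ for $t>0$ (Definition~\ref{def:float_recip}), and $x(0)=k_1\le k_2=y(0)$.

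The invariants I would carry along $t$ are: (i)~$x(t)\ge k_1$ and $y(t)\ge k_1$; (ii)~$x(t)\le y(t)$; and (iii)~$y(t)\le y(t-1)$ for $t\ge1$, together with $x(t+1)\le x(t)$ for $t\ge1$. Invariant (i) is a standalone easy induction: $y(t)$ is a convex combination of $y(t-1)$ and $x(t-1)$, so it stays $\ge k_1$, and $x(t)=(1-r_1)k_1+r_1y(t-1)\ge k_1$ whenever $y(t-1)\ge k_1$. For the base step one checks directly that $y(1)-x(1)=(1-r_1-r_2)(k_2-k_1)\ge0$ and $y(1)-y(0)=r_2(k_1-k_2)\le0$; note also $x(1)-x(0)=r_1(k_2-k_1)\ge0$, which is precisely why the claim for $x$ can only start from $t=t_{1,1}=1$ and not from $t=0$.

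For the inductive step, the \name{floating} recurrence gives $y(t+1)-y(t)=r_2\bigl(x(t)-y(t)\bigr)\le0$ straight from (ii) at $t$; the \name{fixed} recurrence gives $x(t+1)-x(t)=r_1\bigl(y(t)-y(t-1)\bigr)\le0$ straight from the $y$-part of (iii) at $t$; and for (ii) at $t+1$ I would write, using $(1-r_1)k_1=(1-r_1-r_2)k_1+r_2k_1$,
\[
y(t+1)-x(t+1)=(1-r_1-r_2)\bigl(y(t)-k_1\bigr)+r_2\bigl(x(t)-k_1\bigr),
\]
which is $\ge0$ because $r_1+r_2\le1$ and (i) holds at $t$. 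This closes the induction and yields the lemma.

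The step I expect to be the real obstacle is invariant (ii). One is tempted to control the gap $y(t)-x(t)$ by a contraction argument as in the \name{floating}--\name{floating} analysis, but that is not what makes it work here: the displayed identity shows that $x(t+1)\le y(t+1)$ is forced by the \emph{lower} bound $x(t),y(t)\ge k_1$ (invariant (i)) together with $r_1+r_2\le1$, rather than by anything about the previous gap. Arranging the invariants so they interlock in the right order --- (i) feeds (ii), (ii) feeds the $y$-monotonicity, and the $y$-monotonicity feeds the $x$-monotonicity --- and isolating the single exceptional transition $x(0)\to x(1)$ where $x$ actually increases, is the crux; everything else is the routine recurrence algebra indicated above.
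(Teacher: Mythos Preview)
Your synchronous argument is correct and, frankly, cleaner than the paper's. The paper proves the lemma directly in the asynchronous setting by a rather long case analysis (basis split into $t=0$, $0<t<t_{1,1}$, $t=t_{1,1}$; inductive step split according to whether $s_2(t)-1\ge t_{1,1}$), with the inequalities $x(t)\ge k_1$ and $x(t)\le y(t)$ appearing only implicitly along the way. Your organization --- making (i) $x,y\ge k_1$ and (ii) $x\le y$ explicit invariants, and then reading off $y(t{+}1)-y(t)=r_2(x(t)-y(t))\le0$, $x(t{+}1)-x(t)=r_1(y(t)-y(t{-}1))\le0$, and the identity $y(t{+}1)-x(t{+}1)=(1-r_1-r_2)(y(t)-k_1)+r_2(x(t)-k_1)$ --- makes the interlocking structure transparent and also shows that the hypothesis $r_2>0$ is not actually needed for the monotonicity (the paper only uses it to divide).

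One caveat: your reduction to the synchronous case is too casual. The phrase ``exactly as for the earlier pairwise lemmas'' suggests Lemma~\ref{lemma:fixed_recip:subseq_synch_case}, but that subsequence lemma is specific to \name{fixed}--\name{fixed}; when agent~$2$ is \name{floating}, her action depends on her own previous action, and the asynchronous $y$-sequence need not be a subsequence of the synchronous one. What \emph{does} work is simply running your same invariant induction directly in the asynchronous setting: at a step where only agent~$1$ acts one gets $y(t)-x(t)=(1-r_1)(y(t{-}1)-k_1)\ge0$, where only agent~$2$ acts one gets $y(t)-x(t)=(1-r_2)(y(t{-}1)-x(t{-}1))\ge0$, and the $x$-monotonicity uses $y(t)\le y(s_1(t)-1)$ from the already-established $y$-monotonicity. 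So your method covers the general case, but you should spell out these single-agent steps rather than invoke a reduction that does not apply here.
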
%
\ifthenelse{\NOT \equal{\destin}{IJCAI16}}{
We now prove this lemma.
\begin{proof}
We employ induction.
The basis consists of the following subcases: $t = 0, 0 < t < t_{1, 1}$
and $t = t_{1, 1}$.
For $t = 0$, we have either $y(1) = y(0)$ or $y(1) = (1 - r_2) k_2 + r_2 k_1
\leq k_2 = y(0)$.

For any $0 < t < t_{1, 1}$, we have either $y(t + 1) = y(t)$ or $y(t + 1) = (1 - r_2) y(t) + r_2 k_1
\stackrel{k_1 \leq y(t)}{\leq} (1 - r_2) y(t) + r_2 y(t) = y(t)$.

For $t = t_{1, 1}$, we either have $x(t_{1, 1} + 1) = x(t_{1, 1})$ or $x(t_{1, 1} + 1) = (1 - r_1) k_1 + r_1 y(t_{1, 1})$,
and anyway $x(t_{1, 1}) = (1 - r_1) k_1 + r_1 y(t_{1, 1} - 1)$ by the definition of $t = t_{1, 1}$.
Since $y(t_{1, 1}) \leq y(t_{1, 1} - 1)$ by the induction hypothesis, we have $x(t_{1, 1} + 1) \leq x(t_{1, 1})$.
As to $y$s, we have either $y(t_{1, 1} + 1) = y(t_{1, 1})$ or
\begin{eqnarray*}
y(t_{1, 1} + 1) = (1 - r_2) y(t_{1, 1}) + r_2 x(t_{1, 1})\\
\Rightarrow y(t_{1, 1} + 1) \leq y(t_{1, 1})
\stackrel{r_2 > 0}{\iff} x(t_{1, 1}) \leq y(t_{1, 1}).
\end{eqnarray*}
Either $y(t_{1, 1}) = y(t_{1, 1} - 1)$ or $y(t_{1, 1}) = (1 - r_2) y(t_{1, 1} - 1) + r_2 k_1$.
In the first case,
\begin{eqnarray*}
x(t_{1, 1}) \leq y(t_{1, 1})
\iff (1 - r_1) k_1 + r_1 y(t_{1, 1} - 1) \leq y(t_{1, 1} - 1)\\
\iff (1 - r_1) k_1 \leq (1 - r_1) y(t_{1, 1} - 1),
\end{eqnarray*}
which always holds.
In the second case,
\begin{eqnarray*}
x(t_{1, 1}) \leq y(t_{1, 1})\\
\iff (1 - r_1) k_1 + r_1 y(t_{1, 1} - 1) \leq (1 - r_2) y(t_{1, 1} - 1) + r_2 k_1\\
\iff (1 - r_1 - r_2) k_1 \leq (1 - r_1 - r_2) y(t_{1, 1} - 1),
\end{eqnarray*}
which is true, since $k_1 \leq y(t_{1, 1} - 1)$ and $(r_1 + r_2) \leq 1$.
Thus, the basis is proven.

For the induction step, for any $t > t_{1, 1}$, assume that
$ x(t) \leq x(t - 1) \leq \ldots \leq x(t_{1, 1})$, and
$y(t) \leq y(t - 1) \leq \ldots \leq y(0)$.

We now prove the lemma for $t$.
If $x(t + 1) = x(t)$, then trivially $x(t + 1) \leq x(t)$. Otherwise,
$x(t + 1) = (1 - r_1) k_1 + r_1 y(t) \leq (1 - r_1) k_1 + r_1 y(s_1(t) - 1) = x(t)$,
where the inequality stems from the induction hypothesis.
For $y$s, if $y(t + 1) = y(t)$, then trivially $y(t + 1) \leq y(t)$. Otherwise,
$y(t + 1) = (1 - r_2) y(t) + r_2 x(t) \leq (1 - r_2) y(s_2(t) - 1) + r_2 x(s_2(t) - 1) = y(t)$.
The above inequality stems from the induction hypothesis, if
$s_2(t) - 1 \geq t_{1, 1}$, so that the induction hypothesis for $x$s holds
as well as the one for $y$s.
Otherwise ($s_2(t) \leq t_{1, 1}$), the above inequality is proven as follows:
\begin{eqnarray*}
(1 - r_2) y(t) + r_2 x(t)\\ \leq (1 - r_2) y(s_2(t) - 1) + r_2 x(s_2(t) - 1)\\
\iff\\ (1 - r_2) (y(t) - y(s_2(t) - 1)) \leq r_2 (x(s_2(t) - 1) - x(t))\\ = r_2 (k_1 - x(t))
\end{eqnarray*}
Notice that $(y(t) - y(s_2(t) - 1)) = y(s_2(t)) - y(s_2(t) - 1)
\stackrel{s_2(t) \leq t_{1, 1}}{=} (1 - r_2) y(s_2(t) - 1) + r_2 k_1 - y(s_2(t) - 1)
= r_2 (k_1 - y(s_2(t) - 1))$. Thus,
\begin{eqnarray*}
(1 - r_2) (y(t) - y(s_2(t) - 1)) \leq r_2 (k_1 - x(t))\\
\iff (1 - r_2) r_2 (k_1 - y(s_2(t) - 1)) \leq r_2 (k_1 - x(t))\\
\stackrel{r_2 > 0}{\iff} (1 - r_2) (k_1 - y(s_2(t) - 1)) \leq (k_1 - x(t))\\
\iff x(t) - r_2 k_1 \leq (1 - r_2) y(s_2(t) - 1).
\end{eqnarray*}
To show this, notice that
$s_2(t) \leq t_{1, 1} < t \Rightarrow s_2(t) + 1 \leq t$.
In addition, $2$ acts at time slot $t_{1, 1} - 1$, and therefore
$s_2(t) \geq t_{1, 1} - 1$. Therefore, using the induction hypothesis
for $x$s we obtain
\begin{eqnarray*}
x(t) \stackrel{t \geq s_2(t) + 1 \geq t_{1, 1}} \leq x(s_2(t) + 1)
= (1 - r_1) k_1 + r_1 y(s_2(t)),
\end{eqnarray*}
where the equality stems from the fact that if $s_2(t) + 1 \notin T_1$,
then it is in $T_2$, and therefore $t = s_2(t)$, a contradiction.
Therefore,
\begin{eqnarray*}
x(t) - r_2 k_1 \leq (1 - r_1 - r_2) k_1 + r_1 y(s_2(t))\\
\leq (1 - r_1 - r_2) y(s_2(t)) + r_1 y(s_2(t))\\ = (1 - r_2) y(s_2(t))
\leq (1 - r_2) y(s_2(t) - 1),
\end{eqnarray*}
and the step has been proven.
\end{proof}
}{
The proof is by induction on $t$, using the definitions of reciprocation.
}%
With this lemma, we can prove the following.
\begin{theorem}\label{The:fixed_float_recip}
If $r_2 > 0$ and $r_1 + r_2 \leq 1$, then,
$\lim_{t \to \infty}{x(y)} = \lim_{t \to \infty}{y(t)} = k_1$.
\end{theorem}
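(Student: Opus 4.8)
The plan is to combine the monotonicity supplied by Lemma~\ref{lemma:fixed_float_recip:monot} with a uniform lower bound to get convergence of both sequences, and then to identify their common limit by passing to the limit in the defining recurrences.

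First I would record that $x(t) \ge k_1$ and $y(t) \ge k_1$ for every $t \in T$, by induction on $t$. At $t = 0$ we have $x(0) = k_1$ and $y(0) = k_2 \ge k_1$. For the step: $x(t)$ is either equal to $x(t-1)$ or, by Definition~\ref{def:fix_recip} with $r_1' = 0$ in the pairwise case, equals $(1 - r_1) k_1 + r_1 y(t-1) \ge (1 - r_1) k_1 + r_1 k_1 = k_1$; and $y(t)$ is either equal to $y(t-1)$ or, by Definition~\ref{def:float_recip} with $r_2' = 0$, is the convex combination $(1 - r_2) y(t-1) + r_2 x(t-1)$ of two quantities that are $\ge k_1$. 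Hence both sequences are bounded below by $k_1$.

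Next, Lemma~\ref{lemma:fixed_float_recip:monot} says that $\{x(t)\}_{t \ge t_{1,1}}$ and $\{y(t)\}_{t \ge 0}$ are non-increasing; being bounded below by $k_1$, each converges. Write $L_x = \lim_{t\to\infty} x(t)$ and $L_y = \lim_{t\to\infty} y(t)$. Since $T_1$ and $T_2$ are infinite, I can pass to the limit in the recurrences along these index sets. For $t \in T_1$ with $t > 0$ we have $x(t) = (1 - r_1) k_1 + r_1 y(t-1)$; letting $t \to \infty$ through $T_1$, so that $t - 1 \to \infty$ and $y(t-1) \to L_y$ because the full $y$-sequence converges, gives $L_x = (1 - r_1) k_1 + r_1 L_y$. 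Similarly, for $t \in T_2$ with $t > 0$, $y(t) = (1 - r_2) y(t-1) + r_2 x(t-1)$ yields in the limit $L_y = (1 - r_2) L_y + r_2 L_x$, i.e. $r_2 (L_y - L_x) = 0$.

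Finally, since $r_2 > 0$, this forces $L_x = L_y =: L$, and substituting into $L_x = (1-r_1) k_1 + r_1 L_y$ gives $(1 - r_1) L = (1 - r_1) k_1$. Because $r_2 > 0$ and $r_1 + r_2 \le 1$, we have $r_1 \le 1 - r_2 < 1$, so $1 - r_1 > 0$ and therefore $L = k_1$, which is the claim. The only point needing care is the limit transition in the non-synchronous case: one must use that the \emph{full} sequences converge (not merely the acting-time subsequences) in order to replace $y(t-1)$ and $x(t-1)$ by their limits — but that is precisely what the monotonicity-plus-boundedness argument delivers, so there is no real obstacle beyond this bookkeeping.
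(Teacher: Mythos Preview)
Your proof is correct and follows essentially the same approach as the paper's: both use Lemma~\ref{lemma:fixed_float_recip:monot} together with the easy inductive lower bound $k_1$ to obtain convergence, then pass to the limit in the two defining recurrences and solve the resulting linear system using $r_2 > 0$ and $r_1 < 1$. Your version is a bit more explicit about the lower bound induction and the bookkeeping for non-synchronous times, but the argument is the same.
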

\begin{proof}
We first prove that the convergence takes place, and then find its limit.
For each agent, Lemma~\ref{lemma:fixed_float_recip:monot} implies
that her actions are monotonically non-increasing.
Since the actions are bounded below by $k_1$, which can be easily
proven by induction, they both converge.

To find the limits, notice that in the limit we have
\begin{eqnarray}
(1 - r_1) k_1 + r_1 L_y = L_x\label{eq:recur_x_eq}\\
(1 - r_2) L_y + r_2 L_x = L_y.\label{eq:recur_y_eq}
\end{eqnarray}
From \eqnsref{eq:recur_y_eq}, we conclude that $L_x = L_y$, since $r_2 > 0$.
Substituting this to \eqnsref{eq:recur_x_eq} gives us
$L_x = L_y = k_1$, since $r_2 > 0$ and $r_1 + r_2 \leq 1$ imply $r_1 < 1$.
\end{proof}

\ifthenelse{\NOT \equal{\destin}{IJCAI16}}{

If, unlike the theorem assumes, $r_2 = 0$, then
agent $2$ keeps doing the same thing all the time: $k_2$, and
agent $1$ keeps doing $(1 - r_1) k_1 + r_1 k_2$ all the time when $t > 0$.

If, unlike the theorem assumes, $r_1 + r_2 > 1$, but the rest holds, then
it is still open what happens.
}{
}%

The relation between the sequences of $x$s and $y$s is given
by the following proposition (also covering the case $r_1 + r_2 \geq 1$).
\begin{proposition}\label{The:fix_float_recip:struct}
If $r_1 + r_2 \leq 1$, then for every $t \geq 0: y(t) \geq x(t)$.
If $r_1 + r_2 \geq 1$, then
$y(0) \geq x(0)$.
For every $t > 0$ such that $t \in T_1 \cap T_2$, we have
$y(t - 1) \leq x(t - 1) \Rightarrow y(t) \geq x(t)$.
For any $t \in T_1 \setminus T_2$, we have
$y(t) \geq x(t)$,
and for any $t \in T_2 \setminus T_1$, we have
$y(t - 1) \geq x(t - 1) \Rightarrow y(t) \geq x(t)$,
and $y(t - 1) \leq x(t - 1) \Rightarrow y(t) \leq x(t)$.
\end{proposition}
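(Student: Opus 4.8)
The plan is to derive everything from one auxiliary fact plus a short induction. First I would establish, by simultaneous induction on $t$, the uniform lower bound $x(t)\ge k_1$ and $y(t)\ge k_1$ for all $t\ge 0$ — this is precisely the ``bounded below by $k_1$'' claim already invoked in the proof of Theorem~\ref{The:fixed_float_recip}. It holds at $t=0$ because $x(0)=k_1$ and $y(0)=k_2\ge k_1$, and in the step each of $x(t),y(t)$ is either copied from time $t-1$ (when the corresponding agent does not act) or, recalling that in the pairwise setting $r'_i=0$ so that agent $1$'s \name{fixed} update is $x(t)=(1-r_1)k_1+r_1y(t-1)$ and agent $2$'s \name{floating} update is $y(t)=(1-r_2)y(t-1)+r_2x(t-1)$, is a convex combination of quantities that are $\ge k_1$ by the inductive hypothesis.

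For the case $r_1+r_2\le 1$ I would prove $y(t)\ge x(t)$ by induction on $t$, base case $y(0)=k_2\ge k_1=x(0)$, splitting the step on who acts at $t$. If only agent $1$ acts, $y(t)-x(t)=(1-r_1)\bigl(y(t-1)-k_1\bigr)\ge 0$ by the lower bound. If only agent $2$ acts, $y(t)-x(t)=(1-r_2)\bigl(y(t-1)-x(t-1)\bigr)\ge 0$ by the inductive hypothesis. If both act, $y(t)-x(t)=(1-r_1-r_2)y(t-1)+r_2x(t-1)-(1-r_1)k_1$, and since $1-r_1-r_2\ge 0$ and $x(t-1),y(t-1)\ge k_1$ this is $\ge (1-r_1-r_2)k_1+r_2k_1-(1-r_1)k_1=0$.

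For the case $r_1+r_2\ge 1$, no induction on the relation is needed: because whenever the \name{fixed} agent $1$ acts its action equals $(1-r_1)k_1+r_1y(t-1)$, each assertion follows in one step from the lower bound. The base $y(0)\ge x(0)$ is $k_2\ge k_1$. For $t\in T_1\setminus T_2$, $y(t)-x(t)=(1-r_1)\bigl(y(t-1)-k_1\bigr)\ge 0$. For $t\in T_2\setminus T_1$, $y(t)-x(t)=(1-r_2)\bigl(y(t-1)-x(t-1)\bigr)$, whose sign agrees with that of $y(t-1)-x(t-1)$, giving both implications. For $t\in T_1\cap T_2$ with $y(t-1)\le x(t-1)$, $y(t)-x(t)=(1-r_1-r_2)y(t-1)+r_2x(t-1)-(1-r_1)k_1$, and since now $1-r_1-r_2\le 0$ we have $(1-r_1-r_2)y(t-1)\ge (1-r_1-r_2)x(t-1)$, so this is $\ge (1-r_1)x(t-1)-(1-r_1)k_1=(1-r_1)\bigl(x(t-1)-k_1\bigr)\ge 0$.

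The only mildly delicate points are the ``both act'' subcase for $r_1+r_2\le 1$ and the $t\in T_1\cap T_2$ subcase for $r_1+r_2\ge 1$: there the sign of $1-r_1-r_2$ flips which ingredient does the work — the $k_1$ lower bound in the first situation, the ordering of $y(t-1)$ and $x(t-1)$ in the second — so one must be careful which inequality to apply. Everything else is routine substitution from Definitions~\ref{def:fix_recip} and~\ref{def:float_recip}.
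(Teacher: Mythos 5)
Your proof is correct and follows essentially the same route as the paper's: induction on $t$ with a case split on which agent(s) act, reducing each case to the reciprocation definitions together with the lower bound $x(t),y(t)\ge k_1$. The only differences are cosmetic: you make the $k_1$ lower bound an explicit auxiliary induction (the paper invokes it implicitly, having noted it is easily proven by induction), and you observe that the $r_1+r_2\ge 1$ assertions are one-step implications requiring no induction on the ordering, which the paper nonetheless phrases as an induction.
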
%
\ifthenelse{\NOT \equal{\destin}{IJCAI16}}{
\begin{proof}
Consider the case $r_1 + r_2 \leq 1$ first.
We employ induction. The basis is $t = 0$, where
$y(0) = k_2 \geq k_1 = x(0)$.

For the induction step, we assume the proposition for all values smaller than
$t > 0$ and prove the proposition for $t$.
If only $1$ acts at $t$, then $y(t) = y(t - 1)$ and
$x(t) = (1 - r_1) k_1 + r_1 y(t - 1)$. Therefore,
\begin{eqnarray*}
y(t) \geq x(t)
\iff y(t - 1) \geq (1 - r_1) k_1 + r_1 y(t - 1)\\
\iff (1 - r_1) y(t - 1) \geq (1 - r_1) k_1,
\end{eqnarray*}
which is true.

If only agent $2$ acts at $t$, then $x(t) = x(t - 1)$ and
$y(t) = (1 - r_2) y(t - 1) + r_2 x(t - 1)$. Therefore,
\begin{eqnarray*}
y(t) \geq x(t)\\
\iff (1 - r_2) y(t - 1) + r_2 x(t - 1) \geq x(t - 1)\\
\iff (1 - r_2) y(t - 1) \geq (1 - r_2) x(t - 1),
\end{eqnarray*}
which is true by the induction hypothesis.

If both agents act at $t$, then
$x(t) = (1 - r_1) k_1 + r_1 y(t - 1)$ and
$y(t) = (1 - r_2) y(t - 1) + r_2 x(t - 1)$. Therefore,
\begin{eqnarray*}
y(t) \geq x(t)
\iff (1 - r_2) y(t - 1) + r_2 x(t - 1)\\ \geq (1 - r_1) k_1 + r_1 y(t - 1)\\
\iff (1 - r_1 - r_2) y(t - 1) \geq (1 - r_1) k_1 - r_2 x(t - 1).
\end{eqnarray*}
Since $x(t - 1) \geq k_1$, it is enough to show that
$(1 - r_1 - r_2) y(t - 1) \geq (1 - r_1) x(t - 1) - r_2 x(t - 1) = (1 - r_1 - r_2) x(t - 1)$,
which is true by the induction hypothesis and using the assumption
$r_1 + r_2 \leq 1$.
Thus, the case $r_1 + r_2 \leq 1$ has been proven.

Consider the case $r_1 + r_2 \geq 1$ now.
We employ induction. The basis is $t = 0$, where
$y(0) = k_2 \geq k_1 = x(0)$.

For the induction step, we assume the proposition for all values smaller than
$t > 0$ and prove the proposition for $t$.
The cases where only agent $1$ acts at $t$
and where only $2$ acts at $t$ are shown by analogy to how they are
shown for the case $r_1 + r_2 \geq 1$.
If both agents act at $t$, then we have shown that
\begin{eqnarray}
y(t) \geq x(t)\nonumber\\
\iff (1 - r_1 - r_2) y(t - 1) \geq (1 - r_1) k_1 - r_2 x(t - 1).
\label{eq:Lemma:fix_float_recip:struct:both_act}
\end{eqnarray}
Now, if $y(t - 1) \leq x(t - 1)$, then
$(1 - r_1 - r_2) y(t - 1) \geq (1 - r_1 - r_2) x(t - 1) \geq (1 - r_1) k_1 - r_2 x(t - 1)$,
and from \eqnsref{eq:Lemma:fix_float_recip:struct:both_act} we have $y(t) \geq x(t)$.
%
\end{proof}
}{
The proof employs induction on $t$.
}%

We note that although we have not seen yet whether
Theorem~\ref{The:fixed_float_recip} holds for $r_1 + r_2 > 1$, we know
that neither monotonicity (Lemma~\ref{lemma:fixed_float_recip:monot})
nor $y(t)$ being always at least as large as $x(t)$ or the other way
around holds in this case.
As a counterexample for both of them, consider the case of
$r_2 = 1, 0 < r_1 < 1, k_2 > k_1$.
One can readily prove by induction that for all $t$ we have
$x(2t + 1) >x(2t) = x(2t + 2)$ and $y(2t) > y(2t - 1) = y(2t + 1)$,
and thus both sequences are not monotonic.
In addition, one can inductively prove that
$x(2t + 1) > y(2t + 1), x(2t) < y(2t)$, and therefore no sequence is
always larger than the other one.

Figure~\ref{fig:fixed_float_03_05_09} shows how the actions evolve over time.
The actions seem to converge also in the unproven case $r_1 + r_2 > 1$.
\begin{figure}

\centerline{%
\includegraphics[trim = 35mm 80mm 40mm 90mm, clip, width=0.24\textwidth, height=1.5in]{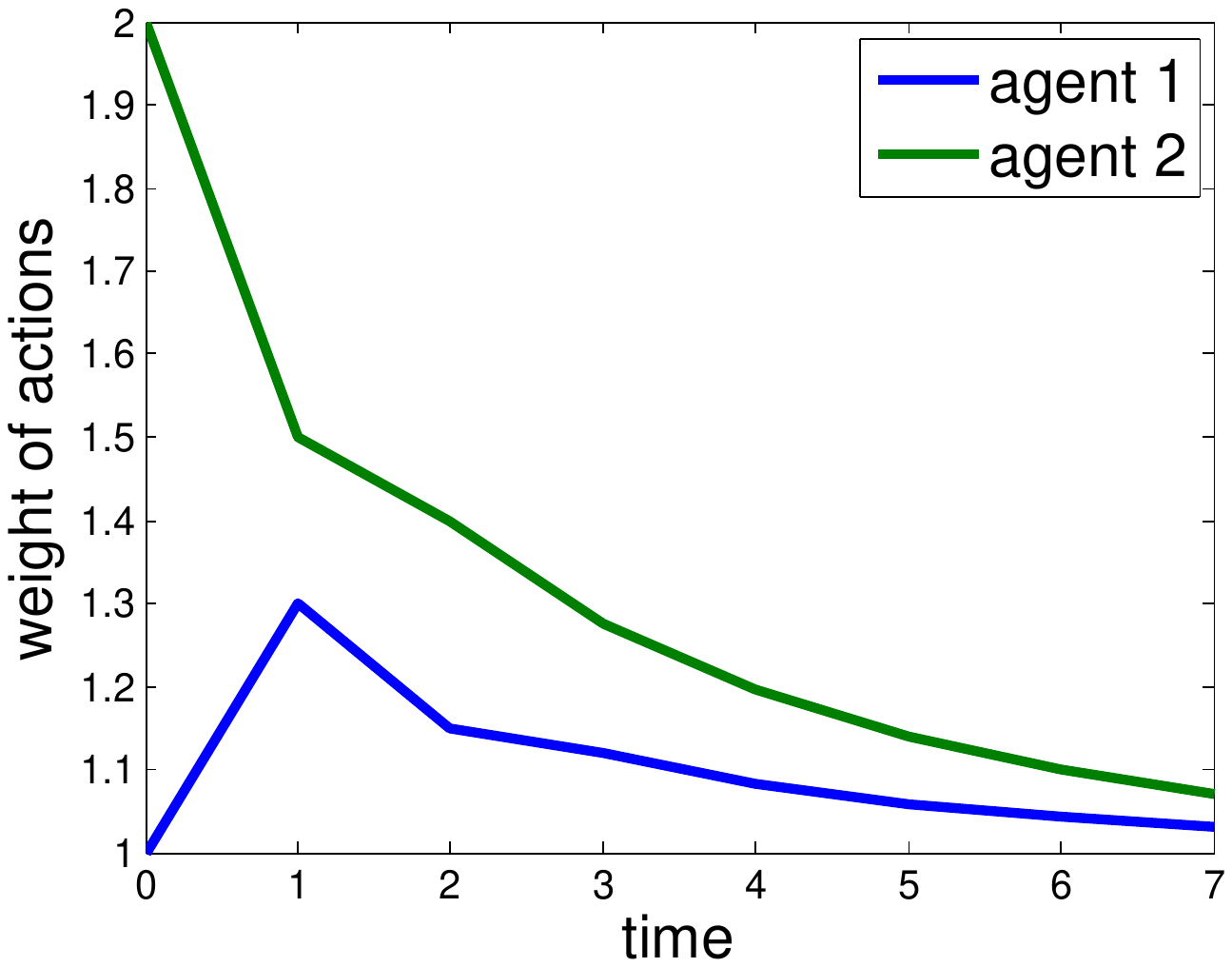}
\includegraphics[trim = 35mm 80mm 40mm 90mm, clip, width=0.24\textwidth, height=1.5in]{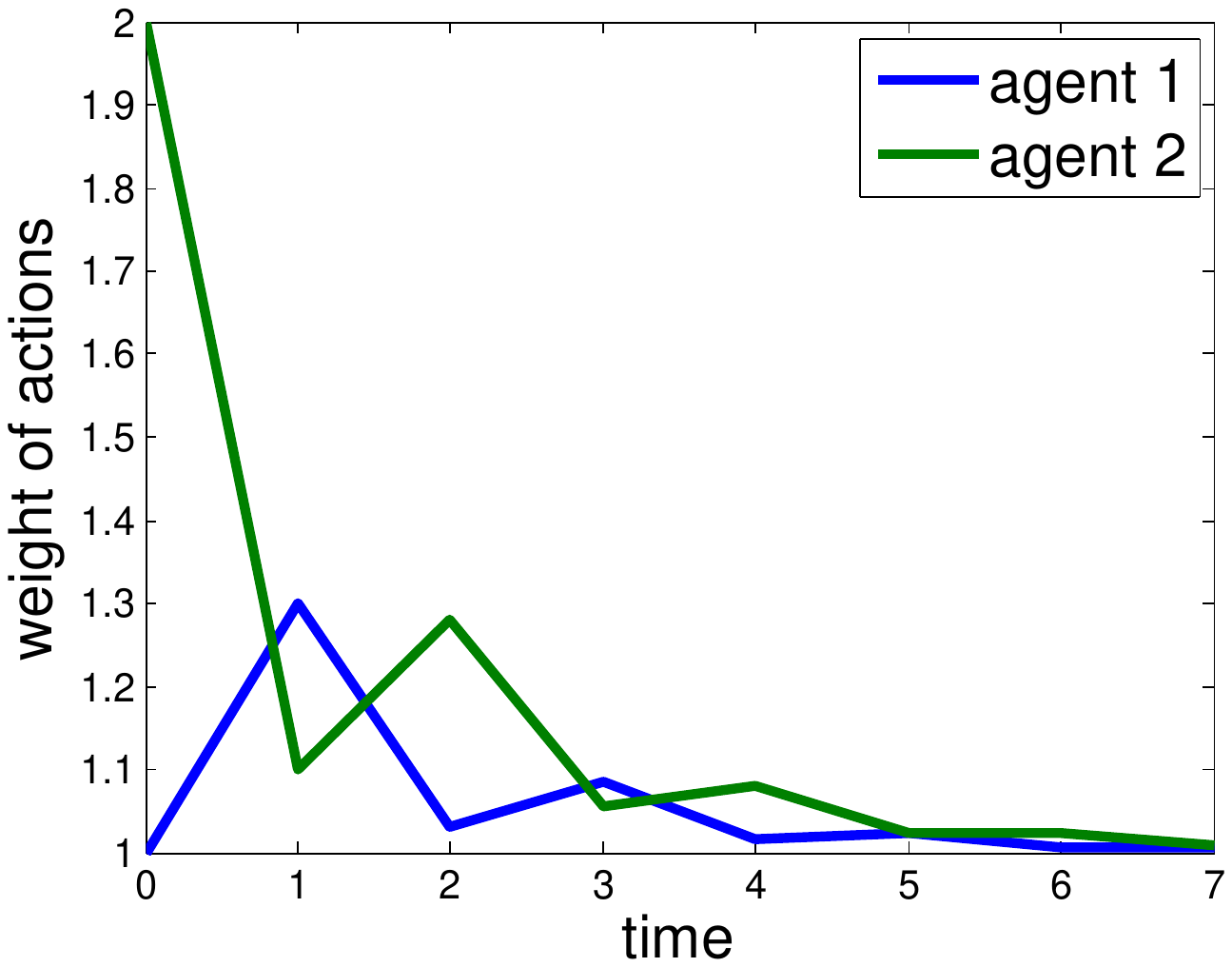}
}%
\protect\caption{Simulation of actions for the synchronous case, with $r_1 + r_2 < 1$,
$r_2 = 0.5$ on the left, and $r_1 + r_2 > 1$, $r_2 = 0.9$ on the right.
This is a \name{fixed} - \name{floating} reciprocation, with $k_1 = 1, k_2 = 2, r_1 = 0.3$.
In the left graph,
agent $1$'s actions are smaller than those of $2$;
agent $1$'s actions decrease after $t = 1$, while those of agent $2$ decrease all the time.
The common limits' value fits the theorem's prediction.
}

\label{fig:fixed_float_03_05_09}
\end{figure}

In the case of the mirroring assumption
that agent $1$ acts according to the \name{floating} reciprocation attitude,
while $2$ acts according to the \name{fixed} reciprocation, we can%
\ifthenelse{\equal{\destin}{IJCAI16}}{
obtain similar results, which are omitted due to lack of space.
}{
obtain the following similar results by analogy. 

\begin{theorem}\label{The:float_fixed_recip}
If $r_1 > 0$ and $r_1 + r_2 \leq 1$, then,
$\lim_{t \to \infty}{x(t)} =\lim_{t \to \infty}{y(t)} = k_2$.
\end{theorem}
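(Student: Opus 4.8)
The plan is to reduce Theorem~\ref{The:float_fixed_recip} to the already-established Theorem~\ref{The:fixed_float_recip} by a sign-reflection combined with a relabeling of the two agents. Concretely, I would set $\hat x(t) \defas -y(t)$, $\hat y(t) \defas -x(t)$, $\hat k_1 \defas -k_2$, $\hat k_2 \defas -k_1$, $\hat r_1 \defas r_2$, $\hat r_2 \defas r_1$, $\hat T_1 \defas T_2$, $\hat T_2 \defas T_1$, and verify that $(\hat x, \hat y)$ is precisely the action process in which (hatted) agent~$1$ has the \name{fixed} attitude and (hatted) agent~$2$ has the \name{floating} attitude, with the hatted parameters. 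This is a one-line check at each action time: negating the \name{fixed} recurrence $y(t) = (1-r_2)k_2 + r_2 x(t-1)$ obeyed by agent~$2$ gives exactly $\hat x(t) = (1-\hat r_1)\hat k_1 + \hat r_1 \hat y(t-1)$, negating the \name{floating} recurrence $x(t) = (1-r_1)x(t-1) + r_1 y(t-1)$ obeyed by agent~$1$ gives exactly $\hat y(t) = (1-\hat r_2)\hat y(t-1) + \hat r_2 \hat x(t-1)$, and the initial conditions $\hat x(0) = \hat k_1$, $\hat y(0) = \hat k_2$ match; one also checks the last-action bookkeeping transports, i.e. $\hat s_1(t) = s_2(t)$ and $\hat s_2(t) = s_1(t)$.

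The reason for negating rather than merely relabeling is that the proof of Theorem~\ref{The:fixed_float_recip} uses the standing assumption $k_2 \geq k_1$ (there the \name{fixed} agent has the \emph{smaller} kindness, and the actions decrease toward it), so a bare swap would leave an instance outside the scope of that theorem's statement. After the reflection we do have $\hat k_1 = -k_2 \leq -k_1 = \hat k_2$, so the hatted data is a legitimate instance. The hypotheses transfer cleanly: $r_1 > 0$ becomes $\hat r_2 > 0$, and $r_1 + r_2 \leq 1$ becomes $\hat r_1 + \hat r_2 \leq 1$. Theorem~\ref{The:fixed_float_recip} then yields $\lim_{t\to\infty}\hat x(t) = \lim_{t\to\infty}\hat y(t) = \hat k_1 = -k_2$, and undoing the substitution gives $\lim_{t\to\infty} x(t) = \lim_{t\to\infty} y(t) = k_2$.

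An alternative — which is presumably what ``by analogy'' is meant to invoke — is to rerun the proof of Theorem~\ref{The:fixed_float_recip} verbatim with the two agents interchanged and the monotonicity directions reversed: prove the mirrored form of Lemma~\ref{lemma:fixed_float_recip:monot}, namely $x(t+1) \geq x(t)$ for all $t \geq 0$ and $y(t+1) \geq y(t)$ for all $t \geq t_{2,1}$, by the same nested case analysis on which agents act; observe that $x(t), y(t) \in [k_1, k_2]$ by a simultaneous induction (each recurrence is a convex combination of earlier values and of $k_1, k_2$); conclude convergence from ``monotone and bounded''; and pass to the limit, where $L_x = (1-r_1)L_x + r_1 L_y$ forces $L_x = L_y$ (as $r_1 > 0$) and $L_y = (1-r_2)k_2 + r_2 L_x$ then forces $L_y = k_2$ (as $r_2 < 1$, since $r_1 > 0$ and $r_1 + r_2 \leq 1$). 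The main obstacle is not any calculation but ensuring the reflection/relabeling genuinely produces a valid instance of the earlier theorem — in particular that the kindness ordering is respected and the asynchronous time sets and last-action values are mapped consistently; once that is checked the result is immediate.
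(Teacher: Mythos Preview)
Your proposal is correct. Your primary route --- the sign-reflection plus relabeling that reduces directly to Theorem~\ref{The:fixed_float_recip} --- is a genuinely different and more economical argument than the paper's. The paper does exactly what you describe as your ``alternative'': it states the mirrored monotonicity lemma (for $t \geq t_{2,1}$, $y(t+1) \geq y(t)$; for $t \geq 0$, $x(t+1) \geq x(t)$) and declares the rest analogous, i.e., it reruns the entire induction and limit computation with the roles swapped and the inequalities reversed. Your reflection trick buys you the result without repeating any of that case analysis, at the cost of one verification that the transformed process is a legitimate instance of the earlier theorem --- which you carry out correctly, including the point that negation (not mere relabeling) is needed to preserve the standing assumption $k_2 \geq k_1$. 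The paper's approach, by contrast, keeps the argument self-contained and makes the intermediate monotonicity statement explicit, which is used elsewhere in the discussion. Either is fine; yours is cleaner.
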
%
\ifthenelse{\NOT \equal{\destin}{IJCAI16}}{
The proof is analogous, with the lemma being
\begin{lemma}
If $r_1 > 0$ and $r_1 + r_2 \leq 1$, then,
for every $t \geq t_{2, 1}: y(t + 1) \geq y(t)$, and for every
$t \geq 0: x(t + 1) \geq x(t)$.
\end{lemma}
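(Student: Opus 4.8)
The plan is to establish this monotonicity lemma by transcribing the proof of Lemma~\ref{lemma:fixed_float_recip:monot} symbol for symbol, interchanging the roles of agents $1$ and $2$ and of \name{fixed} and \name{floating}; the lemma then yields Theorem~\ref{The:float_fixed_recip} in exactly the way Lemma~\ref{lemma:fixed_float_recip:monot} yields Theorem~\ref{The:fixed_float_recip}. Here agent $1$ is \name{floating} and agent $2$ is \name{fixed}, so $\set{y(t)}$ is the \name{fixed} agent's sequence, constant and equal to $k_2$ for $t < t_{2,1}$, and the claim is that $\set{x(t)}$ is non-decreasing from $t = 0$ and $\set{y(t)}$ is non-decreasing from $t = t_{2,1}$. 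I would prove this by induction on $t$, with the base cases $t = 0$, $0 < t < t_{2,1}$, and $t = t_{2,1}$: for $t < t_{2,1}$ one has $y(t) = k_2$, so a \name{floating} update of $x$ reads $x(t+1) = (1-r_1)x(t) + r_1 k_2 \ge x(t)$ because $x(t) \le k_2$; at $t = t_{2,1}$ the first genuine \name{fixed} update of $y$ is monotone by the induction hypothesis for $x$, and the first \name{floating} update of $x$ after $t_{2,1}$ requires $y(t_{2,1}) \ge x(t_{2,1})$ — the mirror of the inequality $x(t_{1,1}) \le y(t_{1,1})$ in the original — which follows by splitting on whether agent $1$ acts at $t_{2,1}$ and using the bound $x(\cdot) \le k_2$ together with $r_1 + r_2 \le 1$.

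For the induction step at $t+1$ I would split on who acts. If only agent $2$ acts, $y(t+1) = (1-r_2)k_2 + r_2 x(t) \ge (1-r_2)k_2 + r_2 x(s_2(t)-1) = y(t)$ directly from the induction hypothesis for $x$. If only agent $1$ acts, $x(t+1) = (1-r_1)x(t) + r_1 y(t)$ while $x(t) = (1-r_1)x(s_1(t)-1) + r_1 y(s_1(t)-1)$; when $s_1(t)-1 \ge t_{2,1}$ the two induction hypotheses finish it, and the delicate subcase is $s_1(t) \le t_{2,1}$, where $y(s_1(t)-1) = k_2$. There one argues as in the original: agent $1$ necessarily acts at time slot $t_{2,1}-1$ (for $t_{2,1} \ge 2$ this slot is not in $T_2$, hence lies in $T_1$; for $t_{2,1}=1$ it equals $0 \in T_1$), so $s_1(t) \ge t_{2,1}-1$ and hence $y(t) \ge y(s_1(t)+1) = (1-r_2)k_2 + r_2 x(s_1(t))$; combining this with $x(s_1(t)) \ge x(s_1(t)-1)$, the bound $x(s_1(t)-1) \le k_2$, and $r_1 + r_2 \le 1$ gives $x(t+1) \ge x(t)$. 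If both agents act, one combines the two updates. This subcase — only the \name{floating} agent $1$ acting while the \name{fixed} agent $2$ has not yet performed its first reciprocal update — is the main obstacle, mirroring the corresponding subcase of Lemma~\ref{lemma:fixed_float_recip:monot}; the rest is routine bookkeeping on first action times.

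Granting the lemma, $\set{x(t)}$ is non-decreasing and $\set{y(t)}$ is non-decreasing from $t_{2,1}$ on, and both are bounded above by $k_2$ by a one-line induction (each update is a convex combination of values $\le k_2$, and $x(0) = k_1 \le k_2 = y(0)$), so $L_x$ and $L_y$ both exist. Passing to the limit in the two recurrences — agent $1$ \name{floating} gives $(1-r_1)L_x + r_1 L_y = L_x$, forcing $L_x = L_y$ since $r_1 > 0$, and agent $2$ \name{fixed} gives $(1-r_2)k_2 + r_2 L_x = L_y$ — and substituting $L_x = L_y$ yields $(1-r_2)(k_2 - L_y) = 0$; since $r_1 > 0$ and $r_1 + r_2 \le 1$ imply $r_2 < 1$, we conclude $L_x = L_y = k_2$, which is Theorem~\ref{The:float_fixed_recip}. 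As in the \name{fixed}--\name{floating} case, no reduction to the synchronous case is needed, because the monotonicity holds for arbitrary $T_1, T_2$.
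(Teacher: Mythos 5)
Your proposal is correct and takes essentially the same approach as the paper: the paper gives no separate proof of this lemma, stating only that it and the accompanying theorem are obtained from Lemma~\ref{lemma:fixed_float_recip:monot} and Theorem~\ref{The:fixed_float_recip} by analogy (swapping the two agents, the \name{fixed} and \name{floating} roles, and the direction of the inequalities, with the bound $x(\cdot)\leq k_2$ replacing $y(\cdot)\geq k_1$), which is precisely the mirror-image transcription you carry out. Your treatment of the base cases and of the delicate subcase $s_1(t)\leq t_{2,1}$ (using that agent $1$ acts at $t_{2,1}-1$, so $y(t)\geq y(s_1(t)+1)=(1-r_2)k_2+r_2x(s_1(t))$, together with $x\leq k_2$ and $r_1+r_2\leq 1$) matches the original argument step for step.
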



If, unlike the theorem assumes, $r_1 = 0$, then
agent $1$ keeps doing the same thing all the time: $k_1$, and
agent $2$ keeps doing $(1 - r_2) k_2 + r_2 k_1$ all the time when $t > 0$.

If, unlike the theorem assumes, $r_1 + r_2 > 1$, but the rest holds, then
it is still open what happens.
}{
}%

Regarding the relation between
$x$s and $y$s, we prove the following, by analogy to how Proposition~%
\ref{The:fix_float_recip:struct} is proven:
\begin{proposition}
If $r_1 + r_2 \leq 1$, then for every $t \geq 0: y(t) \geq x(t)$.
If $r_1 + r_2 \geq 1$, then,
$y(0) \geq x(0)$.
For every $t > 0$, such that $t \in T_1 \cap T_2$, we have
$y(t - 1) \leq x(t - 1) \Rightarrow y(t) \geq x(t)$.
For any $t \in T_2 \setminus T_1$, we have
$y(t) \geq x(t)$,
and for any $t \in T_1 \setminus T_2$, we have
$y(t - 1) \geq x(t - 1) \Rightarrow y(t) \geq x(t)$ and
$y(t - 1) \leq x(t - 1) \Rightarrow y(t) \leq x(t)$.
\end{proposition}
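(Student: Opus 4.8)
The plan is to argue by induction on $t$, paralleling the proof of Proposition~\ref{The:fix_float_recip:struct}, with one systematic change of the a~priori barrier that is used. There agent~$1$ was \name{fixed} with the \emph{smaller} kindness $k_1$ and the process stayed above $k_1$; here agent~$2$ is \name{fixed} with the \emph{larger} kindness $k_2$ (and, consistently with Theorem~\ref{The:float_fixed_recip}, the common limit is $k_2$), so the process stays below $k_2$. Hence the first step is to record that $x(t)\le k_2$ and $y(t)\le k_2$ for every $t\in T$. This follows by a one-line induction: $x(0)=k_1\le k_2$, $y(0)=k_2$, and each update is a convex combination ($x(t)$ of $x(t-1),y(t-1)$ for \name{floating} agent~$1$; $y(t)$ of $k_2,x(t-1)$ for \name{fixed} agent~$2$) of quantities that are $\le k_2$ by the inductive hypothesis; an agent that does not act keeps its value. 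Note this bound uses no assumption on $r_1+r_2$.

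For $r_1+r_2\le 1$ I would then prove $y(t)\ge x(t)$ for all $t$ by induction on $t$. The basis is $y(0)=k_2\ge k_1=x(0)$. For the step, split on which agents act at $t$. If only agent~$1$ acts, $y(t)=y(t-1)$ and $x(t)=(1-r_1)x(t-1)+r_1y(t-1)$ is a convex combination of $x(t-1),y(t-1)$, both $\le y(t-1)$ by the hypothesis, so $x(t)\le y(t-1)=y(t)$. If only agent~$2$ acts, $x(t)=x(t-1)$ and $y(t)=(1-r_2)k_2+r_2x(t-1)$ lies between $x(t-1)$ and $k_2$, so $y(t)\ge x(t-1)=x(t)$ by the barrier. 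If both act, $y(t)\ge x(t)$ is equivalent to $(1-r_2)k_2\ge(1-r_1-r_2)x(t-1)+r_1y(t-1)$, and since $1-r_1-r_2\ge 0$ and $x(t-1),y(t-1)\le k_2$, the right-hand side is at most $(1-r_1-r_2)k_2+r_1k_2=(1-r_2)k_2$.

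For $r_1+r_2\ge 1$ no global induction is needed: each assertion is a single-step transition fed only by the barrier $x(t-1)\le k_2$ (plus the base fact $y(0)=k_2\ge k_1=x(0)$). If $t\in T_2\setminus T_1$, then $x(t)=x(t-1)$ and $y(t)=(1-r_2)k_2+r_2x(t-1)$ lies between $x(t-1)$ and $k_2$, so $y(t)\ge x(t)$ unconditionally. If $t\in T_1\setminus T_2$, then $y(t)=y(t-1)$ and $x(t)$ is a convex combination of $x(t-1),y(t-1)$, hence between them, so $y(t-1)\ge x(t-1)$ forces $x(t)\le y(t-1)=y(t)$ and $y(t-1)\le x(t-1)$ forces $x(t)\ge y(t)$. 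If $t\in T_1\cap T_2$, assume $y(t-1)\le x(t-1)$; then $r_1y(t-1)\le r_1x(t-1)$ and, using $1-r_1-r_2\le 0$ and $x(t-1)\le k_2$, $(1-r_1-r_2)x(t-1)+r_1y(t-1)\le(1-r_2)x(t-1)\le(1-r_2)k_2$, which is exactly $y(t)\ge x(t)$.

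Everything here is routine; the only genuine decision is to replace the lower barrier $k_1$ of Proposition~\ref{The:fix_float_recip:struct} by the upper barrier $k_2$, and to check the degenerate parameter values $r_1\in\{0,1\}$ or $r_2\in\{0,1\}$ directly from the recurrences, where some of the ``$\iff$'' cancellations above collapse to equalities (e.g.\ $r_1=1$ gives $x(t)=y(t-1)$ when only agent~$1$ acts). A shorter alternative avoids all of this: set $x^{*}(t):=-y(t)$, $y^{*}(t):=-x(t)$, $k_1^{*}:=-k_2$, $k_2^{*}:=-k_1$, $r_1^{*}:=r_2$, $r_2^{*}:=r_1$, and swap the two agents' action-time sets; a direct check shows the starred quantities form exactly a \name{fixed}--\name{floating} pair as in Proposition~\ref{The:fix_float_recip:struct} (with $k_1^{*}\le k_2^{*}$ and $r_1^{*}+r_2^{*}=r_1+r_2$), so applying that proposition and undoing the substitution yields all the claims above.
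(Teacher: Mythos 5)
Your proposal is correct and is essentially the paper's intended argument: the paper proves this proposition ``by analogy'' to Proposition~\ref{The:fix_float_recip:struct}, i.e.\ exactly your induction on $t$ with the case split on which agents act, where the only change is replacing the lower barrier $x(t),y(t)\geq k_1$ by the upper barrier $x(t),y(t)\leq k_2$, which you state and verify explicitly. Your reflection shortcut ($x^{*}=-y$, $y^{*}=-x$, swapped kindnesses, coefficients and action-time sets) also checks out and gives the statement directly from Proposition~\ref{The:fix_float_recip:struct}, making the ``by analogy'' claim precise.
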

}%

\ifthenelse{\equal{\destin}{IJCAI16}}{
}{
Using Proposition~\ref{prop:gen_convergence_async_mixed}, we will prove
the following corollary.
\begin{corollary}\label{cor:fixed_float_recip_sync_gen}
Consider pairwise interaction, where agent $i$ is
\name{fixed}, and the other agent $j$ is
\name{floating}, and every agent acts at least once
every $r$ times.
Assume that $r_i < 1$ and $r_j > 0$.
Then, both actions sequences converge geometrically to $k_i$.
\end{corollary}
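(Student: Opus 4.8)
The plan is to obtain the corollary as a special case of the general network convergence result, Proposition~\ref{prop:gen_convergence_async_mixed}, by regarding the pair $\set{i,j}$ as a connected two-vertex interaction graph in which agent $i$ is the \emph{unique} \name{fixed} agent and agent $j$ is \name{floating}. In the pairwise setting we have $r'_i=r'_j=0$ (and each neighbourhood is a singleton, so the $\got/\abs{\Neighb}$ terms would reduce to the neighbour's action in any case), so agent $i$ updates by $x_{i,j}(t)=(1-r_i)k_i+r_i\,x_{j,i}(t-1)$ and agent $j$ by $x_{j,i}(t)=(1-r_j)x_{j,i}(t-1)+r_j\,x_{i,j}(t-1)$; the hypothesis that every agent acts at least once every $r$ steps is exactly the bounded-delay asynchrony condition required by that proposition. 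First I would verify the hypotheses of Proposition~\ref{prop:gen_convergence_async_mixed}: the graph on $\set{i,j}$ is connected, at most one agent is \name{fixed}, the bound $r_i<1$ ensures agent $i$ genuinely blends its kindness with the neighbour's action (rather than merely copying, as $r_i=1$ would force), and $r_j>0$ ensures agent $j$ actually reacts, so that its \name{floating} dynamics are driven by agent $i$ and not frozen at $k_j$. The proposition then yields that $x_{i,j}(t)$ and $x_{j,i}(t)$ both converge geometrically to a common limit $L$.

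Second, I would pin down $L$. Because the limit is common, passing to the limit in agent $i$'s \name{fixed} update gives $L=(1-r_i)k_i+r_i L$, i.e.\ $(1-r_i)L=(1-r_i)k_i$; since $r_i<1$ we may cancel $1-r_i$ and conclude $L=k_i$. (Equivalently, the limiting system $L_x=(1-r_i)k_i+r_iL_y$, $L_y=(1-r_j)L_y+r_jL_x$ has, for $r_j>0$ and $r_i<1$, the unique solution $L_x=L_y=k_i$, exactly as in the proof of Theorem~\ref{The:fixed_float_recip}.) Combining the two steps, both action sequences converge geometrically to $k_i$, which — after renaming the two sequences $x(t),y(t)$ as in the rest of the section — is the claim. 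Worth noting is that this argument makes no use of $r_i+r_j\le 1$, so it also settles the case left open after Theorem~\ref{The:fixed_float_recip}.

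The main obstacle is ensuring that Proposition~\ref{prop:gen_convergence_async_mixed} is stated with enough generality to cover this instance: that it permits the asynchronous bounded-delay schedule, and that its ``at most one \name{fixed} agent'' hypothesis delivers both a common limit and a geometric rate, already for $n=2$. If so, the two-agent pair is literally a special case and nothing further is needed; the only delicate point is the degenerate sub-cases $r_i=1$ (agent $i$ copies, and the common limit need not be $k_i$) and $r_j=0$ (agent $j$ is frozen, so the graph is effectively disconnected), and these are precisely the situations the corollary's hypotheses — and the cancellation $1-r_i\neq 0$ — exclude, so no separate treatment of them is required.
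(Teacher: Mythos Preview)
Your approach is exactly the paper's: invoke Proposition~\ref{prop:gen_convergence_async_mixed} for geometric convergence, then identify the limit via the limiting equations as in Theorem~\ref{The:fixed_float_recip}. Two small corrections to your verification of hypotheses: the proposition requires $r_l'>0$ for every agent and \emph{at least} one \name{fixed} agent (not ``at most one''), so to apply it literally you must carry out the re-split your parenthetical hints at---in the pairwise case $\got_i/\abs{\Neighb(i)}=x_{j,i}$, so shifting part of $r_i$ (resp.\ $r_j$) into $r_i'$ (resp.\ $r_j'$) leaves the dynamics unchanged while making both $r_i',r_j'>0$; and the proposition delivers only existence of the two limits with a geometric rate, not their equality, so the identification $L_x=L_y=k_i$ must rest on solving the two-equation limiting system (your parenthetical, which is precisely what the paper does), not on a presupposed common limit.
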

We omit the proof at this stage.
}%

For all the considered cases, we have the following
\begin{proposition}\label{prop:kind_mono_limit}
If both $L_x$ and $L_y$ exist, then $L_x \leq L_y$.
\end{proposition}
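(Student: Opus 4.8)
The plan is to handle this uniformly by a case analysis over the four attitude combinations considered in Section~\ref{Sec:dynam_interact_pair}, using in each case the closed-form limits already established. Throughout we use the standing convention $k_1 \le k_2$, and we note that $L_x = \lim_t x(t)$, $L_y = \lim_t y(t)$ exist precisely in the regimes covered by Theorems~\ref{The:fixed_recip}, \ref{The:float_recip}, \ref{The:fixed_float_recip}, and \ref{The:float_fixed_recip} (together with their degenerate boundary variants), so it suffices to verify $L_x \le L_y$ in each of those.

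First, for the \name{fixed}--\name{fixed} case, Theorem~\ref{The:fixed_recip} gives $L_x - L_y = \frac{(1-r_1)(1-r_2)(k_1 - k_2)}{1 - r_1 r_2}$, whose sign is nonpositive since $r_1, r_2 \in [0,1]$, $r_1 r_2 < 1$, and $k_1 \le k_2$; this is exactly the computation already displayed after the theorem. For the \name{floating}--\name{floating} case, Theorem~\ref{The:float_recip} gives $L_x = L_y$, so the inequality is trivially an equality. For the \name{fixed}--\name{floating} case (agent $1$ \name{fixed}) Theorem~\ref{The:fixed_float_recip} gives $L_x = L_y = k_1$, and for the mirrored \name{floating}--\name{fixed} case Theorem~\ref{The:float_fixed_recip} gives $L_x = L_y = k_2$; both yield $L_x \le L_y$ immediately. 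The degenerate boundary cases (e.g.\ $r_1 = r_2 = 0$, giving $L_x = k_1 \le k_2 = L_y$, or $r_2 = 0$ in the mixed case, giving $L_x = (1-r_1)k_1 + r_1 k_2 \le k_2 = L_y$) are checked the same way by direct substitution, and whenever the limits fail to exist there is nothing to prove.

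An alternative, more conceptual route that avoids case analysis is to invoke the structural Propositions already proved (Proposition~\ref{The:fix_float_recip:struct} and the analogous statements), which in the regime $r_1 + r_2 \le 1$ give $y(t) \ge x(t)$ for \emph{all} $t$; taking limits preserves the weak inequality. In the \name{fixed}--\name{fixed} synchronous case one can similarly use Lemma~\ref{lemma:fixed_recip:per_seq_behav} to see that the even-indexed $y$'s stay above the even-indexed $x$'s (or simply observe $x(0) = k_1 \le k_2 = y(0)$ and that the alternating envelopes sandwich the common-sign ordering), then pass to the limit. I would present the computational proof as the main argument since it is the shortest and mirrors text already in the paper, and only mention the order-preservation argument as a remark.

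The only mild obstacle is bookkeeping: making sure every regime in which $L_x, L_y$ are asserted to exist is covered, including the boundary values of the reciprocation coefficients that the individual theorems exclude but where limits still happen to exist. Since in each such boundary regime the sequences are eventually constant or two-periodic with an explicitly known value, this is a finite check and no genuine difficulty arises.
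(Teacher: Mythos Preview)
The paper does not give an explicit proof of this proposition; it is stated as a summary (``For all the considered cases, we have the following'') immediately after the case-by-case analysis of \sectnref{Sec:dynam_interact_pair}, with the \name{fixed}--\name{fixed} inequality already noted right after Theorem~\ref{The:fixed_recip} and the remaining cases yielding $L_x = L_y$ as part of their respective theorems. Your case analysis is therefore exactly the intended (implicit) argument, and it is correct.
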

\ifthenelse{\NOT \equal{\destin}{AAMAS16}}{
\section{Alternating Case}\label{Sec:altern}

We consider the interaction of two agents, $1$ and $2$.
Some of the statements in the paper refer only to the synchronous case
($T_1 = T_2 = T$).
All of them can be updated for the alternating case
($T_1$ contains precisely
all the even times and $T_2$ contains zero and all the odd ones)%
\ifthenelse{\NOT \equal{\destin}{IJCAI16}}{,
and we show this in this section.
}{.
}%

Theorem~\ref{The:float_recip} can be extended as follows:
\begin{theorem}\label{The:float_recip_altern}
In the case where agents act alternately, which is when $T_1$ contains precisely
all the even times and $T_2$ contains zero and all the odd ones, they both approach
\begin{eqnarray*}
\frac{1}{2} \left(k_1 + k_2
+ \frac{(r_1 - r_2 - r_1 r_2)}{r_1 + r_2 - r_1 r_2} (k_2 - k_1) \right)\\
= \frac{r_2}{r_1 + r_2 - r_1 r_2} k_1 + \frac{r_1 - r_1 r_2}{r_1 + r_2 - r_1 r_2} k_2.
\end{eqnarray*}
\end{theorem}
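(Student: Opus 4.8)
The plan is to reuse the convergence already established in Theorem~\ref{The:float_recip} (whose convergence half was proved for arbitrary $T_1, T_2$, not merely the synchronous case) and to pin down the common limit $L = L_x = L_y$ by tracking the sum $x(t)+y(t)$, exactly as in the synchronous proof, but now feeding in the alternating action pattern. Concretely, in the alternating case only agent~$1$ acts at positive even times and only agent~$2$ acts at odd times, so with $r_1' = r_2' = 0$ the \name{floating} recurrences read, for $s \geq 1$, $x(2s) = (1-r_1)x(2s-1) + r_1 y(2s-1)$ and $y(2s) = y(2s-1)$, and, for $s \geq 0$, $x(2s+1) = x(2s)$ and $y(2s+1) = (1-r_2)y(2s) + r_2 x(2s)$, with $x(0)=k_1$, $y(0)=k_2$.

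First I would track the difference; let $d(t) = y(t) - x(t)$. The recurrences give $d(2s+1) = (1-r_2)\,d(2s)$ and $d(2s) = (1-r_1)\,d(2s-1)$ for $s\geq 1$, hence by induction $d(2s) = \big((1-r_1)(1-r_2)\big)^{s}(k_2-k_1)$ and $d(2s+1) = (1-r_2)\big((1-r_1)(1-r_2)\big)^{s}(k_2-k_1)$. Under the hypothesis $0 < r_1 + r_2 < 2$ inherited from Theorem~\ref{The:float_recip} one has $(1-r_1)(1-r_2)\in[0,1)$, so $d(t)\to 0$, which re-confirms that $x(t)$ and $y(t)$ converge to a common limit $L$.

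Next I would track $S(t) = x(t) + y(t)$. From the recurrences, $S(2s) = S(2s-1) + r_1 d(2s-1)$ and $S(2s+1) = S(2s) - r_2 d(2s)$, so $S(2s) - S(2s-2) = r_1 d(2s-1) - r_2 d(2s-2) = \big((1-r_1)(1-r_2)\big)^{s-1}(k_2-k_1)\big(r_1(1-r_2) - r_2\big)$. Telescoping from $S(0)=k_1+k_2$ and summing the geometric series, with $\sum_{s\geq 1}\big((1-r_1)(1-r_2)\big)^{s-1} = \frac{1}{1-(1-r_1)(1-r_2)} = \frac{1}{r_1+r_2-r_1 r_2}$, gives $\lim_{s\to\infty} S(2s) = k_1 + k_2 + \frac{(r_1 - r_2 - r_1 r_2)(k_2-k_1)}{r_1 + r_2 - r_1 r_2}$. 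Since $S(t) \to L_x + L_y = 2L$, dividing by $2$ yields the first displayed expression, and a short rearrangement (peeling off $\frac{1}{2}(k_1+k_2)$ and combining over the common denominator $r_1+r_2-r_1 r_2$) recovers the convex-combination form $\frac{r_2}{r_1+r_2-r_1 r_2}k_1 + \frac{r_1 - r_1 r_2}{r_1+r_2-r_1 r_2}k_2$.

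The argument is essentially mechanical once the alternating recurrences are set up correctly; the only place needing care is the half-step bookkeeping (an even step followed by an odd step contributes a $(1-r_2)$ factor, an odd step followed by an even step a $(1-r_1)$ factor), and checking that $1-(1-r_1)(1-r_2)$ simplifies to $r_1+r_2-r_1 r_2$ so that the denominator in the series sum matches the one in the statement. I would also remark, as in Theorem~\ref{The:float_recip}, that the boundary cases $r_1=r_2=0$ (constant sequences) and $r_1=r_2=1$ (a pure swap each step) lie outside $0<r_1+r_2<2$ and are excluded.
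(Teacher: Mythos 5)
Your proposal is correct and follows essentially the same route as the paper's proof: both track the sum $x(t)+y(t)$, use the fact that in the alternating pattern the difference $y(t)-x(t)$ decays by alternating factors $(1-r_1)$ and $(1-r_2)$ (so the even/odd differences form the geometric sequence you wrote), sum the resulting geometric series with ratio $(1-r_1)(1-r_2)$ and denominator $r_1+r_2-r_1r_2$, and then halve using the already-established common limit. The only differences are presentational (you telescope in two-step blocks, the paper splits the infinite sum over even and odd times), so there is nothing substantive to add.
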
%
\ifthenelse{\NOT \equal{\destin}{IJCAI16}}{
\begin{proof}
We now assume the alternating case.
Consider the behavior of $x(t) + y(t)$.
For an even $t > 0$, only agent $1$ acts and we have
\begin{eqnarray*}
x(t) + y(t) = x(t - 1) + y(t - 1) + (r_1) (y(t - 1) - x(t - 1)).
\end{eqnarray*}
For an odd $t$, only $2$ acts and we have
\begin{eqnarray*}
x(t) + y(t) = x(t - 1) + y(t - 1) + (- r_2) (y(t - 1) - x(t - 1)).
\end{eqnarray*}
And therefore, we have
\begin{eqnarray*}
\Rightarrow \lim_{t \to \infty}{x(y) + y(t)} = k_1 + k_2
+ \sum_{t = 0}^{\infty}{(- r_2) (y(2t) - x(2t))}\\
+ \sum_{t = 0}^{\infty}{(r_1) (y(2t + 1) - x(2t + 1))}\\
\stackrel{\eqnsref{eq:recur_diff_eq_one}}{=} k_1 + k_2
- (r_2)\cdot \sum_{t = 0}^{\infty}{((1 - r_1)^{t} (1 - r_2)^t) (k_2 - k_1)}\\
+ (r_1)\cdot \sum_{t = 0}^{\infty}{((1 - r_1)^t (1 - r_2)^{t + 1}) (k_2 - k_1)}\\
= k_1 + k_2\\
+ \sum_{t = 0}^{\infty}{(1 - r_1)^t (1 - r_2)^t (r_1 - r_2 - r_1 r_2)) (k_2 - k_1)}\\
\underbrace{\stackrel{\text{geom. series}}{\to}}_{t \to \infty} k_1 + k_2\\
+ (r_1 - r_2 - r_1 r_2)(k_2 - k_1)\inv{1 - (1 - r_1)(1 -r_2)}\\
= k_1 + k_2
+ \frac{(r_1 - r_2 - r_1 r_2)}{r_1 + r_2 - r_1 r_2} (k_2 - k_1).
\end{eqnarray*}

Since we have shown that both limits exist and are equal, each equals
to half of $k_1 + k_2
+ \frac{(r_1 - r_2 - r_1 r_2)}{r_1 + r_2 - r_1 r_2} (k_2 - k_1)$.
\end{proof}
}{
The idea of the proof is proving that $x(t) + y(t)$ approach
$k_1 + k_2
+ \frac{(r_1 - r_2 - r_1 r_2)}{r_1 + r_2 - r_1 r_2} (k_2 - k_1)$.
}%

\ifthenelse{\NOT \equal{\destin}{IJCAI16}}{
If, unlike the theorem assumes, $r_1 + r_2 = 2$, then
since $r_1 + r_2 = 2 \iff r_1 = r_2 = 1$,
in the alternating case, agent $2$ plays at time $1$ the strategy
of agent $1$ at time $0$, which is $k_1$, and since then, each player plays it.
}{
}%

%
\ifthenelse{\equal{\content}{GT} \OR \equal{\content}{Two_agents} \OR \equal{\content}{Mult_agents} \OR \equal{\content}{All}}{
Proposition~\ref{The:float_recip_coeff:max_util_pair} can be generalized to
the alternating case as well, such that the statement stays true for
both the synchronous and the alternating cases.%
\ifthenelse{\NOT \equal{\destin}{IJCAI16}}{
The proof for
the alternating case follows.
\begin{proof}
Let us prove for agent $1$. We first express $1$'s utility and then maximize it. Assume first that $0 < r_2 < 1$, and therefore
$0 < r_1 + r_2 < 2$. Therefore, from Theorem~\ref{The:float_recip_altern},
\begin{eqnarray*}
\lim_{p \to \infty}{v(t_{1, p})} = \lim_{p \to \infty}{w(t_{1, p})}\\
= \frac{1}{2} \left(k_1 + k_2
+ \frac{(r_1 - r_2 - r_1 r_2)}{r_1 + r_2 - r_1 r_2} (k_2 - k_1) \right)\\
\Rightarrow
u_{1} = (1 - \beta_1) \frac{1}{2} \left(k_1 + k_2
+ \frac{(r_1 - r_2 - r_1 r_2)}{r_1 + r_2 - r_1 r_2} (k_2 - k_1) \right).
\end{eqnarray*}

To find a maximum point of this utility as function of $r_1$, we
differentiate:
\begin{eqnarray*}
\frac{\partial (u_{1})}{\partial (r_1)}
= \ldots = \frac{(1 - \beta_1) (k_2 - k_1) r_2 - (r_2)^2}{(r_1 + r_2 - r_1 r_2)^2}
\end{eqnarray*}
Therefore, the derivative is zero either for all $r_1$ or for none. In any
case, the maximum is attained at an endpoint, so we check the utility at
those points now.

At $r_1 = 0$, it is $(1 - \beta_1) k_1$.
At $r_1 = 1$, it is $(1 - \beta_1) (r_2 k_1 + (1 - r_2) k_2)$. Since
$r_2 k_1 + (1 - r_2) k_2$ is a convex combination of $k_1$
and $k_2$, this is at least $k_1$. Thus, agent $1$ prefers $r_1 = 1$
if and only if $1 - \beta_1 \geq 0$, and we
have proven the proposition for agent $1$ choosing, when $r_2 > 0$.

If $r_2 = 0$, then we notice that $r_1 = r_2 = 0$ results in
$u_{1} = k_2 - \beta_1 k_1$, which is the largest possible utility,
so choosing $r_1 = 0$ is optimal here, and we
have proven the proposition for agent $1$ choosing.

The case of agent $2$ choosing $r_2$ is proven by analogy, besides
the case $r_1 = 0$. In this case, for $r_2 = 1$ agent $2$'s utility is
$k_1 - \beta_2 k_1$. Since for $r_1 = 0$ agent $1$ will make an action
of $k_1$ regardless $r_2$, this is the best possible utility for $2$ here,
since the performed action is minimized.
\end{proof}
}{
The idea of the proof is expressing the utility and maximizing
it by  differentiating. We obtain that the derivative is zero either
always or never, so we check the endpoints.
}%

\ifthenelse{\NOT \equal{\destin}{IJCAI16}}{
The discussion in \sectnref{Sec:sw_maxim:converg_opt} for
the \name{floating} reciprocation remains true, when we employ
the generalization of Proposition~\ref{The:float_recip_coeff:max_util_pair}
for the alternating case.
}{
}%
}{
}%

}{
}%
\section{Multi-Agent Interaction}\label{Sec:dynam_interact_interdep}%
\ifthenelse{\equal{\content}{Process} \OR \equal{\content}{GT} \OR \equal{\content}{All}}{

We now analyze the general interdependent interaction, when agents interact with many agents.
}{

We analyze reciprocation in the above model.
}%
%
To formally discuss the actions after the interaction
has settled down, we consider the limits (if exist)\footnote{Agent $i$ acts at the times in $T_i = \set{t_{i, 0} = 0, t_{i, 1}, t_{i, 2}, \ldots}$.}
$\lim_{p \to \infty}{\imp_{i, j}(t_{1, p})}$,
and $\lim_{t \to \infty}{x_{i, j}(t)}$, for agents $i$ and $j$.
Since the sequence $\set{x_{i,j}(t)}$ is $\set{\imp_{i, j}(t_{1, p})}$ with finite repetitions,
the limit $\lim_{p \to \infty}{\imp_{i, j}(t_{1, p})}$ exists if and only if $\lim_{t \to \infty}{x_{i, j}(t)}$ does.
If they exist, they are equal.
Denote $L_{i, j} \defas \lim_{t \to \infty}{x_{i, j}(t)}$.

\anote{One idea is to find a reduction to the the problem of two agents.
Another idea is to try to generalize the case of two agents.}

\ifthenelse{\NOT \equal{\destin}{IJCAI16}}{
\subsection{Convergence}
}{
}%

\ifthenelse{\equal{\destin}{IJCAI16}}{
}{
We show that in the synchronous case,
for every two agents $i, j$ such that $(i, j) \in E$, actions $x_{i, j}(t)$ converge
to a strictly positive combination of all the kindness values.
The rate of convergence is geometric.
}%

We first provide general convergence results, and then
we find the common limit for the case when at most one agent is
\name{fixed} and synchronous in Theorem~\ref{The:gen_convergence}.
We finally use simulations to analyze the limits in other cases.
In this section, the ambivalent case of $r_i + r_i' = 1$ is taken to be \name{floating}.

First, we have convergence for the case of \name{floating}
agents.
\begin{proposition}\label{prop:gen_convergence_async_float}
Consider a connected interaction graph,
where all agents are \name{floating} and for every agent $i$,
$r_i + r_i' < 1$.
Then,
for all pairs of agents $i \neq j$ such that $(i, j) \in E$, the limit $L_{i, j}$ exists;
all these limits
are equal to each other.
\end{proposition}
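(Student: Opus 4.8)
The plan is to cast the floating dynamics as a (possibly inhomogeneous) product of row-stochastic matrices and then invoke the Perron--Frobenius theorem, with a weak-ergodicity argument to cover the asynchronous case. First I would stack the quantities $x_{i,j}(t)$ over the directed edges $(i,j)\in E$ into a vector $v(t)$, so that $v(t)=M_t\,v(t-1)$: the row of $M_t$ indexed by $(i,j)$ is the identity row $e_{(i,j)}$ when $i$ does not act at time $t$, and otherwise, reading off Definition~\ref{def:float_recip}, carries weight $1-r_i-r_i'$ on coordinate $(i,j)$, weight $r_i+\frac{r_i'}{d(i)}$ on $(j,i)$, and weight $\frac{r_i'}{d(i)}$ on every other $(k,i)$ with $k\in\Neighb(i)$. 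Each such row is nonnegative and sums to $1-r_i-r_i'+r_i+r_i'=1$, so every $M_t$ is row-stochastic, and $r_i+r_i'<1$ makes its diagonal entries bounded below by $1-\max_i(r_i+r_i')>0$. Since $x_{i,j}(0)=k_i$, we have $v(t)=M_t\cdots M_1\,v(0)$ with $v(0)=(k_i)_{(i,j)\in E}$.

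In the synchronous case $M_t\equiv M$ is fixed, and I would show $M$ is irreducible by viewing it as a Markov chain on directed edges: from $(i,j)$ one step reaches every $(k,i)$ with $k\in\Neighb(i)$, so $m$ steps reach precisely the states $(c_m,c_{m-1})$ coming from walks $i=c_0,c_1,\dots,c_m$ in $G$; connectedness of $G$ then makes every target directed edge reachable, and the positive diagonal lets one adjust the walk length (the $r_i'$-term supplies the positive weight on the required transitions). Irreducibility together with the positive diagonal, hence aperiodicity, yield by Perron--Frobenius that $M^t\to\mathbf 1\,\pi^{\mathsf T}$ for the unique stationary distribution $\pi$, so $v(t)\to(\pi^{\mathsf T}v(0))\,\mathbf 1$: all $L_{i,j}$ exist and equal the common value $\sum_{(i,j)\in E}\pi_{(i,j)}k_i$.

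For the asynchronous case Perron--Frobenius does not apply directly, and I would instead argue weak ergodicity of the product $M_t\cdots M_1$. Put $W(t)=\max_s v_s(t)$ and $w(t)=\min_s v_s(t)$; row-stochasticity makes $W$ non-increasing and $w$ non-decreasing, so both converge, and it suffices to show $W(t)-w(t)\to 0$. Since every agent acts infinitely often and $G$ is connected, over any sufficiently long finite window each coordinate of $v$ becomes a genuine mixture whose weight eventually reaches the current minimizing coordinate, which forces a strict contraction of $W(t)-w(t)$ over that window; iterating pushes the gap to $0$, so $v(t)\to c\,\mathbf 1$ with $c=\lim W(t)=\lim w(t)$, which is the claim. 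This is a Dobrushin-coefficient / consensus argument in the spirit of~\cite{TsitsiklisBertsekasAthans1986,BlondelHendrickxOlshevskyTsitsiklis2005}, which I would cite for the precise convergence statement.

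The main obstacle is this asynchronous step: since the action sets $T_i$ are only assumed infinite, the gaps between consecutive action times of an agent need not be bounded, so the per-window contraction factor could in principle tend to $1$ and its infinite product need not vanish. Making the argument rigorous therefore requires either a careful ``the maximum propagates'' analysis --- tracking the set of near-maximal coordinates and using connectivity together with the positivity of the weights to show this set must strictly shrink within finitely many steps --- or a precise matching of the model to the hypotheses of an off-the-shelf consensus theorem. The synchronous Perron--Frobenius part, by contrast, is essentially routine once $M$ is written down and seen to be irreducible and aperiodic.
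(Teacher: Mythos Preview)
Your approach and the paper's agree in spirit but differ in depth: the paper's entire proof is a one-line citation of Theorem~2 in~\cite{BlondelHendrickxOlshevskyTsitsiklis2005} (with~\cite{Moreau2005,TsitsiklisBertsekasAthans1986} mentioned alongside), whereas you reconstruct the argument that underlies that citation. Your row-stochastic matrix setup and the synchronous Perron--Frobenius step are exactly what the paper writes out later, with the same matrix, in the proof of Theorem~\ref{The:gen_convergence}, so you have in effect anticipated that proof. For the asynchronous case you correctly identify that unbounded inter-action gaps block a naive contraction argument and sensibly fall back on the same consensus references the paper cites; that is precisely where the paper outsources the work too.

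One caveat worth noting: your irreducibility argument relies on the $r_i'$-term to supply transitions $(i,j)\to(k,i)$ for $k\neq j$, i.e., implicitly on $r_i'>0$. The proposition as stated does not assume this (unlike Theorem~\ref{The:gen_convergence} and Proposition~\ref{prop:gen_convergence_async_mixed}, which do). If all $r_i'=0$ the directed-edge chain decouples into independent $2$-cycles $\{(i,j),(j,i)\}$, and the ``all limits equal'' conclusion can fail. This is a wrinkle in the stated hypotheses rather than in your argument, and the paper's bare citation does not address it either.
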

\begin{proof}
Follows directly
from \cite[Theorem~$2$]{BlondelHendrickxOlshevskyTsitsiklis2005}.
This article and similar articles on multiagent
coordination~\cite{Moreau2005,TsitsiklisBertsekasAthans1986}
prove convergence when all agents are
\name{floating}.
\end{proof}

We now show convergence, when
some agents are \name{fixed}.
\begin{proposition}\label{prop:gen_convergence_async_mixed}
Consider a connected interaction graph,
where for all agents $i$, $r_i' > 0$.
Assume that at least one agent employs the \name{fixed} attitude
and every agent acts at least once every $q$ times,
for a natural $q > 0$.
Then,
for all pairs of agents $i \neq j$ such that $(i, j) \in E$, the limit $L_{i, j}$ exists.
The convergence is geometrically fast.
\end{proposition}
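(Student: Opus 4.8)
The plan is to recast the dynamics as a linear iteration driven by row‑substochastic matrices in which the kindness of each \name{fixed} agent behaves like a frozen, \emph{absorbing} coordinate, and then to exhibit a uniform window‑contraction.

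Concretely, index coordinates by the directed edges $(i,j)\in E$ and adjoin one absorbing coordinate $s_m$ holding the constant $k_m$ for every \name{fixed} agent $m$ (every $m$ with $r_m+r_m'<1$; recall $r_m+r_m'=1$ counts as \name{floating}). Collect these into a vector $\hat v(t)$ with $\hat v(t)_{(i,j)}=x_{i,j}(t)$ and $\hat v(t)_{s_m}\equiv k_m$. By Definitions~\ref{def:fix_recip} and~\ref{def:float_recip}, a one‑line induction gives $\hat v(t)=P_t\hat v(t-1)$ with $P_t$ row‑stochastic, fixing every $s_m$, depending only on \emph{which} agents act at $t$ (so $P_t$ ranges over a finite set), and whose row $(i,j)$ — when $i$ acts — places weight $r_i$ on $(j,i)$, weight $r_i'/\abs{\Neighb(i)}$ on each $(m,i)$ with $m\in\Neighb(i)$, and the residual $1-r_i-r_i'$ on $(i,j)$ itself if $i$ is \name{floating} or on $s_i$ if $i$ is \name{fixed}; if $i$ does not act, row $(i,j)$ is $e_{(i,j)}$. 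Restricting to the non‑absorbing block, the vector $y(t):=(x_{i,j}(t))_{(i,j)\in E}$ satisfies $y(t)=A_ty(t-1)+c_t$ with $A_t$ substochastic (mass leaks only through \name{fixed} agents' rows) and $\|c_t\|_\infty\le\max_{m\text{ fixed}}(1-r_m-r_m')\abs{k_m}$. Hence it suffices to show $A_t\cdots A_1\to0$ geometrically and that $\hat v(t)$ itself converges.

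The technical core is a uniform window‑contraction: there exist $L\le q(\mathrm{diam}(G)+1)$ and $\rho\in(0,1)$, depending only on $G$, the coefficients and $q$, with $\|A_{t+L}\cdots A_{t+1}\|_\infty\le\rho$ for all $t$. I would prove this by tracking $\mathbf w^{(0)}=\mathbf 1$, $\mathbf w^{(s)}=A_{t+s}\mathbf w^{(s-1)}$ (so $\|\mathbf w^{(s)}\|_\infty$ is non‑increasing) and showing it drops below $\rho$ within $L$ steps. Fix a \name{fixed} agent $m_0$: as soon as $m_0$ acts — within $q$ steps — every coordinate $(m_0,\cdot)$ becomes $\le r_{m_0}+r_{m_0}'<1$, and since such a coordinate is only ever re‑averaged afterwards it stays bounded by that value. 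Now propagate by BFS layers from $m_0$: a coordinate $(i,\cdot)$ with $\mathrm{dist}(i,m_0)=\ell+1$ has a neighbour $i'$ at distance $\ell$, and when $i$ next acts the summand $\tfrac{r_i'}{\abs{\Neighb(i)}}\mathbf w_{(i',i)}$ — present precisely because $r_i'>0$ and $i'\in\Neighb(i)$ — pulls $\mathbf w_{(i,\cdot)}$ strictly below the current bound by a fixed multiplicative amount; after a further $q$ steps all layer‑$(\ell+1)$ coordinates are controlled. After $\le\mathrm{diam}(G)$ layers every coordinate is $\le\rho$ for an explicit $\rho<1$. This uses connectedness, $r_i'>0$ for every agent, the existence of a \name{fixed} agent, and the $q$‑bounded action gaps, with all constants manifestly independent of $t$.

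It follows that $A_t\cdots A_1\to0$ geometrically, so $y(t)$ converges iff the iterate $z(t)$ with $z(0)=0$, $z(t)=A_tz(t-1)+c_t$ does (they differ by $O(\rho^{t/L})$); and, in the absorbing‑chain picture, $\hat v(t)=(P_t\cdots P_1)\hat v(0)$, where the backward products of row‑stochastic matrices drawn from a finite set, enjoying the uniform contraction on the transient block just established, converge row‑wise at a geometric rate — a standard property of inhomogeneous products of stochastic matrices, the same circle of ideas invoked for the \name{floating} case in Proposition~\ref{prop:gen_convergence_async_float}; in the synchronous case it is just $\hat v(t)=P^t\hat v(0)$ with the transient block of $P$ of spectral radius $<1$ by Perron--Frobenius. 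Either way $x_{i,j}(t)\to\sum_{m\text{ fixed}}\beta_{(i,j),m}k_m$ with $\beta_{(i,j),m}=\lim_t(P_t\cdots P_1)_{(i,j),s_m}\ge0$ summing to $1$, so only \name{fixed} agents' kindness survives. The main obstacle is the window‑contraction step: making the BFS/reachability argument yield one pair $(L,\rho)$ valid uniformly in $t$ despite asynchrony — each agent along the routing path merely acting ``within the next $q$ steps'' — and the bookkeeping that bounds, once attained, are never undone by later re‑averaging. The matrix/recursion correspondence and the passage from decay of $A_t\cdots A_1$ to convergence of the whole iterate are routine.
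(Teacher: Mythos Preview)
Your approach is essentially the paper's: both write the dynamics as $y(t)=A_t\,y(t-1)+c_t$ with row-substochastic $A_t$ indexed by directed edges, and both prove $\prod A_t\to 0$ geometrically by starting from the row-sum deficit at a \name{fixed} agent and propagating it through the connected graph using $r_i'>0$ together with the $q$-bounded action gaps. Your absorbing-coordinate packaging and the explicit BFS bookkeeping are cosmetic refinements of the same argument; the paper is terser precisely at the window-contraction step you identify as the main obstacle, but the underlying idea is identical.
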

\begin{proof}
We express 
how each action depends on the actions in the previous time
in matrix $A(t) \in \realsP^{\abs{E} \times \abs{E}}$,
which, in the synchronous case, is defined as follows:
\begin{equation}
A(t)((i, j), (k, l)) \defas
\begin{cases}
(1 - r_i - r_i') & \text{if } k = i, l = j;\\
r_i + r_i' \frac{1}{\abs{\outNeighb(i)}} & \text{if } k = j, l = i;\\
r_i' \frac{1}{\abs{\outNeighb(i)}} & \text{if } k \neq j, l = i;\\
0 & \text{otherwise},
\end{cases}
\label{eq:dynam_mat_def}
\end{equation}
where the first line is missing
	for the \name{fixed} agents,
since for them, own behavior does not matter.
If, for
each time $t \in T$, the column
vector $\vec{p(t)} \in \realsP^{\abs{E}}$ describes the actions at time $t$,
in the sense that its $(i, j)$th coordinate contains $x_{i, j}(t)$ (for $(i,j)\in E$), then
we have $\vec{p}(t + 1) = A(t + 1) \vec{p}(t) + \vec{k'}$,
where $\vec{k'}$ is the relevant kindness vector, formally defined as
\begin{equation*}
k'(t)((i, j)) \defas
\begin{cases}
(1 - r_i - r_i') k_i & \text{if } i \text{ is \name{fixed}};\\
0 & \text{otherwise}.
\end{cases}
\end{equation*}
In a not necessarily synchronous case,
only a subset of agents act at a given time~$t$. For an acting
agent $i$, every $A(t)((i, j), (k, l))$ is defined as in the synchronous
case.
For a non-acting agent~$i$, we define
\begin{equation}
A(t)((i, j), (k, l)) \defas
\begin{cases}
1 & \text{if } k = i, l = j;\\
0 & \text{otherwise}.
\end{cases}
\label{eq:dynam_mat_def_async}
\end{equation}
The kindness vector is defined as
\begin{equation*}
k'(t)((i, j)) \defas
\begin{cases}
(1 - r_i - r_i') k_i & \text{if } i \text{ is \name{fixed} and acting};\\
0 & \text{otherwise}.
\end{cases}
\end{equation*}
By induction, we obtain
$\vec{p}(t) = \prod_{t' = 1}^t{A(t')} \vec{p}(0) + \sum_{\vec{k'} \in K}{\set{ \paren{\sum_{l \in S_{\vec{k'}}(t)}\prod_{t' = l }^t{A(t')}} \vec{k'} }}$,
where $K$ is the set of all possible kindness vectors and $S_{\vec{k'}}(t)$
is a set of the appearance times of $\vec{k'}$, which are at most $t$.

We aim to show that $\vec{p(t)}$ converges.
First, defining $r_i(M)$ to be the sum of the $i$th row of $M$,
note \cite[\eqns{$3$}]{ButlerSiegel2013}, namely
\begin{eqnarray}
r_i(AB) = \sum_{j = 1}^n{\sum_{k = 1}^n{a_{i,j} b_{j, k}}}
= \sum_{j = 1}^n{a_{i, j} r_j(B)}.
\label{eq:sum_row_prod}
\end{eqnarray}
Since the sum of every row in any $A(t)$ is at most $1$, we conclude
that if $B \leq \beta C$, $C_{i, j} \equiv 1$, then also $A(t) B \leq \beta C$.

We now prove that an upper bound of the form $\beta C$ on the entries
of $\prod_{t = p}^q{A(t)}$ converges to zero geometrically.
We have just shown that this bound never increases.
First, $A(p) \leq C$, yielding the bound in the beginning.
Now, let $i$ be a \name{fixed} agent, and assume he acts at time $t$.
Thus, each row in $A(t)$ which relates to the edges entering $i$
sums to less than $1$, and from \eqnsref{eq:sum_row_prod} we gather that
the upper bound on the appropriate rows in $A(t) B$ decreases relatively
to the bound on $B$ by some constant ratio. Since the graph is connected,
for all agents $i$, $r_i' > 0$, and every agent acts every $q$ times, we
will have, after enough multiplications, that the bound on all the entries
will have decreased by a constant ratio.

Every agent acts at least once every $q$ times,
so we gather that for some $q' > 0$, every $q'$ times, the product
of matrices becomes at most a given fraction of the product $q'$ times
before. This implies a geometric convergence of $\prod_{t' = 1}^t{A(t')}$.
As for
$\sum_{l \in S_{\vec{k'}}(t)}\prod_{t' = l }^t{A(t')}$,
we have proven an exponential upper bound, thus
$\sum_{l \in S_{\vec{k'}}(t)}\prod_{t' = l }^t{A(t')}
\leq \sum_{l \in S_{\vec{k'}}(t)}{ \alpha^{\brackt{\frac{t - l + 1}{q'}}} C }
\leq \sum_{l \in S_{\vec{k'}}(t)}{ \alpha^{\frac{t - l}{q'}} C }
= \alpha^{\frac{t}{q'}} \paren{\sum_{l \in S_{\vec{k'}}(t)}{ \alpha^{\frac{- l}{q'}} }} C
\stackrel{\leq}{\text{\small geom. seq.}} \frac{\alpha^{\frac{t}{q'}} - 1}{\alpha^{\frac{1}{q'}} - 1} C$,
proving a geometric convergence of the series
$\sum_{l \in S_{\vec{k'}}(t)}\prod_{t' = l }^t{A(t')}$.
Therefore, $\vec{p(t)}$ converges, and it does so geometrically fast.
\end{proof}

As an immediate conclusion of this proposition, we can finally generalize%
\ifthenelse{\NOT\equal{\content}{Mult_agents}}{
Theorem~\ref{The:fixed_float_recip}
}{
The convergence theorem for pairwise interaction of a \name{fixed}
and a \name{floating} agent.
}%
to the case $r_1 + r_2 > 1$
as follows.
\begin{corollary}
Consider pairwise interaction, where one agent $i$ employs
\name{fixed} reciprocation and the other agent $j$ employs
the \name{floating} one,
and every agent acts at least once every $q$ times.
Assume that $0 < r_i < 1$ and $r_j > 0$.
Then, both limits exist and are equal to $k_i$.
The convergence is geometrically fast.
\end{corollary}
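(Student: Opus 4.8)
The plan is to obtain this directly from Proposition~\ref{prop:gen_convergence_async_mixed}, after a small reformulation that lets its hypotheses apply to a two-agent interaction. In pairwise interaction the only neighbor of each agent is the other one, so $\frac{\got_i(t-1)}{\abs{\Neighb(i)}} = x_{j,i}(t-1)$; hence the dynamics induced by coefficients $(r_i,r_i')$ coincide exactly with those induced by $(\tilde r_i,\tilde r_i')$ for any split with $\tilde r_i+\tilde r_i' = r_i+r_i'$. Since agent $i$ is \name{fixed} with $0<r_i<1$ and agent $j$ is \name{floating} with $r_j>0$, the reduced coefficients $r_i+r_i'$ and $r_j+r_j'$ are both strictly positive, so I can rewrite the pair with strictly positive secondary coefficients $r_i',r_j'>0$ without changing any action sequence. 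The two-agent graph is connected, at least one agent ($i$) is \name{fixed}, and every agent acts at least once every $q$ times, so Proposition~\ref{prop:gen_convergence_async_mixed} applies and yields that $L_{i,j}=\lim_t x_{i,j}(t)$ and $L_{j,i}=\lim_t x_{j,i}(t)$ exist, with geometrically fast convergence. (Equivalently, one can re-run the matrix argument of that proof specialized to $|E|=2$: the only place the secondary coefficient is used is to ensure that a \name{fixed} acting agent contributes a row summing strictly below $1$, and here that row sum is $r_i<1$.)

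It then remains to identify the common limit. Passing to the limit in Definitions~\ref{def:fix_recip} and~\ref{def:float_recip} (using that each per-action sequence is the per-time sequence up to finite repetitions, so the limits agree) gives the fixed-point system
\begin{align*}
L_{i,j} &= (1-r_i)k_i + r_i L_{j,i},\\
L_{j,i} &= (1-r_j)L_{j,i} + r_j L_{i,j}.
\end{align*}
The second equation reduces to $r_j L_{j,i}=r_j L_{i,j}$, hence $L_{j,i}=L_{i,j}$ because $r_j>0$. Substituting into the first equation gives $(1-r_i)L_{i,j}=(1-r_i)k_i$, and since $r_i<1$ we conclude $L_{i,j}=L_{j,i}=k_i$, as claimed; the rate assertion is inherited from Proposition~\ref{prop:gen_convergence_async_mixed}.

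The main obstacle is precisely that first reformulation step: Proposition~\ref{prop:gen_convergence_async_mixed} is stated under the blanket requirement $r_i'>0$ for every agent, which a genuine pairwise interaction (where $r_i'=0$ was assumed) does not meet, so the argument hinges on the observation that in a pair the direct reaction and the neighborhood reaction are literally the same term, allowing mass to be moved freely between $r_i$ and $r_i'$ while leaving every $x_{i,j}(t)$ unchanged. Once that is in place, convergence and its geometric rate come for free, and the limit computation is the same two-equation linear system already used for Theorems~\ref{The:fixed_float_recip} and~\ref{The:float_fixed_recip}, now valid without the restriction $r_i+r_j\le 1$.
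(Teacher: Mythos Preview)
Your proof is correct and follows exactly the paper's own approach: invoke Proposition~\ref{prop:gen_convergence_async_mixed} for geometric convergence, then solve the two-equation fixed-point system from the proof of Theorem~\ref{The:fixed_float_recip} to identify the limit as $k_i$. The paper's proof is in fact just those two sentences. The one thing you do that the paper does not is explicitly address the hypothesis mismatch---Proposition~\ref{prop:gen_convergence_async_mixed} assumes $r_\ell'>0$ for every agent, whereas the pairwise section sets $r_\ell'=0$---and your observation that in a two-node graph the direct and neighborhood terms coincide (so mass can be shifted from $r_\ell$ to $r_\ell'$ without altering any action) is precisely the right way to close that gap.
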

\begin{proof}
Proposition~\ref{prop:gen_convergence_async_mixed}
implies geometrically fast convergence.
We find the limits as in the proof of%
\ifthenelse{\NOT\equal{\content}{Mult_agents}}{
Theorem~\ref{The:fixed_float_recip}.
}{
the theorem for pairwise interaction that being generalized.
}%
\end{proof}
\ifthenelse{\NOT \equal{\destin}{IJCAI16}}{
We now consider two cases which fall beyond this corollary's assumptions,
while we do assume synchroneity, to simplify the details.

If, unlike the theorem assumes, $r_i = 1$, then, if $r_j < 1$, then
we know from the \name{float}-\name{float} case that both action
sequences approach $\frac{r_j}{r_1 + r_2} k_i + \frac{1}{r_1 + r_2} k_j$.
If $r_i = r_j = 1$, then both agents switch from $k_1$ to $k_2$.

If, unlike the theorem assumes, $r_j = 0$, then, agent
$i$ constantly acts $(1 - r_i) k_i + r_i k_j$ and $j$ constantly acts
$k_j$.
}{
}%

We now turn to finding the limit. We manage to do this only in the
synchronous case, when all
the agents are \name{floating} or all the \name{fixed} agents have the
same kindness. For all reciprocation attitudes, the following theorem also
provides an alternative proof of convergence in the synchronous case.
\begin{theorem}\label{The:gen_convergence}
Given a connected interaction graph,
consider the synchronous case where for all agents $i$, $r_i' > 0$.
If 
there exists a cycle of an odd length in the graph 
(or at least one agent $i$ employs \name{floating} reciprocation and
has $r_i + r_i' < 1$),
then,
for all pairs of agents $i \neq j$ such that $(i, j) \in E$, the limit $L_{i, j}$ exists
and it is a positive combination of all the kindness values of the agents
who are \name{fixed}, if at least one agent is \name{fixed}, and of all
the kindness values $k_1, \ldots, k_n$, if all agents are \name{floating}.
The convergence is geometrically fast.
Moreover, if all agents employ \name{floating} reciprocation, then all these limits
are equal to each other and it is a convex combination
of the kindness values, namely
\begin{eqnarray}
L = \frac{\sum_{i \in N}{\paren{\frac{d(i)}{r_i + r_i'} \cdot k_i}}}{\sum_{i \in N}{\paren{\frac{d(i)}{r_i + r_i'}}}}.\label{eq:all_float_lim}
\end{eqnarray}
If, on the other hand, all the \name{fixed} agents have the same
kindness $k$, then all these limits are equal to $k$.
In any case, when not all the agents are \name{floating}, then changing
only the kindness of the \name{floating} agents leaves all the limits
as before (also follows from the limits being positive combinations of
all the kindness values of the agents who are fixed).
\end{theorem}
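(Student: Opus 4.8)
The plan is to recast the synchronous dynamics as a single autonomous affine recurrence and then extract everything from Perron--Frobenius theory. In the synchronous case the matrix $A(t)$ of \eqref{eq:dynam_mat_def} does not depend on $t$; call it $A\in\realsP^{\abs{E}\times\abs{E}}$, and let $\vec{k'}$ be the corresponding constant kindness vector, so that $\vec{p}(t+1)=A\vec{p}(t)+\vec{k'}$ with $\vec{p}(0)$ the vector whose $(i,j)$ entry is $x_{i,j}(0)=k_i$. Three structural facts drive the argument. (a) $A\ge 0$ and every row sum of $A$ is at most $1$; the row indexed by an edge $(i,j)$ sums to exactly $1$ when $i$ is \name{floating} and to $r_i+r_i'<1$ when $i$ is \name{fixed} — here the convention that the ambivalent case $r_i+r_i'=1$ counts as \name{floating} is exactly what makes the \name{fixed} rows \emph{strictly} substochastic. (b) $\vec{k'}$ is supported on the edges leaving \name{fixed} agents, carrying the entry $(1-r_i-r_i')k_i>0$ there and $0$ elsewhere, so \name{floating} kindnesses never enter $\vec{k'}$. (c) Since every $r_i'>0$, from any edge $(a,b)$ the update of $x_{a,b}$ has a positive coefficient on $x_{c,a}$ for \emph{every} neighbour $c$ of $a$ (coefficient $r_a+r_a'/\abs{\outNeighb(a)}$ if $c=b$, else $r_a'/\abs{\outNeighb(a)}$); iterating this, connectedness of $G$ lets the ``head'' of the current edge walk to any vertex of $G$ and its ``tail'' be set to any neighbour of that head, so the dependency digraph of $A$ is strongly connected, i.e.\ $A$ is irreducible.

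\textbf{Case 1: at least one agent is \name{fixed}.} Existence of the limits and the geometric rate are already furnished by Proposition~\ref{prop:gen_convergence_async_mixed} applied with $q=1$; the argument there in fact shows $A^t\to 0$, hence $\rho(A)<1$ and $I-A$ is invertible. Taking limits in the recurrence then gives $L:=\lim_t\vec{p}(t)=(I-A)^{-1}\vec{k'}=\sum_{s\ge 0}A^s\,\vec{k'}$, independent of $\vec{p}(0)$. By irreducibility of $A$, every entry of $(I-A)^{-1}=\sum_{s\ge 0}A^s$ is strictly positive, so, using fact (b), each coordinate $L_{i,j}$ is a strictly positive combination of the kindnesses of the \name{fixed} agents only — in particular it does not change when a \name{floating} agent's kindness is changed, which is the last assertion. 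If all \name{fixed} agents share one kindness $k$, a direct check from Definitions~\ref{def:fix_recip} and~\ref{def:float_recip} (using $(1-r_i-r_i')+r_i+r_i'=1$ in the \name{floating} case and $r_i+r_i+\text{(neighbourhood share)}$ collapsing to $r_i+r_i'$ in the \name{fixed} case) shows that the constant profile $x_{i,j}(t)\equiv k$ solves the recurrence, i.e.\ $k\mathbf{1}=Ak\mathbf{1}+\vec{k'}$; by uniqueness of the fixed point $L=k\mathbf{1}$, so all limits equal $k$.

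\textbf{Case 2: all agents \name{floating}.} Now $\vec{k'}=0$ and $A$ is row-stochastic with $\rho(A)=1$, so $\vec{p}(t)=A^t\vec{p}(0)$ and the only issue is convergence of $A^t$. This is where the hypothesis is used, to upgrade irreducibility to primitivity. If some agent has $r_i+r_i'<1$, its edges carry self-loops in the dependency digraph, and an irreducible digraph with a self-loop is aperiodic. If instead $G$ contains an odd cycle, then the dependency digraph contains a $2$-cycle $(a,b)\to(b,a)\to(a,b)$ for any edge $(a,b)$, and also an odd closed walk read off from that cycle, so the $\gcd$ of its cycle lengths is $1$ and it is again aperiodic. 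Hence $A$ is primitive, and by Perron--Frobenius $A^t\to\mathbf{1}\pi^{\top}$ geometrically for the unique stationary distribution $\pi>0$ with $\pi^{\top}A=\pi^{\top}$ and $\pi^{\top}\mathbf{1}=1$; therefore every coordinate of $\vec{p}(t)$ converges to the \emph{same} value $L=\pi^{\top}\vec{p}(0)=\sum_{(i,j)\in E}\pi_{i,j}k_i$, a strictly positive convex combination of $k_1,\dots,k_n$. It remains to identify $\pi$: a short computation, using that the $r_i'$-weight spread over the $d(i)$ out-neighbours of $i$ sums back to $r_i'$ and that $(1-r_i-r_i')+(r_i+r_i')=1$, verifies that $\pi_{i,j}=\bigl(1/(r_i+r_i')\bigr)\big/\sum_{m}\bigl(d(m)/(r_m+r_m')\bigr)$ is stationary; substituting it into $\sum_{(i,j)\in E}\pi_{i,j}k_i$ yields exactly \eqref{eq:all_float_lim}. (When moreover all $r_i+r_i'<1$, Proposition~\ref{prop:gen_convergence_async_float} gives an independent route to the common limit.)

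The real work is concentrated in Case 2: establishing primitivity of $A$ from the graph hypothesis (irreducibility from connectedness and $r_i'>0$, aperiodicity from the odd cycle or a single self-loop) and then guessing and verifying the stationary vector $\pi$. Everything else — the row-sum bookkeeping of fact (a), the positivity of $(I-A)^{-1}$, and the fixed-point check in the common-kindness subcase — is routine linear algebra. The one delicate point is fact (a): it is precisely the ``ambivalent case is \name{floating}'' convention that makes a \name{fixed} agent's rows \emph{strictly} substochastic, which is what forces $\rho(A)<1$ in Case 1 and hence gives both the existence and the strict positivity of the limit there.
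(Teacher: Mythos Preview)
Your proof is correct and follows essentially the same approach as the paper's: recast the synchronous dynamics as an affine recurrence $\vec{p}(t+1)=A\vec{p}(t)+\vec{k'}$, establish irreducibility of $A$ from connectedness and $r_i'>0$, upgrade to primitivity via the odd-cycle/self-loop hypothesis, and then apply Perron--Frobenius, guessing and verifying the left eigenvector $\pi_{(i,j)}\propto 1/(r_i+r_i')$ in the all-\name{floating} case and solving $(I-A)L=\vec{k'}$ in the \name{fixed} case. The only cosmetic difference is that you invoke Proposition~\ref{prop:gen_convergence_async_mixed} (with $q=1$) for convergence in Case~1, whereas the paper re-derives $\rho(A)<1$ directly from Perron--Frobenius and the strictly substochastic row; both routes land at the same place.
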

Let us say several words about the assumptions.
If all agents are \name{fixed}, we can prove
that the actions are subsequences of the
actions in the synchronous case (a straightforward generalization of~%
\ifthenelse{\NOT\equal{\content}{Mult_agents}}{
Lemma~\ref{lemma:fixed_recip:subseq_synch_case}.)
}{
The appropriate lemma for pairwise interaction.
}%
Thus, the synchronous case represents all the cases
in the limit, when all agents are \name{fixed}.
The  assumption of a cycle of an odd length virtually always holds,
since three people influencing each other form such a cycle.

\begin{proof}
We first prove the case where all agents use \name{floating} reciprocation.
We express 
how each action depends on the actions in the previous time
in a matrix,
and prove the theorem by applying the famous Perron--Frobenius theorem~%
\cite[Theorem~$1.1$,~$1.2$]{seneta2006} to this matrix.
We now define the dynamics matrix $A \in \realsP^{\abs{E} \times \abs{E}}$:
\begin{equation}
A((i, j), (k, l)) \defas
\begin{cases}
(1 - r_i - r_i') & \text{if } k = i, l = j;\\
r_i + r_i' \frac{1}{\abs{\outNeighb(i)}} & \text{if } k = j, l = i;\\
r_i' \frac{1}{\abs{\outNeighb(i)}} & \text{if } k \neq j, l = i;\\
0 & \text{otherwise}.
\end{cases}
\label{eq:dynam_mat_def_float}
\end{equation}
According to the definition of \name{floating} reciprocation, if for
each time $t \in T$ the column
vector $\vec{p(t)} \in \realsP^{\abs{E}}$ describes the actions at time $t$,
in the sense that its $(i, j)$th coordinate contains $x_{i, j}(t)$ (for $(i,j)\in E$), then
$\vec p(t + 1) = A \vec p(t)$. We then call $\vec{p(t)}$
an action vector.
Initially, $\vec p_{(i, j)}(0) = k_i$.

Further, we shall need to use
the Perron--Frobenius theorem for primitive matrices.
We now prepare to use it, and first we show that $A$ is primitive.
First, $A$ is irreducible since we can move from any $(i, j) \in E$ to any
$(k, l) \in E$ as follows.
We can move from an action to its reverse, since
if $k = j, l = i$, then $A((i, j), (k, l)) = r_i + r_i' \frac{1}{\abs{\outNeighb(i)}} > 0$.
We can also move from an action to another action by the same agent, since
we can move to any action on the same agent and then to its reverse.
To move to an action on the same agent, notice that
if $l = i$, then
$A((i, j), (k, l)) \geq r_i' \frac{1}{\abs{\outNeighb(i)}} > 0$.
Now, we can move from any action $(i, j)$ to any other action $(k, l)$
by moving to the reverse action $(j, i)$ (if $k = j, l = i$, we are done).
Then, follow a path from $j$ to $k$ in graph $G$ by moving to
the appropriate action by an agent and then to the reverse, as many
times as needed till we are at
the action $(k, j)$ and finally to the action $(k, l)$.
Thus, $A$ is irreducible.

By definition, $A$ is non-negative. $A$ is
aperiodic, since either at least one agent $i$ has $r_i + r_i < 1$,
and thus the diagonal contains non-zero elements, or there exists a cycle
of an odd length in the interaction graph $G$.
In the latter case, let the cycle be $i_1, i_2, \ldots, i_p$ for an odd $p$. Consider
the following cycles on the index set of the matrix:
$(i, j), (j, i), (i, j)$ for any $(i, j) \in E$
and $(i_2, i_1), (i_3, i_2), \ldots, (i_p, i_{p - 1}), (i_1, i_p), (i_2, i_1)$. Their lengths are $2$ and $p$,
respectively, which greatest common divisor is $1$,
implying aperiodicity.
Being
irreducible and aperiodic, $A$ is primitive
by~\cite[Theorem~$1.4$]{seneta2006}.
Since the sum of every row is $1$, the spectral radius is $1$.

According to
the Perron--Frobenius theorem for primitive matrices~%
\cite[Theorem~$1.1$]{seneta2006},
the absolute values of all eigenvalues except one eigenvalue of $1$
are strictly less than $1$. The eigenvalue $1$ has unique right and left eigenvectors, up to a constant factor. Both these eigenvectors
are strictly positive.
Therefore,~\cite[Theorem~$1.2$]{seneta2006} implies that
$\lim_{t \to \infty}{A^t} = \vec{1} \vec{v}'$, where $\vec{v}'$ is
the left eigenvector of the value~$1$, normalized such that
$\vec{v}' \vec{1} = 1$,
and the approach rate is geometric.
Therefore, we obtain
$\lim_{t \to \infty}{\vec{p(t)}}
= \lim_{t \to \infty}{A^t \vec{p}(0)}
= \vec{1} \vec{v}' \vec{p}(0) = \vec{1} \sum_{(i, j) \in E}{v'((i, j)) k_i}$.
Thus, actions converge to $\vec{1}$ times
$\sum_{(i, j) \in E}{v'((i, j)) k_i}$.

To find this limit, consider the vector $v'$ defined by
$v'((i, j))  = \frac{1}{r_i + r_i'}$. Substitution shows it is a left
eigenvector of A. To normalize it such that
$\vec{v}' \vec{1} = 1$, divide this vector by the sum of its coordinates,
which is $\sum_{i \in N}{\frac{d(i)}{r_i + r_i'}}$, obtaining
$v'((i, j))  = \frac{1}{\sum_{i \in N}{\frac{d(i)}{r_i + r_i'}}} \cdot \frac{1}{r_i + r_i'}$.
Therefore, the common limit is
$\frac{\sum_{i \in N}{\paren{\frac{d(i)}{r_i + r_i'} \cdot k_i}}}{\sum_{i \in N}{\paren{\frac{d(i)}{r_i + r_i'}}}}$.

We now prove the case where at least one agent employs \name{fixed} reciprocation.
We define the dynamics matrix $A$ analogously to the previous case,
besides that the first line from~\eqnsref{eq:dynam_mat_def_float} is missing
	for the \name{fixed} agents,
since for them, own behavior does not matter.
In this case, we have $\vec{p}(t + 1) = A \vec{p}(t) + \vec{k'}$,
where $\vec{k'}$ is the relevant kindness vector, formally defined as
\begin{equation*}
k'((i, j)) \defas
\begin{cases}
(1 - r_i - r_i') k_i & \text{if } i \text{ is \name{fixed}};\\
0 & \text{otherwise}.
\end{cases}
\end{equation*}
By induction, we obtain
$\vec{p}(t) = A^t \vec{p}(0) + \paren{\sum_{l = 0}^{t - 1}{A^l}} \vec{k'}$.

Analogically to the previous case, $A$ is irreducible and non-negative.
As shown above, $A$ is aperiodic, and therefore,
primitive.
Since at least one agent employs \name{fixed} reciprocation, at least one row
of $A$ sums to less than $1$, and therefore the spectral radius of $A$
is strictly less than $1$.

Now, the Perron--Frobenius implies that all the eigenvalues are strictly
smaller than $1$. Since we have
$\lim_{t \to \infty}{\vec{p}(t)}
= \lim_{t \to \infty}{A^t \vec{p}(0)} + \paren{\lim_{t \to \infty}{\sum_{l = 0}^{t - 1}{A^l}}} \vec{k'}$,
~\cite[Theorem~$1.2$]{seneta2006} implies that this limits exist
(the first part converges to zero, while the second one is a series of
geometrically decreasing elements.) Since $A$ is primitive,
$\paren{\lim_{t \to \infty}{\sum_{l = 0}^{t - 1}{A^l}}} > 0$.

When all the fixed agents have the same kindness $k$, we now find the limits.
Taking the limits in the equality
$\vec{p}(t + 1) = A \vec{p}(t) + \vec{k'}$ yields
$(I - A) \lim_{t \to \infty}\vec{p}(t) = \vec{k'}$.
\cite[Lemma~$B.1$]{seneta2006} implies that $I - A$ is invertible and therefore, if we
guess a vector $\vec{x}$ that fulfills $(I - A) \vec{x} = \vec{k'}$,
it will be the limit. Since the vector with all actions equal to $k$
satisfies this equation, we conclude that all the limits are equal
to $k$.
In any case, when there exists at least one \name{fixed} agent, changing
only the kindness of the \name{floating} agents will not change
the (unique) solution of $(I - A) \vec{x} = \vec{k'}$, and, therefore,
will not change the limits.
\anote{
By the way, the series for the inverse of $I - A$ from Lemma~$B.1$ yields
once more that
$\lim_{t \to \infty}\vec{p}(t) = \paren{\lim_{t \to \infty}{\sum_{l = 0}^{t - 1}{A^l}}} \vec{k'}$.}
\end{proof}

Let us consider several examples of \eqnsref{eq:all_float_lim}.

\begin{example}
If the interaction graph is regular, meaning that all the degrees are
equal to each other, we have
$L = \frac{\sum_{i \in N}{\paren{\frac{k_i}{r_i + r_i'}}}}{\sum_{i \in N}{\paren{\frac{1}{r_i + r_i'}}}}$.
This holds for cliques, modeling small human collectives or groups of countries,
and for cycles, modeling circular computer networks.
\end{example}

\begin{example}
For star networks, modeling networks of
a supervisor of several people or entities, assume w.l.o.g.\ that
agent $1$ is the center, and we have
$L = \frac{\frac{n - 1}{r_1 + r_1'} \cdot k_1 + \sum_{i \in N \setminus \set{1}}{\paren{\frac{k_i}{r_i + r_i'} }}}{\frac{n - 1}{r_1 + r_1'} + \sum_{i \in N \setminus \set{1}}{\paren{\frac{1}{r_i + r_i'}}}}$.
\end{example}

An obvious conclusion of the theorem is that the \name{fixed} agents are,
intuitively spoken, more important than the \name{floating} ones, at least
their kindness is.
We now conclude about the optimal reciprocation, which goes back to
providing decision support.
\begin{proposition}\label{prop:float_max_L}
If \eqnsref{eq:all_float_lim} holds, then agent~$i$ who wants
to maximize the common value $L$, and who can choose
either $r_i$ or $r_i'$, in certain limits
$[a, b]$, for $a > 0$, should choose either the smallest possible
or the largest possible coefficient, as follows.
We assume we choose $r_i$, but the same holds for $r_i'$ with the obvious
adjustments.
She should set $r_i$ to $b$, if
$\sum_{j \in N \setminus \set{i}}{\paren{\frac{d(j)}{r_j + r_j'} \cdot k_j}}
-k_i \paren{\sum_{j \in N \setminus \set{i}}{\paren{\frac{d(j)}{r_j + r_j'}}}}$
is positive, to $a$, if that is negative, and to an arbitrary, if zero.
When this expression is not zero, only these choices are optimal.
\end{proposition}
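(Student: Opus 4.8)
The plan is to reduce the optimization to the monotonicity of a single linear--fractional function. First I would abbreviate $s_j \defas r_j + r'_j$ for every $j \in N$, so that \eqnsref{eq:all_float_lim} becomes $L = \left(\sum_{j \in N} \tfrac{d(j)}{s_j} k_j\right) \big/ \left(\sum_{j \in N} \tfrac{d(j)}{s_j}\right)$, and note that, by the hypotheses of Theorem~\ref{The:gen_convergence}, all $s_j > 0$ and all $d(j) > 0$. When agent $i$ varies $r_i$ over $[a,b]$, every $s_j$ with $j \neq i$ is untouched, while $s_i = r_i + r'_i$ ranges over $[a + r'_i,\, b + r'_i]$, an interval of \emph{positive} reals because $a > 0$; the case in which she varies $r'_i$ instead is word-for-word the same, since only the sum $s_i$ enters $L$. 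I would then substitute $c \defas 1/s_i$, which is a smooth, strictly decreasing function of $r_i$ ($dc/dr_i = -1/s_i^2 < 0$) taking values in a bounded interval of positive reals.

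Separating the $i$-th term from the two sums writes $L$ as $L(c) = \dfrac{d(i) k_i\, c + \beta}{d(i)\, c + \delta}$ with $\beta \defas \sum_{j \in N \setminus \set{i}} \tfrac{d(j)}{s_j} k_j$ and $\delta \defas \sum_{j \in N \setminus \set{i}} \tfrac{d(j)}{s_j}$ --- a Möbius function of $c$ whose denominator is strictly positive on the relevant range, since $d(i) > 0$, $\delta \geq 0$, $c > 0$. Next I would differentiate: $L'(c) = \dfrac{d(i)k_i \delta - \beta\, d(i)}{(d(i)c + \delta)^2} = \dfrac{-\,d(i)\,E}{(d(i)c+\delta)^2}$, where $E \defas \beta - k_i \delta = \sum_{j \in N \setminus \set{i}} \tfrac{d(j)}{s_j} k_j - k_i \sum_{j \in N \setminus \set{i}} \tfrac{d(j)}{s_j}$ is exactly the expression named in the statement. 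Because $c$ is strictly decreasing in $r_i$, the chain rule gives $dL/dr_i = L'(c)\cdot dc/dr_i$, which has the same sign as $d(i)\,E$, hence the same sign as $E$.

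From this the conclusion is immediate: $L$ is strictly increasing in $r_i$ on $[a,b]$ when $E > 0$, strictly decreasing when $E < 0$, and constant when $E = 0$; so a maximizer is $r_i = b$ if $E > 0$, $r_i = a$ if $E < 0$, and any feasible value if $E = 0$, and strict monotonicity when $E \neq 0$ rules out every interior point, giving the uniqueness claim. The computation is routine, so the only thing to be careful about --- and the main ``obstacle'' --- is the direction of monotonicity: increasing $r_i$ \emph{increases} $s_i$ and therefore \emph{decreases} $c = 1/s_i$, so the monotonicity of $L$ in $r_i$ is governed by $E$ itself rather than by $-E$; I would also explicitly record that the denominator $d(i)c + \delta$ never vanishes, which is clear from positivity of the degrees, the $s_j$, and $c$, since this is what legitimizes differentiating the Möbius form on the whole interval.
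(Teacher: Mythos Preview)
Your proposal is correct and follows essentially the same approach as the paper: compute $\partial L/\partial r_i$, observe that its sign is independent of $r_i$ and coincides with the sign of $E$, and conclude that $L$ is monotone on $[a,b]$ so the maximum lies at the appropriate endpoint. The only cosmetic difference is that you factor the computation through the substitution $c = 1/(r_i + r'_i)$ and recognize $L$ as a M\"obius function of $c$, whereas the paper differentiates the quotient directly with respect to $r_i$; both routes arrive at the same sign criterion and the same uniqueness conclusion.
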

\ifthenelse{\equal{\destin}{IJCAI16}}{
The  proof considers the sign of the derivative, and is omitted due
to lack of space.
}{
\begin{proof}
Consider the derivative:
\begin{eqnarray*}
\frac{\partial L}{\partial (r_i)}
= \frac{\frac{-d(i) k_i}{(r_i + r_i')^2} \paren{\sum_{j \in N}{\paren{\frac{d(j)}{r_j + r_j'}}}}
+ \sum_{j \in N}{\paren{\frac{d(j)}{r_j + r_j'} \cdot k_j}} \frac{d(i)}{(r_i + r_i')^2}}
{\paren{\sum_{j \in N}{\paren{\frac{d(j)}{r_j + r_j'}}}}^2}\\
= \frac{ \frac{d(i)}{(r_i + r_i')^2} \paren{
\sum_{j \in N}{\paren{\frac{d(j)}{r_j + r_j'} \cdot k_j}}
-k_i \paren{\sum_{j \in N}{\paren{\frac{d(j)}{r_j + r_j'}}}} }}
{\paren{\sum_{j \in N}{\paren{\frac{d(j)}{r_j + r_j'}}}}^2}\\
= \frac{ \frac{d(i)}{(r_i + r_i')^2} \paren{
\sum_{j \in N \setminus \set{i}}{\paren{\frac{d(j)}{r_j + r_j'} \cdot k_j}}
-k_i \paren{\sum_{j \in N \setminus \set{i}}{\paren{\frac{d(j)}{r_j + r_j'}}}} }}
{\paren{\sum_{j \in N}{\paren{\frac{d(j)}{r_j + r_j'}}}}^2}.
\end{eqnarray*}
Therefore, the derivative is zero either for all $r_i$ or for none. In any
case, the maximum is attained at an endpoint. To avoid complicated
substitution, we consider the derivative sign instead:
\begin{eqnarray*}
\frac{\partial u_i}{\partial r_i} \geq 0
\iff \sum_{j \in N \setminus \set{i}}{\paren{\frac{d(j)}{r_j + r_j'} \cdot k_j}}
-k_i \paren{\sum_{j \in N \setminus \set{i}}{\paren{\frac{d(j)}{r_j + r_j'}}}} \geq 0,
\end{eqnarray*}
and so when $\sum_{j \in N \setminus \set{i}}{\paren{\frac{d(j)}{r_j + r_j'} \cdot k_j}}
-k_i \paren{\sum_{j \in N \setminus \set{i}}{\paren{\frac{d(j)}{r_j + r_j'}}}}$
is nonnegative, $i$ should choose the largest $r_i$, which is $b$,
and she should choose $r_i = a$ otherwise.
When the derivative is not zero, these choices are the only optimal ones.
\end{proof}
}

\ifthenelse{\equal{\destin}{IJCAI16}}{
}{
We can also prove a general convergence result, allowing agents
to act in a more general way than modeled above.
We need the
following definition:
\begin{defn}
Given a metric space $(X, d)$,
function $f \colon X \to X$ is called \defined{contraction}, if for any
$x_1, x_2 \in X$, we have $d(f(x_1), f(x_2)) \leq q d(x_1, x_2)$,
for a $q \in (0, 1)$.
\end{defn}

\begin{theorem}\label{The:gen_convergence_contract}
Given an interaction graph,
assume the synchronous case, where every agent acts in the following way.
Let $S \subseteq \reals$ be a compact set.
As in the proof of Theorem~\ref{The:gen_convergence}, assume that
for
each time $t \in T$, the column
vector $\vec{p(t)} \in S^{\abs{E}}$ describes the actions at time $t$,
in the sense that its $(i, j)$th\footnote{For $(i, j) \in E$.} coordinate contains $x_{i, j}(t)$,
and that there exists a contraction $f \colon S^{\abs{E}} \to S^{\abs{E}}$
with respect to the Euclidean metric,
such that
$\vec p(t + 1) = f(\vec p(t))$.
Initially, $\vec p_{(i, j)}(0) = k_i$.
Then,
for all pairs of agents $i \neq j$ such that $(i, j) \in E$, the limit $L_{i, j}$ exists.
The convergence is geometrically fast.
\end{theorem}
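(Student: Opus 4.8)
The plan is to recognize the recurrence $\vec p(t+1) = f(\vec p(t))$ as a fixed-point iteration and to invoke the Banach fixed-point theorem (the contraction mapping principle). First I would check that the ambient space is a complete metric space: since $S \subseteq \reals$ is compact, it is closed and bounded, hence so is the product $S^{\abs{E}} \subseteq \reals^{\abs{E}}$, and a closed subset of a complete space is itself complete, so $\paren{S^{\abs{E}}, d}$ with $d$ the Euclidean metric is a complete metric space; it is nonempty because the prescribed $\vec p(0)$ (whose $(i,j)$th coordinate is $k_i$) lies in it. By hypothesis $f$ maps $S^{\abs{E}}$ into itself and is a contraction with some modulus $q \in (0,1)$.

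Second, the Banach fixed-point theorem then produces a unique $\vec p^{\ast} \in S^{\abs{E}}$ with $f(\vec p^{\ast}) = \vec p^{\ast}$, and for the orbit $\vec p(t) = f^{(t)}(\vec p(0))$ one gets the standard bound $d(\vec p(t), \vec p^{\ast}) \le q^{t}\, d(\vec p(0), \vec p^{\ast})$, obtained by iterating $d(\vec p(t), \vec p^{\ast}) = d(f(\vec p(t-1)), f(\vec p^{\ast})) \le q\, d(\vec p(t-1), \vec p^{\ast})$.

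Third, I would read this off coordinatewise: for every $(i,j) \in E$ we have $\abs{x_{i,j}(t) - p^{\ast}_{(i,j)}} \le d(\vec p(t), \vec p^{\ast}) \le q^{t}\, d(\vec p(0), \vec p^{\ast}) \to 0$, so $L_{i,j} = \lim_{t\to\infty} x_{i,j}(t)$ exists and equals $p^{\ast}_{(i,j)}$, and the factor $q^{t}$ exhibits the claimed geometric rate of convergence. I do not anticipate a genuine obstacle here: compactness of $S$ is used only to guarantee completeness of $S^{\abs{E}}$, and the assumption that $f$ is a self-map of $S^{\abs{E}}$ is exactly what keeps the iterates inside this complete space. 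The substance of the statement is the abstraction itself — that \emph{any} action dynamics expressible as a Euclidean contraction on the action vectors converges geometrically — which is why the proof reduces to checking the hypotheses of the contraction mapping theorem; it reflects, in an abstract form, the convergence phenomenon established by the explicit matrix analysis of Theorem~\ref{The:gen_convergence} in the presence of a \name{fixed} agent.
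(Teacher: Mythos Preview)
Your proposal is correct and matches the paper's own proof, which is simply a one-line invocation of Banach's fixed point theorem on $S^{\abs{E}}$; you have merely (and correctly) spelled out the verification that $S^{\abs{E}}$ is complete and extracted the geometric rate from the standard contraction estimate.
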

This theorem is not a generalization of Theorem~\ref{The:gen_convergence},
since matrix in $A$ in the proof of Theorem~\ref{The:gen_convergence}
needs not be a contraction.
\begin{proof}
By definition of action, $\vec{p(t)} = f^t(\vec{p(0)})$,
and using Banach's fixed point theorem~\cite[Exercise~$6.88$]{hewitt1975},
we know that $f^t(\vec{p(0)})$ converges to the unique fixed point of $f$
in $S^{\abs{E}}$, with a geometrical speed, thereby proving the theorem.
\end{proof}
}%

\ifthenelse{\NOT \equal{\destin}{IJCAI16}}{
\subsection{The Process}\label{Sec:dynam_interact_interdep:process}
}{
}%

\section{Simulations}\label{Sec:dynam_interact_interdep_sim}

We now answer some theoretically unanswered questions from \sectnref{Sec:dynam_interact_interdep} using MatLab simulations,
running at least $100$ synchronous rounds, to achieve
practical convergence.

We first concentrate on the case of three agents who can influence
each other, meaning that the interaction graph is a clique.
We begin by corroborating the already proven result that when at least
one \name{fixed} agents exists, then the kindness of the \name{floating}
agents does not influence the actions in the limit. Another proven thing
we corroborate is that when exactly one \name{fixed} agent exists, then
all the actions approach her kindness as time approaches infinity.
When the actions are plotted as functions of time,
we obtain graphs such as those in \figref{fig:act_time_fi_fl_fl_A_1_2_3_3_1_5}.
The left graph on that figure demonstrates, that exponential convergence
may be quite slow, and this is a new observation we did not know from the theory.
We also corroborate that the limiting values of the actions depend
linearly on the kindness values of all the fixed agents, the
proportionality coefficients being independent of the other kindness
values. 
In order to reasonably cover the sampling space, all the above mentioned
regularities have also been automatically checked for the combinations of
kindness values of $1, 2, 3, 4, 5$, over $r_i$ and $r_i'$ values
of $0.1, 0.3, 0.5, 0.7, 0.9$ and over all the relevant reciprocation
attitudes. The checks were up to the absolute precision of $0.01$.

We do not know the exact limits when there exist two or more
\name{fixed} agents with distinct kindness values.
We at least know that the dependencies on the kindness values are linear,
but we lack theoretical knowledge about the dependencies of the limits of
actions on the reciprocation coefficients, so we simulate the interaction
for various reciprocation coefficients, obtaining graphs like those
in \figref{fig:act_lim_r_1}, and analogously for the dependency on $r_1'$. Note that
we can have both increasing and decreasing graphs in the same scenario,
and also convex and concave graphs.
The observed monotonicity was automatically verified for all the
above mentioned combinations of parameters.
This monotonicity means that if an agent wants to maximize the limit
of the actions of some agent on some other agent, she can do this
by choosing an extreme value of $r_i$ or $r_i'$.

\ifthenelse{\NOT \equal{\destin}{IJCAI16}}{
A natural question is whether Proposition~\ref{prop:kind_mono_limit}
can be extended for more than two agents. Since the kindness of
the \name{floating} agents does not effect the limits, this sort of
monotonicity with respect to kindness would require all the limits of
the actions to be the same. We therefore ask whether the monotonicity
holds at least for the actions of the \name{fixed} agents.
The answer is negative, as \figref{fig:act_time_fi_fi_fi_A_1_5_2} shows.
}{
}%

The next thing we study is a fourth agent, interacting with some
of the other agents. We consider the limits of the actions as functions
of the fourth agent's degree, but
we found no regularity in these graphs; in particular,
no monotonicity holds in the general case.


\begin{figure}[ht]
\centerline{%
\includegraphics[trim = 35mm 80mm 40mm 90mm, clip, width=0.24\textwidth, height=1.5in]{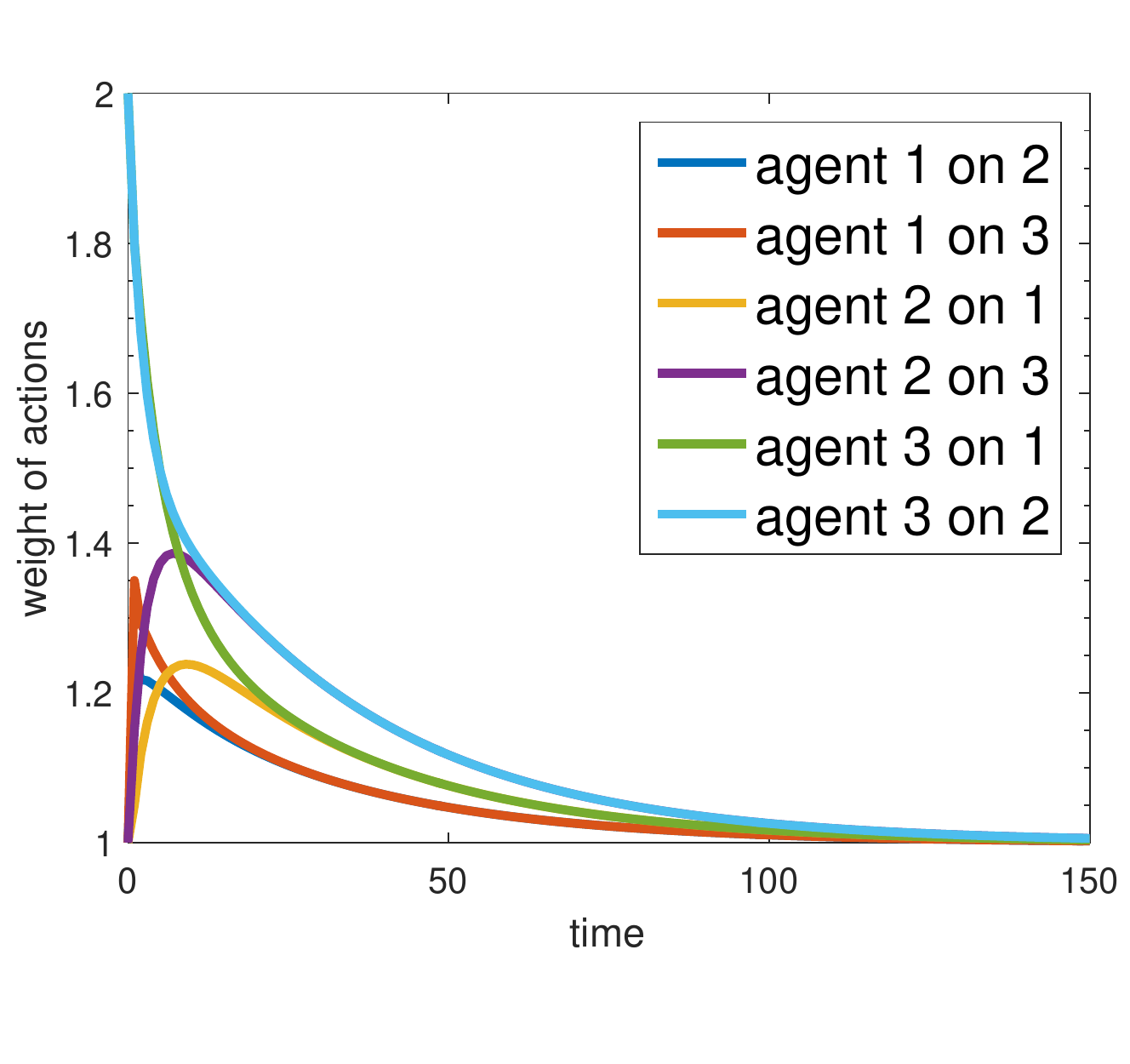}
\includegraphics[trim = 35mm 80mm 40mm 90mm, clip, width=0.24\textwidth, height=1.5in]{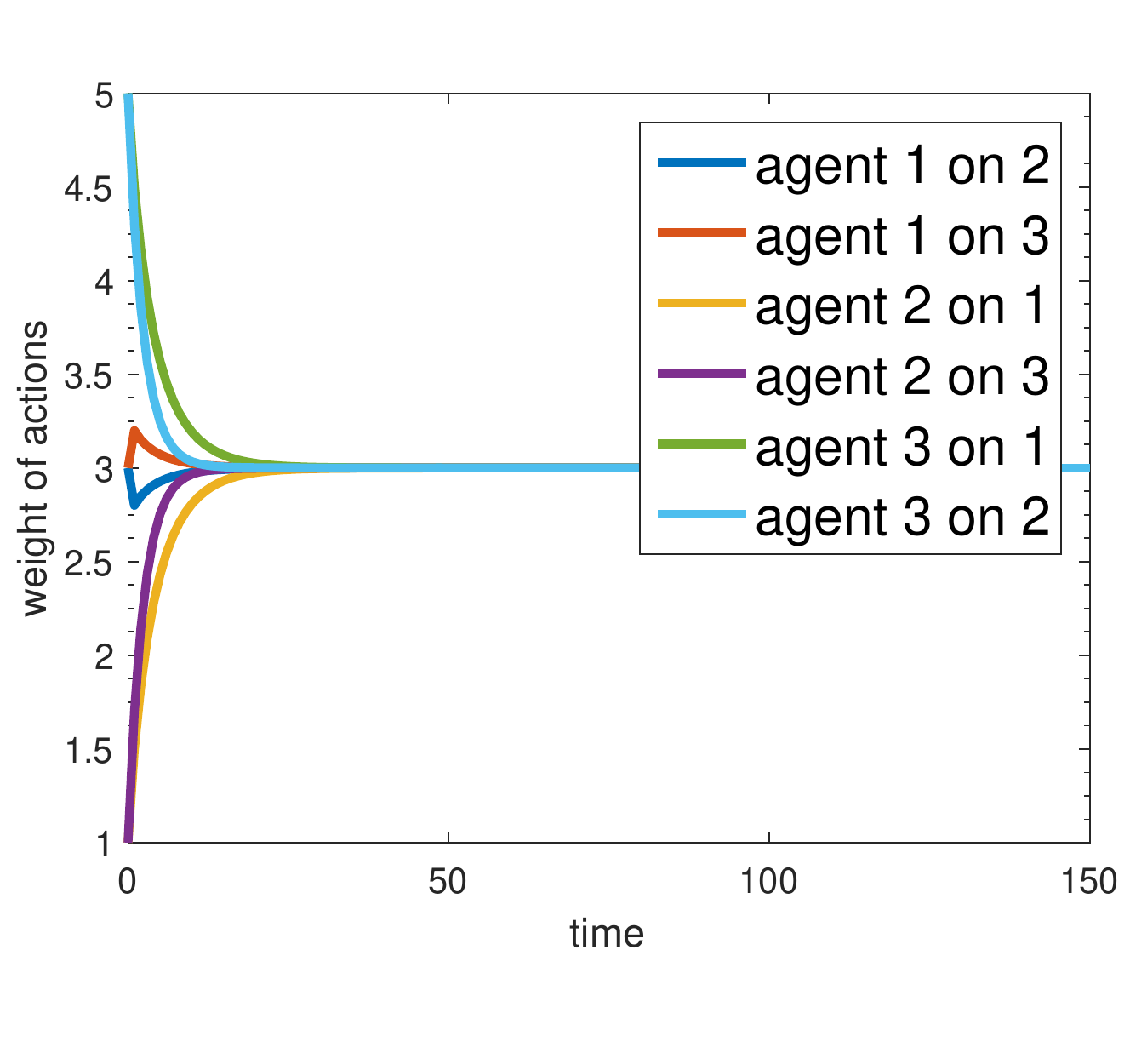}
}%
\protect\caption{Simulation results for the synchronous case, with one \name{fixed}
and two \name{floating} agents, for
$r_1 = 0.1, r_2 = 0.1, r_3 = 0.1, r_1' = 0.5, r_2' = 0.1, r_3' = 0.1$.
In the left graph, $k_1 = 1, k_2 = 1, k_3 = 2$, while in the right one, $k_1 = 3, k_2 = 1, k_3 = 5$.
The common limits, which are equal to the kindness of agent $1$, fit
the prediction of Theorem~\ref{The:gen_convergence}.
}
\label{fig:act_time_fi_fl_fl_A_1_2_3_3_1_5}
\end{figure}
%
%
\begin{figure}[ht]
\centerline{%
\includegraphics[trim = 35mm 80mm 40mm 90mm, clip, width=0.24\textwidth, height=1.5in]{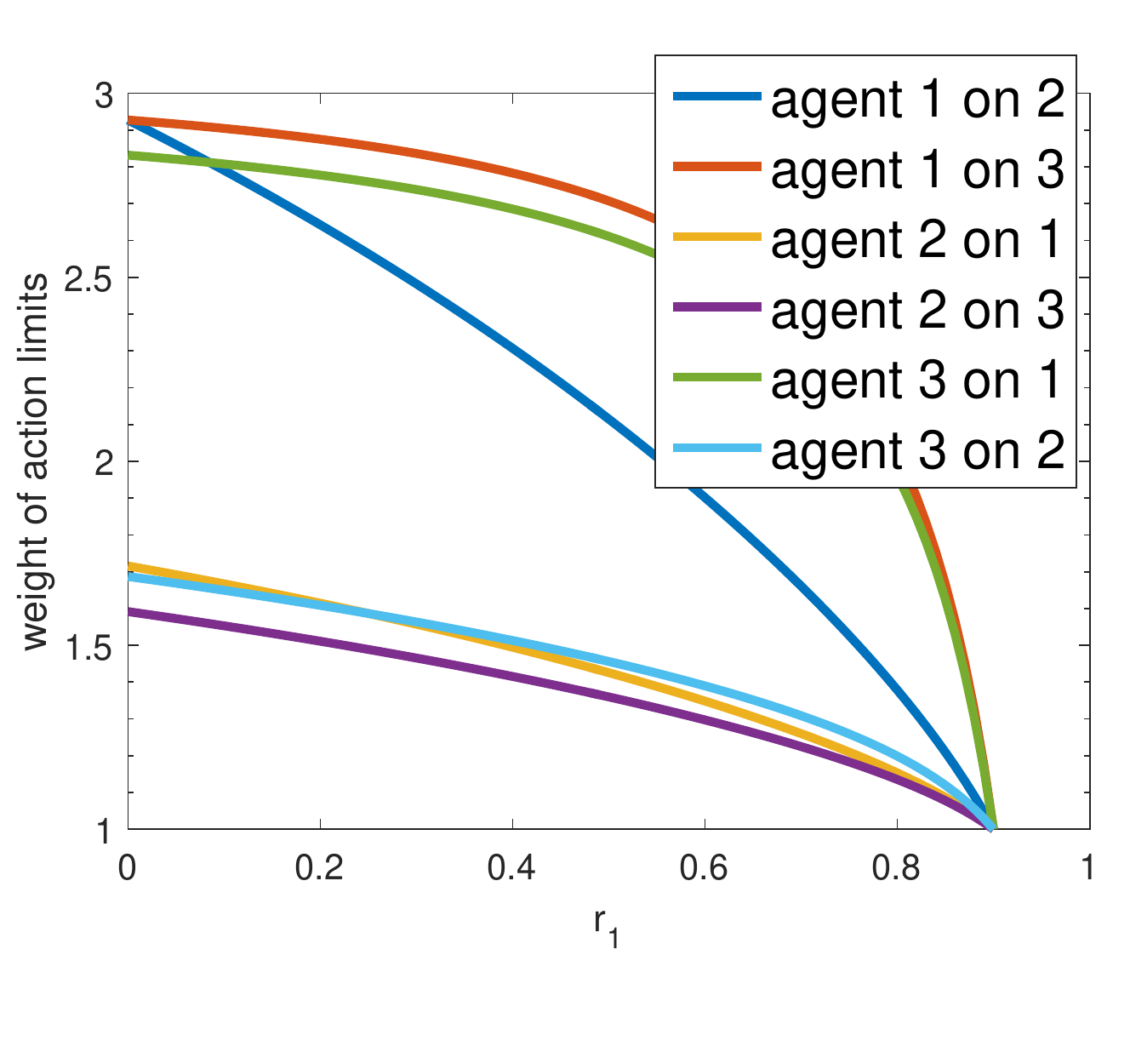}
\includegraphics[trim = 35mm 80mm 40mm 90mm, clip, width=0.24\textwidth, height=1.5in]{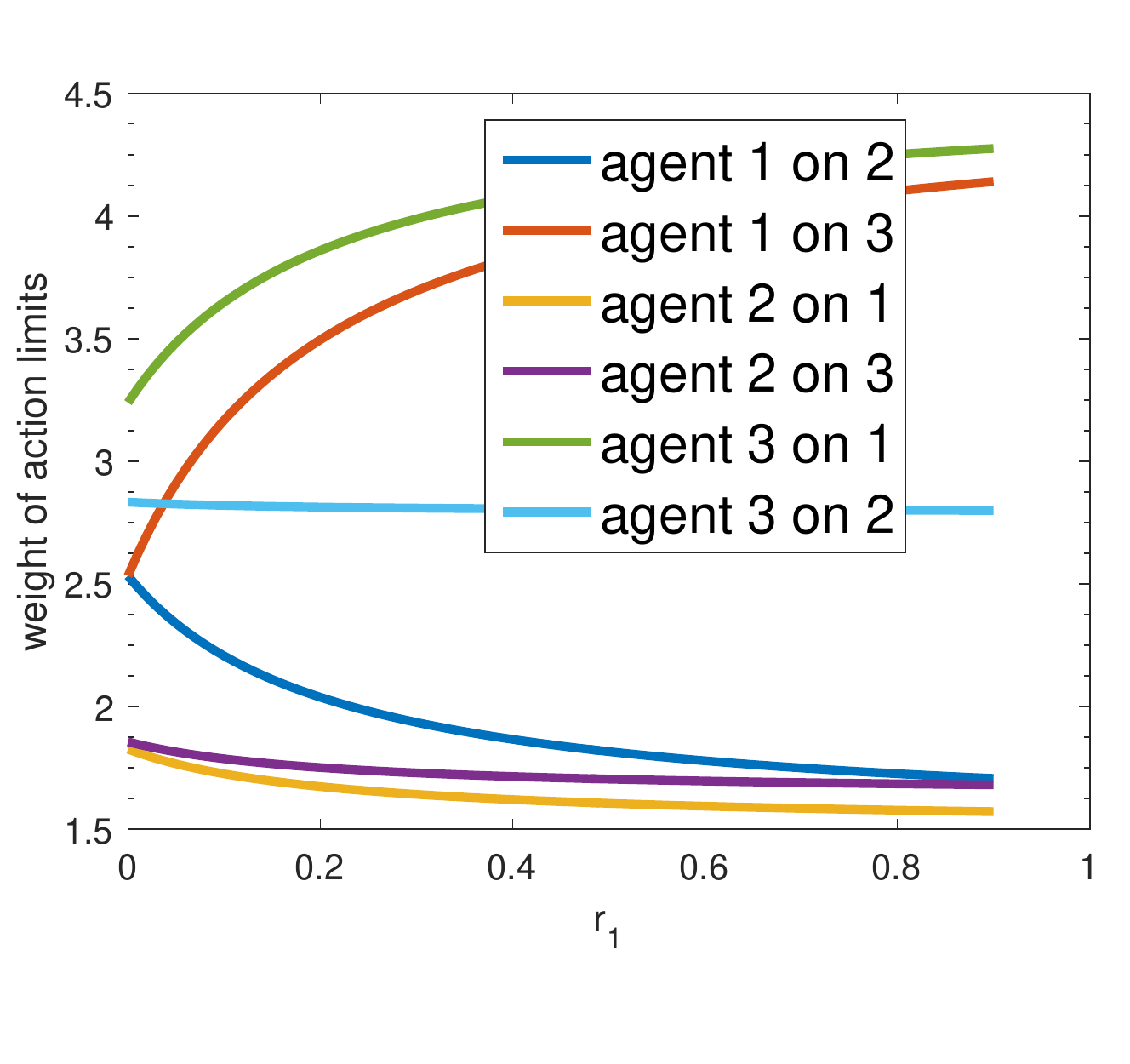}
}%
\protect\caption{Simulation results for the synchronous case, where the
limits of actions are plotted as functions of $r_1$, for
$r_2 = 0.1, r_3 = 0.6, r_1' = 0.1, r_2' = 0.4, r_3' = 0.1$,
$k_1 = 3, k_2 = 1, k_3 = 5$.
In the left graph, agent $1$ and $2$ are the only \name{fixed} agents,
while in the right one, $1$ is the only \name{floating} agent.
All the graphs exhibit monotonicity.
}
\label{fig:act_lim_r_1}
\end{figure}
%
%

%
\ifthenelse{\NOT \equal{\destin}{IJCAI16}}{
\begin{figure}[ht]
\centerline{%
\includegraphics[trim = 35mm 80mm 40mm 90mm, clip, width=0.24\textwidth, height=1.6in]{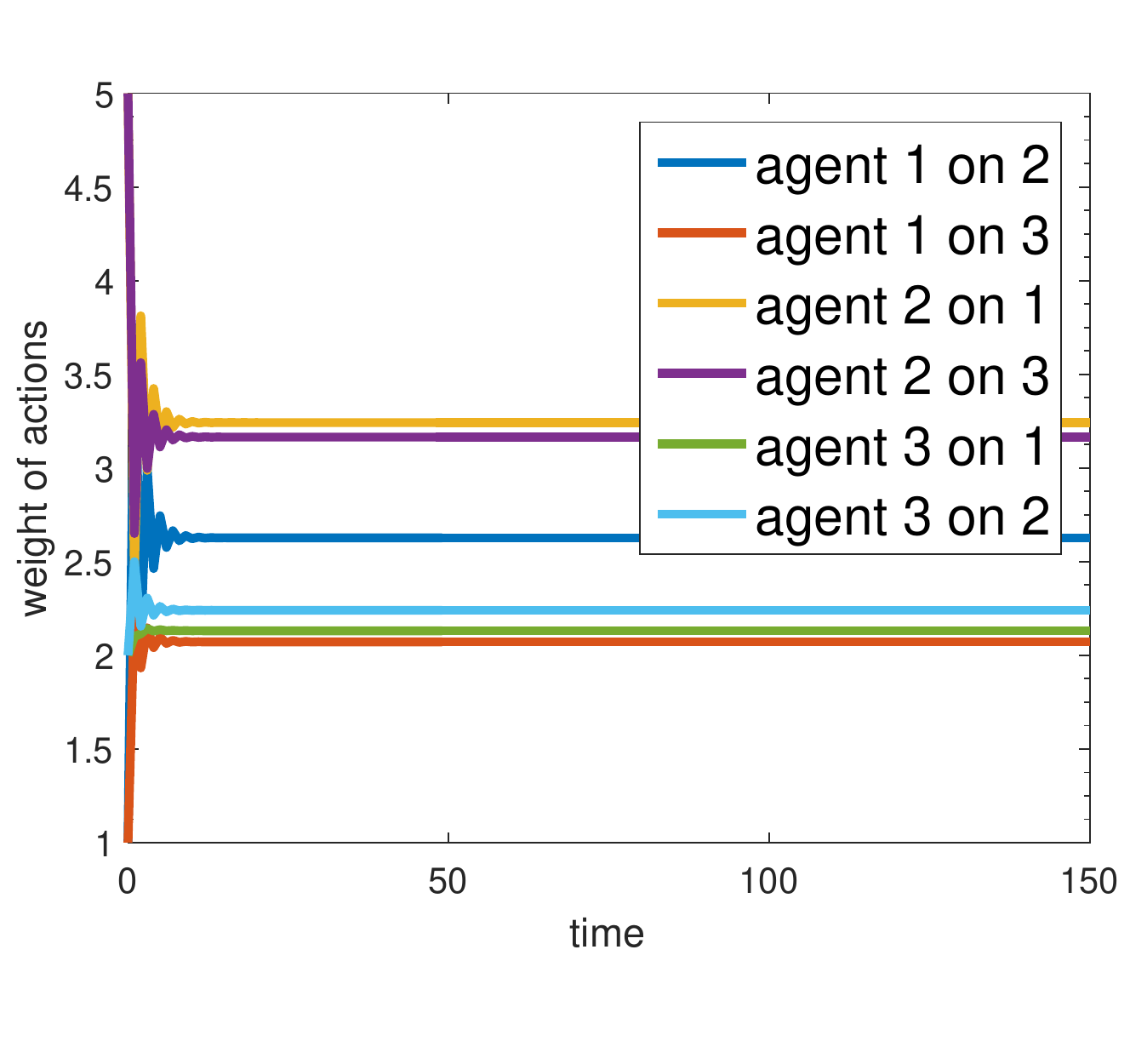}
}%
\protect\caption{Simulation results for the synchronous case, with three \name{fixed}
agents, for
$r_1 = 0.5, r_2 = 0.2, r_3 = 0.1, r_1' = 0.3, r_2' = 0.5, r_3' = 0.2$,
$k_1 = 1, k_2 = 5, k_3 = 2$.
We observe that $\lim_{t \to \infty}{x_{1,2}(t)} > \lim_{t \to \infty}{x_{3,2}(t)}$.
}
\label{fig:act_time_fi_fi_fi_A_1_5_2}
\end{figure}
}{
}%
\ifthenelse{\NOT \equal{\destin}{AAMAS16}}{
\section{Additional Notes}

When defining a reciprocating reaction, we used the last action of the
other agent to model the opinion about the other
agent.
We can explicitly define the \defined{opinion} of agent $i$ about another agent $j$
at time $t$, 
\ifthenelse{\equal{\content}{Two_agents}}{
$\opin_{i} \colon \reals^{t + 1} \to \reals$,
as
$\opin_{i}(t) \defas \imp_{j}(s_j(t))$,
}{
$\opin_{i, j} \colon \reals^{t + 1} \to \reals$,
as
$\opin_{i, j}(t) \defas \imp_{j, i}(s_j(t))$,
}
upon $i$.
Then, we obtain that
in the \name{fixed} reciprocation attitude%
\ifthenelse{\equal{\content}{Two_agents}}{
$\imp_i(t) \defas
\begin{cases}
(1 - r_i) \cdot k_i + r_i \cdot \opin_i(t - 1) & t > t_{i, 0}\\
k_i & t = t_{i, 0} = 0.
\end{cases}$
and in the \name{floating} reciprocation attitude
$\imp_i(t) =
\begin{cases}
(1 - r_i) \cdot \imp_i(s_i(t-1)) + r_i \cdot \opin_i(t - 1). 
 & t > t_{i, 0}\\
k_i & t = t_{i, 0} = 0.
\end{cases}$
}{
for $t > 0$, $\imp_{i,j}(t) \defas
(1 - r_i - r'_i) \cdot k_i + r_i \cdot \opin_{i, j}(t-1)
+ r'_i \cdot \frac{\got_i(t - 1)}{\abs{\Neighb(i)}}$.
and in the \name{floating} reciprocation attitude
for $t > 0$, $\imp_{i,j}(t) \defas
(1 - r_i - r'_i) \cdot \imp_{i, j}(s_i(t-1)) + r_i \cdot \opin_{i, j}(t-1) 
\\+ r'_i \cdot \frac{\got_i(t - 1)}{\abs{\Neighb(i)}}$.
}%

Naturally, a more general definition of opinion is possible.
To this end, we define the temporal distance in $T_i$, for an $i \in N$,
which designates how many times
agent $i$ acted between two given times in $T_i$. Formally, for
an $i \in N$ and two times
$t_{i, l}, t_{i, m} \in T_i$, we define $d_{T_i} \colon T_i^2 \to \realsP$ by
$d_{T_i}(t_{i, l}, t_{i, m}) \defas \abs{l - m}$. Now,
define the cumulative opinion of $i$ about $j$ at time $t$ to be%
\ifthenelse{\equal{\content}{Two_agents}}{
$\opin_{i}(t) \defas \sum_{t' \in T_j, t' \leq t}{\delta_i( d_{T_j}(t', s_j(t)) + 1) \cdot \imp_{j}(t')}$,
}{
$\opin_{i, j}(t) \defas \sum_{t' \in T_j, t' \leq t}{\delta_i( d_{T_j}(t', s_j(t)) + 1) \cdot \imp_{j, i}(t')}$,
}%
where $\delta_i(p) \colon \realsP \to \realsP$ is the discount function, expressing how much the
passed time influences the importance of an action.

Our definition of opinion as%
\ifthenelse{\equal{\content}{Two_agents}}{
$\opin_{i}({t}) =  \imp_{j}(s_j(t))$
}{
$\opin_{i, j}({t}) =  \imp_{j, i}(s_j(t))$
}%
is a particular case of this model, where the discount function
is $\delta_i(p) =
\begin{cases}
1 & p = 1,\\
0 & \text{otherwise}.
\end{cases}$

}{
}%
\section{Related Work}

In addition to the direct motivation for our model, presented in
Section~\ref{Sec:introd},
we were inspired by Trivers~\cite{Trivers1971} (a psychologist),
who describes a balance between an inner quality (immutable kindness)
and costs/benefits when determining an action. This idea of balancing
the inner and the outer appears also in our model.
\ifthenelse{\equal{\destin}{IJCAI16}}{
}{
Trivers also talks about a naturally selected complicated balance between altruistic
and cheating tendencies, which we model as kindness, which represents the inherent
inclination to contribute.
The balance between complying and not complying is mentioned
in the conclusion of~\cite{Ury2007}, motivating the convex combination between
own kindness or action and others' actions.
}%

The idea of humans behaving according to a convex combination resembles
another model, that of the altruistic extension, like%
~\cite{ChenElkindKoutsoupias2011,HoeferSkopalik2013,RahnSchafer2013},
and \chapt{iii.2} in~\cite{Ledyard1994}. 
In these papers, utility is often assumed being a convex combination,
while we consider a mechanism of an action being a convex combination.

\ifthenelse{\equal{\destin}{IJCAI16}}{
}{
Additional motivation stems from the bargaining and negotiation realm,
where
Pruitt~\cite{pruitt1981} mentions that in negotiation, cooperation often takes place
in the form of reciprocation and that personal traits influence the way of
cooperation, which corresponds in our model to the personal kindness and
reciprocation coefficients.
}%

\section{Conclusions and Future Work}


In order to facilitate behavioral decisions regarding reciprocation,
we need to predict what interaction a given setting will engender.
To this end, we model two reciprocation attitudes
where a reaction is a weighted combination of the action
of the other player, the total action of the neighborhood and either one's own kindness or one's own last action.%
\ifthenelse{\equal{\destin}{IJCAI16}}{
}{
This combination's weights are defined by the reciprocation coefficients.
}%
For a pairwise interaction,
we show that actions converge, find the exact limits,
%
and show that if you consider your
kindness while reciprocating (\name{fixed}), then, asymptotically, your
actions values get closer to your kindness than if you consider it only at the outset.
%
For a general network, we prove convergence and find the common limit
if all agents act synchronously and consider their last own action
(\name{floating}), besides at most one agent. Dealing with the case when multiple agents consider their
kindness (\name{fixed}) is mathematically hard, so we use simulations.%
\ifthenelse{\equal{\destin}{IJCAI16}}{
}{
We now substantiate these insights from
our results, beginning from the pairwise case.

For two agents with \name{fixed} reciprocation, (i.e.,
when a reaction partly depends on one's own kindness),
kinder agent's action are larger in the limit.
While interacting, 
each agent goes back and forth in her actions, monotonically narrowing to her limit.
Probably, this alternating may
make the process confusing for an outsider.

For two agents with \name{floating} reciprocation, (i.e.,
when a reaction partly depends on one's own last action),
both agents' actions converge to a common limit,
which vicinity to an agent's kindness is reversely proportional
to her reciprocation coefficient.
The commonality of the limit can intuitively result from an agent's
next action being a
combination of her last action with the other agent's last action,
which makes the new action closer to the other's action, this new action to be
taken into account in determining the next action.%
\ifthenelse{\NOT\equal{\destin}{IJCAI16}}{
This makes an agent aligned with the other agent.
}{
}%
%
\ifthenelse{\NOT\equal{\destin}{IJCAI16}}{
The behavior of the actions in the process depends on the sum of the
reciprocation coefficients. If the sum is at most $1$, then both actions converge
monotonically to the common limit, while otherwise, the action with
the larger weight becomes smaller at a time slot when both agents act
and the actions stay in the same relative order when only a single agent acts.
So, when agents are
not extremely cooperative, then the kinder agent acts stronger
all the time, but
if the agents strongly reciprocate to the other's behavior, then the agent
whose actions' weights are bigger switches
each time when both act. It is remarkable that when both agents act at all the
time slots, then for any given pair of
the parameters, the relative positions of the weights alternate at a given step if and only if the positions alternate
at all the steps.
}{
}%

For two agents,
when one agent acts according to the \name{fixed} reciprocation, and the other one
according to the \name{floating} one, both
actions converge to the kindness of the agent who employs \name{fixed} reciprocation.
This can be intuitively
explained as a result of one agent always considering her kindness in
determining the next action and thereby having a firm stance, while the other agent aligning himself.
%
\ifthenelse{\NOT\equal{\destin}{IJCAI16}}{
In the process, we know what happens only if the sum
of the reciprocation coefficients is at most $1$. In this case, they are both monotonic
from some time on,
and agent $2$'s action is always at least as large as $1$'s correspondent
action.
}{
}%
In~Example~\ref{ex:colleages} with two colleagues, the colleague who
ignores her inherent inclination and remembers only the last moves
will behave as the colleague who constantly considers her kindness.
Another conclusion is that if
the numerical parameters are set, then
the kinder agent employing \name{fixed} attitude
and the other one employing \name{floating} attitude is the best
for the total reciprocation.

When an agent may interact with any number of agents, we have proven convergence and shown that if
all agents employ \name{floating} reciprocation,
the limit is common. This limit is a weighted average of the kindness values,
the weight of an agent's kindness being her degree in the interaction graph divided
by the sum of her reciprocation coefficients. Intuitively, the agents
align to each other, and the more
connected and the less reciprocating an agent is, the more it influences
the common limit.
}%

In Example~\ref{ex:colleages} with the parameters from the end of
\sectnref{Sec:formal_model},
(all the agents
employ \name{floating} reciprocation),
\eqnsref{eq:all_float_lim} implies that all the actions approach~%
$25 / 52$ in the limit, meaning that all the colleagues support each other emotionally
a lot.

In addition to predicting the development of reciprocal interactions,
our results explain
why persistent agents have more influence on the interaction.
An expression of the converged behavior is that while growing up, people acquire their own style of reciprocating with
acquaintances~\cite{RobertsWaltonViechtbauer2006}.
In organizations, many styles are often very similar from
person to person, forming organizational cultures
~\cite{Hofstede1980}.

We saw in theory and we know from everyday life
that the reciprocation process may seem confusing, but the
exponential convergence promises the confusion to be short.
Actually, we can have a not so quick exponential convergence,
such as observed in the left graph in \figref{fig:act_time_fi_fl_fl_A_1_2_3_3_1_5},
but mostly, the process converges quickly.
Another important conclusion is that employing \name{floating} reciprocation
makes us achieve equality.
In the synchronous case, to achieve
a common limit
it is also enough for all the \name{fixed} agents to have the same kindness.
We also show that if all agents employ \name{floating} reciprocation
and act synchronously,
then the influence of an agent is proportional to her number of neighbors
and inversely proportional to her tendency to reciprocate, that is, the
stability.
We prove that in the synchronous case, the limit is either a linear
combination of the kindnesses of all the \name{fixed} agents or, if all
the agents are \name{floating}, a linear combination of the kindnesses of all the agents. Thus, an
agent's kindness influences nothing, or it is a linear factor, thereby
enabling a very eager agent to influence the limits arbitrarily,
by having the \name{fixed} attitude and the appropriate kindness.

As we see in examples, real situations may require more
complex modeling, motivating further research.%
For instance, modeling
interactions with a known finite time horizon would be
interesting.
Since people may change while reciprocating, modeling changes in the reciprocity coefficients
and/or reciprocation attitude is important.
In addition, groups of colleagues and nations get and lose people,
motivating modeling a dynamically changing set of reciprocating agents.
Even with the same set of agents, the interaction graph may change
as people move around.
\ifthenelse{\NOT \equal{\destin}{IJCAI16}}{
We have already presented some similar ideas from the negotiation
realm; therefore, we expect that bridging our work with negotiation can yield
many more insights.
}{
}%
We study interaction processes where agents
reciprocate with some given parameters, and show that maximizing $L$ would require extreme values of
reciprocation coefficients. To predict real situations better and to be able
to give constructive advice about what parameters and attitudes of the agents are
useful, we should define utility functions to the agents and consider
the game where agents choose their own parameters before the interaction
commences. This is hard, but people are able to change their behavior. %
\ifthenelse{\equal{\destin}{IJCAI16}}{
}{ The agents' strategic behavior may come at cost with respect to the
social welfare, so considering price of anarchy~\cite{KP99}
and stability~\cite{AnshelevichDasGuptaKleinbergTardosWexlerRoughgarden04}
of such a game is in order.
}%
Considering how to influence agents to change their behavior
is also relevant.
Though it seems extremely hard,
it would be nice to consider our model in the light of a
game theoretic model of an extensive form game,
such as~\cite{DufwenbergKirchsteiger2004}.
We used others' research, based on real data, as a basis for the model;
actually evaluating the model on relevant data, like the arms race
actions, may be enlightening.
An agent could have different kindness
values towards different agents, to represent her prejudgement. 
Another extension would be allowing the same action be perceived
differently by various agents. 
A system of agents who have both a \name{fixed} and a \name{floating}
component would be interesting to analyze.

\anote{Many more directions can be indicated, such as:
who is important to influence in the network,
evolutionary game theory with overtaking strategies, probabilistic reaction, etc.}

Analytical and simulations analysis of reciprocation process
allows estimating whether an interaction will be profitable to a given
agent and
lays the foundation for further modeling
and analysis of reciprocation, in order to anticipate and improve
the individual utilities and the social welfare.

\subsubsection*{Acknowledgments.}

This work has been supported by the project SHINE, the flagship project
of DIRECT (Delft Institute for Research on ICT at Delft University
of Technology).

\newpage

\bibliographystyle{abbrv}
\bibliography{library}




\end{document}